\documentclass{article}
\usepackage[sort&compress,square,sort,comma,numbers]{natbib}

\usepackage[inline]{enumitem}
\usepackage[normalem]{ulem}
\usepackage{algorithm,algorithmic,amsmath,amssymb,amsthm,tikz,url}
\usepackage{mathrsfs} 
\usetikzlibrary{arrows.meta,decorations.pathmorphing,positioning,backgrounds}

\newtheorem{theorem}{Theorem}

\newtheorem{proposition}[theorem]{Proposition}
\newtheorem{corollary}[theorem]{Corollary}
\newtheorem{lemma}[theorem]{Lemma}
\theoremstyle{definition}
\newtheorem{definition}[theorem]{Definition}
\theoremstyle{remark}
\newtheorem{example}[theorem]{Example}

\newcommand{\Graph}{\mathcal{G}}
\newcommand{\Tree}{\mathcal{T}}
\newcommand{\Chain}{\mathcal{C}}
\newcommand{\Nodes}{\mathbf{N}}
\newcommand{\ELabels}{\mathbf{E}}
\newcommand{\Alphabet}{\Sigma}
\newcommand{\Get}[2]{#2\langle#1\rangle}

\newcommand{\Label}{\ell}
\newcommand{\Relation}{\mathrm{R}}
\newcommand{\Edges}{\mathcal{E}}
\DeclareMathOperator{\depth}{depth}
\DeclareMathOperator{\distance}{distance}


\newcommand{\convop}{{{}^{-1}}}
\newcommand{\proop}[1][]{{\pi_{#1}}}
\newcommand{\coproop}[1][]{\overline{\pi}_{#1}}
\newcommand{\transop}{{{}^{+}}}
\newcommand{\transstarop}{{{}^{*}}}

\newcommand{\all}{\mathrm{all}}
\newcommand{\diversity}{\mathrm{di}}
\newcommand{\identity}{\mathrm{id}}
\newcommand{\compose}{\circ}
\newcommand{\union}{\cup}
\newcommand{\intersect}{\cap}
\newcommand{\difference}{-}
\newcommand{\converse}[1]{\smash{\left[{#1}\right]^{-1}}}
\newcommand{\transitive}[1]{\smash{\left[{#1}\right]^{+}}}
\newcommand{\transitivestar}[1]{\smash{\left[{#1}\right]^{*}}}
\newcommand{\project}[2]{\proop[#1]\mathord{\left[{#2}\right]}}
\newcommand{\coproject}[2]{\coproop[#1]\mathord{\left[{#2}\right]}}

\newcommand{\cond}{c}
\newcommand{\expr}{e}
\newcommand{\cexpr}[1]{\bullet(#1)}
\newcommand{\Apply}[2]{{#1}\mathord{\left\langle#2\right\rangle}}

\newcommand{\esize}[1]{\lvert#1\rvert}


\newcommand{\Conditions}{C}
\newcommand{\src}{\mathsf{s}}
\newcommand{\tgt}{\mathsf{t}}

\newcommand{\Automaton}{\mathcal{A}}
\newcommand{\States}{S}
\newcommand{\Initials}{I}
\newcommand{\Finals}{F}
\newcommand{\Transitions}{\delta}
\newcommand{\StateConditions}{\gamma}
\newcommand{\nIdentity}{\text{\sout{id}}}

\newcommand{\Lang}{\mathcal{N}}
\newcommand{\Fragment}{\mathfrak{F}}
\newcommand{\LeExpr}{\prec}
\newcommand{\LeqExpr}{\preceq}
\newcommand{\nLeqExpr}{\npreceq}

\renewcommand{\path}[1]{\mathrel{{#1}_{\textit{p}}}}
\newcommand{\bool}[1]{\mathrel{{#1}_{\textit{b}}}}
\newcommand{\BASE}[1]{\underline{#1}}

\DeclareMathOperator{\depthcond}{cdepth}
\newcommand{\notstate}{\rho}
\DeclareMathOperator{\weightcond}{cweight}

\newcommand{\condc}[1]{\operatorname{ccompl}(#1)}

\newcommand{\dcompl}[1]{\overline{#1}_{\mathord{\downarrow}}}



\newcommand{\GETS}{:=}
\newcommand{\VAR}[1]{\textit{#1}}

\newcommand{\PowerSet}[1]{\mathcal{P}(#1)}
\newcommand{\abs}[1]{\lvert#1\rvert}

\newcommand{\NatPlus}{\mathbb{N}^+}
\newcommand{\DOT}{\vphantom{o}\cdot}
\newcommand{\true}{\mathop{\texttt{true}}}
\newcommand{\false}{\mathop{\texttt{false}}}

\tikzset{
        >=Stealth,
        hasse_style/.style={
            xscale=1.3,
            yscale=1.7,
            baseline=(current bounding box.center),
            dot/.style={circle,scale=0.35,draw=black,fill=black}
        },
        tree_style/.style={
            baseline=(current bounding box.center),
            dot/.style={circle,scale=0.35,draw=black,fill=black}
        },
        automata_style/.style={
            baseline=(current bounding box.center),
            state/.append style={circle,draw,minimum size=0.725cm]},
            accepting/.append style={double},
            label/.append style={above,font=\small},
            idset/.append style={font=\footnotesize},
            conditions/.append style={font=\scriptsize},
            node distance=0.6cm
        }
}

\begin{document}

\title{Comparing Downward Fragments of the Relational Calculus with Transitive Closure on Trees\footnote{This is a revised and extended version of the conference paper `Relative Expressive Power of Downward Fragments of Navigational Query Languages on Trees and Chains', presented at the 15th International Symposium on Database Programming Languages, Pittsburgh, Pennsylvania, United States (DBPL 2015). Yuqing Wu carried out part of her work during a sabbatical visit to Hasselt University with a Senior Visiting Postdoctoral Fellowship of the Research Foundation Flanders (FWO).}}

\author{Jelle Hellings
            \footnote{Hasselt University, Martelarenlaan 42, Hasselt, Belgium.}
    \and Marc Gyssens\footnotemark[2]
    \and Yuqing Wu
            \footnote{Pomona College, 185 E 6th St., Claremont, CA, USA.}
    \and Dirk Van Gucht
            \footnote{Indiana University, Lindley Hall 215, 150 S.~Woodlawn Ave., Bloomington, IN, USA.}
    \and Jan Van den Bussche\footnotemark[2]
    \and Stijn Vansummeren
            \footnote{Universit\'e Libre de Bruxelles, Avenue Franklin Roosevelt 50, Brussels, Belgium.}
    \and George H. L. Fletcher
            \footnote{Eindhoven University of Technology, Den Dolech 2, Eindhoven, The Netherlands.}}

\date{}

\maketitle

\begin{abstract}
Motivated by the continuing interest in the tree data model, we study the expressive power of downward navigational query languages on trees and chains.  Basic navigational queries are built from the identity relation and edge relations using composition and union. We study the effects on relative expressiveness when we add transitive closure, projections, coprojections, intersection, and difference; this for boolean queries and path queries on labeled and unlabeled structures. In all cases, we present the complete Hasse diagram. In particular, we establish, for each query language fragment that we study on trees, whether it is closed under difference and intersection.

\end{abstract}

\section{Introduction}

Many relations between data can be described in a hierarchical way, including taxonomies such as the taxonomy of species studied by biologists, corporate hierarchies, and file and directory structures. A logical step is to represent these data using a tree-based data model. It is therefore not surprising that tree-based data models were among the first used in commercial database applications, the prime example being the hierarchical data model introduced in the 1960s~\cite{hierarchy}.  Since the 1970s, other data models, such as the relational data model~\cite{codd}, replaced the hierarchical data model almost completely. Interest in tree-based data models revived in the 1990s by the introduction of XML~\cite{xml}, which allowed for unstructured and semi-structured tree data, and, more recently, by JSON, as used by several NoSQL and relational database products~\cite{json}.

Observe that tree-based data models are special cases of graph-based data models. In practice, query languages for trees and graphs usually rely on navigating the structure to find the data of interest. Examples of the focus on navigation can be found in XPath~\cite{xpath,condxpath,xpath_tc,xpath_leashed}, SPARQL~\cite{rdf,sparql}, and the regular path queries (RPQs)~\cite{rpq}. The core navigational power of these query languages can be captured by fragments of the calculus of relations, popularized by Tarski, extended with transitive closure~\cite{tarski,givant}. In the form of the navigational query languages of Fletcher et al.~\cite{graph_icdt}, the relative expressive power of these fragments have been studied in full detail on graph-structured data~\cite{graph_amai,graph_navjr,graphjournal_tc}. Much less is known for the more restrictive tree data model, however. Notice in particular that the separation results on graphs of Fletcher et al.~do not necessarily also apply to trees. In addition, the expressiveness results for several XPath fragments~\cite{struct_xpath, condxpath,nav_xpath,xpath_tc,nav_xpath_calcalg,xpathalgebrajr,pospathtree} in the context of XML do not provide a complete picture of the relative expressive power of the navigational query languages we consider here.  As a first step towards a complete picture of the relative expressive power, we study the expressive power of downward fragments: these are navigational query languages that only allow downward navigation in the tree via parent-child relations. Downward navigation plays a big role in practical data retrieval from tree data. In the JSON data model, for example, most data retrieval is done by explicit top-down traversal of a data structure representation of the JSON data. Even in more declarative settings, such as within the PostgreSQL relational database system, the JSON query facilities primary aim at downward navigation.\footnote{For details on what PostgreSQL provides, we refer to \url{https://www.postgresql.org/docs/9.6/static/functions-json.html}. Observe that all basic arrow operators provided by PostgreSQL perform, in essence, downward navigation.} This focus on downward navigation is also found outside the setting of tree data. As an example, we mention nested relational database models that use downward navigation as an important tool to query the data (see, e.g.~\cite{colby}).

All downward fragments we consider in this paper can express queries by building binary relations from the edge relations and the identity relation ($\identity$), using composition ($\compose$) and union ($\union$). We study the effect on the expressive power of the presence of transitive closure ($\transop$); projections ($\proop$), which can be used to express conditions similar to the node-expressions in XPath~\cite{condxpath} and the branching operator in nested RPQs~\cite{rpq}; coprojections  ($\coproop$), which can be used to express negated conditions; intersection ($\intersect$); and difference  ($\difference$). In other words, we consider all query languages having at least the features $\identity$, $\compose$, and $\union$  of the downward navigational query language features we mentioned above. For these fragments, we study relative expressiveness for both path queries, which  evaluate to a set of node pairs, and boolean queries, which evaluate to true or false. We consider not only labeled trees, but also unlabeled trees and labeled and unlabeled chains, the reason being that most query languages are easier to analyze on these simpler structures and inexpressiveness results obtained on them can then be bootstrapped to the more general case. 

For all the cases we consider, we are able to present the complete Hasse diagram of relative expressiveness; these Hasse diagrams are summarized in Figure~\ref{fig:main_results}. In several cases, we are able to argue that pairs of downward fragments of the navigational query languages that are not equivalent in expressive power when used to query graphs, are already not equivalent in expressive power on the simplest of graphs: labeled or unlabeled chains. Hence, for these languages, we actually strengthen the results of Fletcher et al.~\cite{graph_icdt}.

In the cases where graphs and trees yield different expressiveness results, we are able to prove collapse results. In particular, we are able to establish, for each fragment of the navigational query languages that we study, whether it is closed under difference and intersection when applied on trees: adding intersection to a downward fragment of the navigational query languages never changes the expressive power, and adding difference only adds expressive power when $\proop$ is present and $\coproop$ is not present, in which case difference only adds the ability to express $\coproop$. To prove these closure results, we develop a novel technique based on finite automata~\cite{automata}, which we adapt to a setting with conditions. We use these condition automata to represent and manipulate navigational queries, with the goal to replace $\intersect$ and $\difference$ operations. We also use these condition automata to show that, in the boolean case, $\proop$ never adds expressive power when querying labeled chains. Finally, using homomorphism-based techniques, we show that, in the boolean case on unlabeled trees and unlabeled chains, only fragments with the non-monotone operator $\coproop$ can express queries that are not equivalent to queries of the form \textit{the height of the tree is at least $k$}.

Our study of the relative expressive power of the downward fragments of the navigational query languages on trees also has practical ramifications. If, for example, two language fragments are equivalent, then this leads to a choice in query language design. On the one hand, one can choose a smaller set of operators that, due to its simplicity, is easier to implement and optimize, even when dealing with big data in a distributed setting or when using specialized hardware. On the other hand, a bigger set of operators allows for easier query writing by the end users. Indeed, if one is only interested in boolean queries on unlabeled trees, then RPQs are much harder to evaluate than queries of the form \textit{the height of the tree is at least $k$}, although our results indicate that these query languages are, in this case, equivalent. Moreover, all our collapse results are constructive: we present ways to rewrite queries using operators such as $\intersect$ and $\difference$ into queries that do not rely on these operators. Hence, our results can be used as a starting point for automatic query rewriting and optimization techniques that, depending on the hardware, the data size, and the data type, choose an appropriate query evaluation approach.

This is a revised and extended version of Hellings et al.~\cite{dbpl}, to which we added full proofs of all the expressivity results. In addition, we generalize the boolean collapse to queries of the form \textit{the height of the tree is at least $k$} to also cover non-downward operations. 

\paragraph{Organization}
In Section~\ref{sec:prelim}, we introduce the basic notions and terminology used throughout this paper. In Section~\ref{sec:expressive_power}, we present our results on the relative expressive power of the downward navigational query languages, as well as some generalizations of these. The results and their generalizations are visualized in the Hasse diagrams of relative expressiveness shown in Figure~\ref{fig:main_results}. Observe that these diagrams include collapses involving diversity and converse, which are non-downward. In Section~\ref{sec:related}, we discuss related work. In Section~\ref{sec:conclusion}, we summarize our findings and propose directions for future work.

\begin{figure}[htb!]
    \centering
    \makebox[\textwidth][c]{
    \begin{tabular}{c|rrr}
                    &\multicolumn{2}{c}{\textbf{Boolean queries}} &\multicolumn{1}{c}{\textbf{Path queries}}\\
                    &\multicolumn{1}{c}{\textbf{Chains}} &\multicolumn{1}{c}{\textbf{Trees}}&\multicolumn{1}{c}{\textbf{Chains and Trees}}\\
             \hline
             \hline
    {\rotatebox[origin=c]{90}{\textbf{Labeled}}}
        &
                \begin{tikzpicture}[hasse_style]
                    \node[dot] (n1) at (0, 0) {};
                    \node[dot] (n2) at (1, 0.25) {} edge[<-] (n1);
                    \node[dot] (n5) at (0, 2) {} edge[<-] (n1);
                    \node[dot] (n6) at (1, 2.25) {} edge[<-] (n2) edge [<-] (n5);

                    \node (n1a) [left,align=right] at (n1) {$\Lang()$\\$\Lang(\BASE{\intersect})$\\$\Lang(\BASE{\difference})$\\$\Lang(\BASE{\proop})$\\$\Lang(\BASE{\proop, \intersect})$};
                    \node (n5a) [left,align=right] at (n5) {$\Lang(\BASE{\coproop})$\\$\Lang(\BASE{\coproop, \intersect})$\\$\Lang(\BASE{\coproop, \difference})$};
                    \node (n2a) [right,align=left] at (n2) {$\Lang(\BASE{\transop})$\\$\Lang(\BASE{\transop, \intersect})$\\$\Lang(\BASE{\transop, \difference})$\\$\Lang(\BASE{\transop, \proop})$\\$\Lang(\BASE{\transop, \proop, \intersect})$};
                    \node (n6a) [right,align=left] at (n6) {$\Lang(\BASE{\transop, \coproop})$\\$\Lang(\BASE{\transop, \coproop, \difference})$};
                \end{tikzpicture}
        &
                \begin{tikzpicture}[hasse_style]
                    \node[dot] (n1) at (0, 0) {};
                    \node[dot] (n2) at (1, 0.25) {} edge[<-] (n1);
                    \node[dot] (n3) at (0, 1) {} edge[<-] (n1);
                    \node[dot] (n4) at (1, 1.25) {} edge[<-] (n2) edge [<-] (n3);
                    \node[dot] (n5) at (0, 2) {} edge[<-] (n3);
                    \node[dot] (n6) at (1, 2.25) {} edge[<-] (n4) edge [<-] (n5);

                    \node (n1a) [left,align=right] at (n1) {$\Lang()$\\$\Lang(\BASE{\intersect})$\\$\Lang(\BASE{\difference})$};
                    \node (n3a) [left,align=right] at (n3) {$\Lang(\BASE{\proop})$\\$\Lang(\BASE{\proop, \intersect})$};
                    \node (n5a) [left,align=right] at (n5) {$\Lang(\BASE{\coproop})$\\$\Lang(\BASE{\coproop, \intersect})$\\$\Lang(\BASE{\coproop, \difference})$};
                    \node (n2a) [right,align=left] at (n2) {$\Lang(\BASE{\transop})$\\$\Lang(\BASE{\transop, \intersect})$\\$\Lang(\BASE{\transop, \difference})$};
                    \node (n4a) [right,align=left] at (n4) {$\Lang(\BASE{\transop, \proop})$\\$\Lang(\BASE{\transop, \proop, \intersect})$};
                    \node (n6a) [right,align=left] at (n6) {$\Lang(\BASE{\transop, \coproop})$\\$\Lang(\BASE{\transop, \coproop, \difference})$};
                \end{tikzpicture}
        &
                \begin{tikzpicture}[hasse_style]
                    \node[dot] (n1) at (0, 0) {};
                    \node[dot] (n2) at (1, 0.25) {} edge[<-] (n1);
                    \node[dot] (n3) at (0, 1) {} edge[<-] (n1);
                    \node[dot] (n4) at (1, 1.25) {} edge[<-] (n2) edge [<-] (n3);
                    \node[dot] (n5) at (0, 2) {} edge[<-] (n3);
                    \node[dot] (n6) at (1, 2.25) {} edge[<-] (n4) edge [<-] (n5);

                    \node (n1a) [left,align=right] at (n1) {$\Lang()$\\$\Lang(\BASE{\intersect})$\\$\Lang(\BASE{\difference})$};
                    \node (n3a) [left,align=right] at (n3) {$\Lang(\BASE{\proop})$\\$\Lang(\BASE{\proop, \intersect})$};
                    \node (n5a) [left,align=right] at (n5) {$\Lang(\BASE{\coproop})$\\$\Lang(\BASE{\coproop, \intersect})$\\$\Lang(\BASE{\coproop, \difference})$};
                    \node (n2a) [right,align=left] at (n2) {$\Lang(\BASE{\transop})$\\$\Lang(\BASE{\transop, \intersect})$\\$\Lang(\BASE{\transop, \difference})$};
                    \node (n4a) [right,align=left] at (n4) {$\Lang(\BASE{\transop, \proop})$\\$\Lang(\BASE{\transop, \proop, \intersect})$};
                    \node (n6a) [right,align=left] at (n6) {$\Lang(\BASE{\transop, \coproop})$\\$\Lang(\BASE{\transop, \coproop, \difference})$};
                \end{tikzpicture}
        \\
    \hline
    {\rotatebox[origin=c]{90}{\textbf{Unlabeled}}}
        &
                \begin{tikzpicture}[hasse_style]
                    \node[dot] (n1) at (0, 0) {};
                    \node[dot] (n2) at (0, 2) {} edge[<-] (n1);
                    \node[dot] (n3) at (1, 2.25) {} edge[<-] (n2);

                    \node (n1a) [below,align=center] at (n1) {$\Lang(\BASE{\Fragment}), \Fragment \subseteq \{\diversity, \convop,\transop,\proop,\intersect\}$\\$\Lang(\BASE{\Fragment}), \Fragment \subseteq \{\transop,\intersect,\difference\}$};
                    \node (n2a) [left,align=right] at (n2) {$\Lang(\BASE{\coproop})$\\$\Lang(\BASE{\coproop, \intersect})$\\$\Lang(\BASE{\coproop, \difference})$};
                    \node (n3a) [right,align=left] at (n3) {$\Lang(\BASE{\transop, \coproop})$\\$\Lang(\BASE{\transop, \coproop, \difference})$};
                \end{tikzpicture}
        &
                \begin{tikzpicture}[hasse_style]
                    \node[dot] (n1) at (0, 0) {};
                    \node[dot] (n2) at (0, 2) {} edge[<-] (n1);
                    \node[dot] (n3) at (1, 2.25) {} edge[<-] (n2);

                    \node (n1a) [below,align=center] at (n1) {$\Lang(\BASE{\Fragment}), \Fragment \subseteq \{\convop,\transop,\proop,\intersect\}$\\$\Lang(\BASE{\Fragment}), \Fragment \subseteq \{\transop,\intersect,\difference\}$};
                    \node (n2a) [left,align=right] at (n2) {$\Lang(\BASE{\coproop})$\\$\Lang(\BASE{\coproop, \intersect})$\\$\Lang(\BASE{\coproop, \difference})$};
                    \node (n3a) [right,align=left] at (n3) {$\Lang(\BASE{\transop, \coproop})$\\$\Lang(\BASE{\transop, \coproop, \difference})$};
                \end{tikzpicture}
        &
                \begin{tikzpicture}[hasse_style]
                    \node[dot] (n1) at (0, 0) {};
                    \node[dot] (n2) at (1, 0.25) {} edge[<-] (n1);
                    \node[dot] (n3) at (0, 1) {} edge[<-] (n1);
                    \node[dot] (n4) at (1, 1.25) {} edge[<-] (n2) edge [<-] (n3);
                    \node[dot] (n5) at (0, 2) {} edge[<-] (n3);
                    \node[dot] (n6) at (1, 2.25) {} edge[<-] (n4) edge [<-] (n5);

                    \node (n1a) [left,align=right] at (n1) {$\Lang()$\\$\Lang(\BASE{\intersect})$\\$\Lang(\BASE{\difference})$};
                    \node (n3a) [left,align=right] at (n3) {$\Lang(\BASE{\proop})$\\$\Lang(\BASE{\proop, \intersect})$};
                    \node (n5a) [left,align=right] at (n5) {$\Lang(\BASE{\coproop})$\\$\Lang(\BASE{\coproop, \intersect})$\\$\Lang(\BASE{\coproop, \difference})$};
                    \node (n2a) [right,align=left] at (n2) {$\Lang(\BASE{\transop})$\\$\Lang(\BASE{\transop, \intersect})$\\$\Lang(\BASE{\transop, \difference})$};
                    \node (n4a) [right,align=left] at (n4) {$\Lang(\BASE{\transop, \proop})$\\$\Lang(\BASE{\transop, \proop, \intersect})$};
                    \node (n6a) [right,align=left] at (n6) {$\Lang(\BASE{\transop, \coproop})$\\$\Lang(\BASE{\transop, \coproop, \difference})$};
                \end{tikzpicture}
        \\
    \end{tabular}}
    \caption[The full Hasse diagrams describing the relations between the expressive power of the various fragments of $\Lang(\transop, \coproop, \proop, \difference, \intersect)$.]{The full Hasse diagrams describing the relations between the expressive power of the various fragments of $\Lang(\transop, \coproop, \proop, \difference, \intersect)$. An edge $A$\tikz[baseline=-0.5ex]{\path (0,0) edge[->] (0.5,0);}$B$ indicates $A \LeqExpr B$ and $B\nLeqExpr A$. For boolean queries, we have included the cases for the non-downward operations diversity ($\diversity$) and converse ($\convop$) that follow from the homomorphism results. Notice, for path queries, adding $\diversity$ or $\convop$ to a downward fragment always adds expressive power.}\label{fig:main_results}
\end{figure}

\section{Preliminaries}\label{sec:prelim}
A \emph{graph} is a triple $\Graph = (\Nodes, \Alphabet, \ELabels)$, with $\Nodes$ a finite set of nodes, $\Alphabet$ a finite set of edge labels, and $\ELabels : \Alphabet \rightarrow 2^{\Nodes \times \Nodes}$ a function mapping edge labels to edge relations. A graph is \emph{unlabeled} if $\abs{\Alphabet} = 1$. We use $\Edges$ to refer to the union of all edge relations, and, when the graph is unlabeled, to the single edge relation.

A \emph{tree} $\Tree = (\Nodes, \Alphabet, \ELabels)$ is an acyclic graph in which exactly one node, the \emph{root}, has no incoming edges, and all other nodes have exactly one incoming edge. In an edge $(m, n)$, node $m$ is the \emph{parent} of node $n$ and node $n$ is a \emph{child} of node $m$. A \emph{chain} is a tree in which all nodes have at most one child.  A \emph{path} in a graph $\Graph = (\Nodes, \Alphabet, \ELabels)$ is a sequence $n_1\Label_1n_2\dots \Label_{i-1}n_i$ with $n_1, \dots, n_i \in \Nodes$, $\Label_1, \dots, \Label_{i-1} \in \Alphabet$, and, for all $1 \leq j < i$, $(n_j, n_{j+1}) \in \Get{\Label_j}{\ELabels}$.

\begin{definition}\label{def:nav_grammar}
The \emph{navigational expressions} over graphs are defined by the grammar \begin{multline*}\expr := \emptyset \mid \identity \mid \diversity \mid \Label \text{ (for $\Label$ an edge label)} \mid \converse{\expr} \mid{}\\ \transitive{\expr} \mid \project{1}{\expr} \mid \project{2}{\expr} \mid \coproject{1}{\expr} \mid \coproject{2}{\expr} \mid \expr \compose \expr \mid \expr \union \expr \mid \expr \intersect \expr \mid \expr \difference \expr.\end{multline*}
\end{definition}
We also use the shorthand notations $\all = \identity \union \diversity$, $\transitivestar{\expr} = \identity \union \transitive{\expr}$, \[
\Edges = \bigcup_{\Label \in \Alphabet} \Label,\text{ and }\expr^k = \begin{cases}
    \identity &\text{if $k = 0$};\\
    \expr \compose \expr^{k-1} &\text{if $k > 0$}.\end{cases}\]
We generalize the usage of $\expr^k$, with $1 \leq k$, to arbitrary binary relations $\Relation$, such that $\Relation^1 = \Relation$ and, for all $1 < i$, $\Relation^i = \Relation \compose \Relation^{i-1}$.

We define the size of a navigational expression $\expr$, denoted by $\esize{\expr}$, as follows
\[\esize{\expr} = \begin{cases}
0                                          & \text{if $\expr \in \{ \emptyset, \identity, \diversity \}$};\\
0                                          & \text{if $\expr = \Label$, with $\Label$ an edge label};\\
\esize{\expr'} + 1                         & \text{if $\expr \in \{ \converse{\expr'}, \transitive{\expr}, \project{1}{\expr'}, \project{2}{\expr'},  \coproject{1}{\expr'}, \coproject{2}{\expr'}  \} $};\\
\esize{\expr_1} + \esize{\expr_2} + 1 & \text{if $\expr \in \{ \expr_1 \compose \expr_2, \expr_1 \union \expr_2, \expr_1 \intersect \expr_2,  \expr_1 \difference \expr_2 \}$}.
\end{cases}\]

The basic language we study, denoted by $\Lang()$, is the language that allows the operators $\emptyset$, $\identity$, $\Label$ (for $\Label$ an edge label), $\compose$, and $\union$. If $\Fragment \subseteq \{ \diversity,  \convop,\allowbreak \transop, \proop[1], \proop[2], \coproop[1], \coproop[2], \intersect, \difference \}$, then $\Lang(\Fragment)$ denotes the language that allows all basic operators and, additionally, the operators in $\Fragment$. We usually only consider fragments without $\proop[1]$ and $\proop[2]$ or with both included, and we simply write $\proop$. Likewise, we only consider fragments without $\coproop[1]$ and $\coproop[2]$ or with both included, and we simply write $\coproop$.

\begin{definition}
Let $\Graph = (\Nodes, \Alphabet, \ELabels)$ be a graph and let $\expr$ be a navigational expression. We write $\Apply{\expr}{\Graph}$ to denote the \emph{evaluation} of expression $\expr$ on graph $\Graph$, and the semantics of evaluation is defined as follows:
\begin{align*}
\Apply{\emptyset}{\Graph}           &= \emptyset;\\\displaybreak[0]
\Apply{\identity}{\Graph}           &= \{ (m, m) \mid m \in \Nodes \};\\\displaybreak[0]
\Apply{\diversity}{\Graph}           &= \{ (m, n) \mid m, n \in \Nodes \land m \neq n \};\\\displaybreak[0]
\Apply{\Label}{\Graph}                   &= \Get{\Label}{\ELabels};\\\displaybreak[0]
\Apply{\converse{\expr}}{\Graph}              &= \converse{\Apply{\expr}{\Graph}};\\\displaybreak[0]
\Apply{\transitive{\expr}}{\Graph}              &= \transitive{\Apply{\expr}{\Graph}};\\\displaybreak[0]
\Apply{\project{1}{\expr}}{\Graph}               &= \{ (m, m) \mid \exists n\ (m, n) \in \Apply{\expr}{\Graph} \};\\\displaybreak[0]
\Apply{\project{2}{\expr}}{\Graph}               &= \{ (m, m) \mid \exists n\ (n, m) \in \Apply{\expr}{\Graph} \};\\\displaybreak[0]
\Apply{\coproject{1}{\expr}}{\Graph}             &= \{ (m, m) \mid (m \in \Nodes) \land (\lnot\exists n\ (m, n) \in \Apply{\expr}{\Graph}) \};\\\displaybreak[0]
\Apply{\coproject{2}{\expr}}{\Graph}             &= \{ (m, m) \mid (m \in \Nodes) \land (\lnot\exists n\ (n, m) \in \Apply{\expr}{\Graph}) \};\\\displaybreak[0]
\Apply{\expr_1 \compose \expr_2}{\Graph}    &= \Apply{\expr_1}{\Graph} \compose \Apply{\expr_2}{\Graph};\\\displaybreak[0]
\Apply{\expr_1 \union \expr_2}{\Graph}      &= \Apply{\expr_1}{\Graph} \union \Apply{\expr_2}{\Graph};\\\displaybreak[0]
\Apply{\expr_1 \intersect \expr_2}{\Graph}  &= \Apply{\expr_1}{\Graph} \intersect \Apply{\expr_2}{\Graph};\\\displaybreak[0]
\Apply{\expr_1 \difference \expr_2}{\Graph} &= \Apply{\expr_1}{\Graph} \difference \Apply{\expr_2}{\Graph}.
\end{align*}
In the above, $\converse{\Relation}$, for a binary relation $\Relation \subseteq \Nodes \times \Nodes$, is defined by $\converse{\Relation} = \{ (m, n) \mid (n, m) \in \Relation \}$, $\transitive{\Relation}$, for a binary relation $\Relation \subseteq \Nodes \times \Nodes$, is defined by $\transitive{\Relation} = \bigcup_{1 \leq k} \Relation^k$, and $\Relation_1 \compose \Relation_2$, for binary relations $\Relation_1, \Relation_2 \subseteq \Nodes \times \Nodes$, is defined by $\Relation_1 \compose \Relation_2 = \{ (m, n) \mid \exists z\ ((m, z) \in \Relation_1 \land (z, n) \in \Relation_2) \}$.
\end{definition}

The following example illustrates the usage of navigational expressions:

\begin{example}
Consider the class-structure of a program described by relations \textit{subclass} and \textit{method}, as visualized by the tree in Figure~\ref{fig:example_class}. 
In this setting, the expression $\transitive{\textit{subclass\/}}$ returns the relation between classes and their descendant classes, the expression $\coproject{1}{\textit{method\/}}$ returns all classes that do not define their own methods, and the expression $\project{1}{\textit{method\/}} \difference \project{1}{\transitive{\textit{subclass\/}}\compose\textit{method\/}}$ returns all classes that define methods, while having no descendants that also define methods.
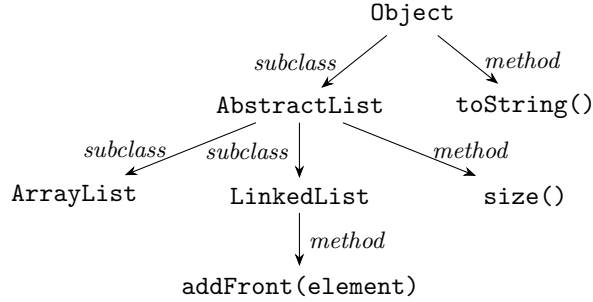
\begin{figure}[ht!]
    \centering
    \begin{tikzpicture}[tree_style,yscale=1.2]
        \node (o) at (4.5, 3) {\texttt{Object}};
            \node (ots) at (6, 2) {\texttt{toString()}} edge[<-] node[right=0.25em] {\small\textit{method}} (o);

        \node (al) at (3, 2) {\texttt{AbstractList}} edge[<-] node[left=0.25em] {\small\textit{subclass}} (o);
            \node (alsize) at (6, 1) {\texttt{size()}} edge[<-] node[right=0.5em] {\small\textit{method}} (al);

        \node (arl) at (0, 1) {\texttt{ArrayList}} edge[<-] node[left=0.5em] {\small\textit{subclass}} (al);
        \node (ll) at (3, 1) {\texttt{LinkedList}} edge[<-] node[left] {\small\textit{subclass}} (al);
            \node  (llaf) at (3, 0) {\texttt{addFront(element)}} edge[<-] node[right] {\small\textit{method}} (ll);

    \end{tikzpicture}
    \caption{The hierarchical relations within typical list-classes in a Java-like object-oriented programming language.}\label{fig:example_class}
\end{figure}
\end{example}

We say that queries $\expr_1$ and $\expr_2$ are \emph{path-equivalent} if, for all graphs $\Graph$, we have $\Apply{\expr_1}{\Graph} = \Apply{\expr_2}{\Graph}$, and \emph{boolean-equivalent} if, for all graphs $\Graph$, we have $\Apply{\expr_1}{\Graph} \neq \emptyset$ if and only if $\Apply{\expr_2}{\Graph} \neq \emptyset$. If $\Lang_1$ and $\Lang_2$ are query languages, then $\Lang_2$ \emph{path-subsumes} $\Lang_1$, denoted by $\Lang_1 \path\LeqExpr \Lang_2$, if every query in $\Lang_1$ is path-equivalent to a query in $\Lang_2$. Likewise, $\Lang_2$ \emph{boolean-subsumes} $\Lang_1$, denoted by $\Lang_1 \bool\LeqExpr \Lang_2$, if every query in $\Lang_1$ is boolean-equivalent to a query in $\Lang_2$. We write $\Lang_1 \path\LeExpr \Lang_2$ if $\Lang_1 \path\LeqExpr \Lang_2$ and $\Lang_2 \path\nLeqExpr \Lang_1$, and we write $\Lang_1 \bool\LeExpr \Lang_2$ if $\Lang_1 \bool\LeqExpr \Lang_2$ and $\Lang_2 \bool\nLeqExpr \Lang_1$.

Several operators can be expressed in terms of the other operators~\cite{graph_navjr}:
\begin{align*}
\project{1}{\expr}&\equiv \coproject{1}{\coproject{1}{\expr}} \equiv \coproject{2}{\coproject{1}{\expr}} \equiv \expr \compose \converse{\expr} \intersect \identity \equiv \expr \compose \all \intersect \identity \equiv \project{2}{\converse{\expr}};\\
\project{2}{\expr}&\equiv \coproject{1}{\coproject{2}{\expr}} \equiv \coproject{2}{\coproject{2}{\expr}} \equiv  \converse{\expr} \compose \expr \intersect \identity \equiv \all \compose \expr \intersect \identity \equiv \project{1}{\converse{\expr}};\\
\coproject{1}{\expr}&\equiv\identity \difference \project{1}{\expr} \equiv \coproject{2}{\converse{\expr}};\\
\coproject{2}{\expr}&\equiv\identity \difference \project{2}{\expr} \equiv \coproject{1}{\converse{\expr}};\\
\expr_1 \intersect \expr_2&\equiv \expr_1 \difference (\expr_1 \difference \expr_2).
\end{align*}
Let $\Fragment \subseteq \{ \diversity, \convop, \transop, \proop[1], \proop[2], \coproop[1], \coproop[2], \intersect, \difference \}$. We define $\BASE{\Fragment}$ (underlined $\Fragment$) to be the superset of $\Fragment$ obtained by adding all operators that can be expressed indirectly in $\Lang(\Fragment)$ by using the  above identities. For example, $\{ \BASE{\convop, \difference} \} = \{ \convop, \proop[1], \proop[2], \coproop[1], \coproop[2], \intersect, \difference \}$.

We also observe the following straightforward result:
\begin{lemma}\label{lem:remove_emptyset}
Let $\Fragment \subseteq \{ \diversity, \convop, \transop, \proop[1], \proop[2], \coproop[1], \coproop[2], \intersect, \difference \}$ and let $\expr$ be a navigational expression in $\Lang(\Fragment)$. If there exists a graph $\Graph$ such that $\Apply{\expr}{\Graph} \neq \emptyset$, then there exists a path-equivalent navigational expression in $\Lang(\Fragment)$ that does not utilize the operator $\emptyset$.
\end{lemma}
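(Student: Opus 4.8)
The plan is to prove, by structural induction on navigational expressions, the following reformulation, which is more convenient to induct on: \emph{every} navigational expression $f$ in $\Lang(\Fragment)$ is path-equivalent either to the expression $\emptyset$ or to some navigational expression in $\Lang(\Fragment)$ that does not use $\emptyset$. Granting this, the lemma follows at once: since $\Apply{\expr}{\Graph} \neq \emptyset$ for some graph $\Graph$, the expression $\expr$ is not path-equivalent to $\emptyset$, so by the reformulation it is path-equivalent to an $\emptyset$-free expression in $\Lang(\Fragment)$.

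For the induction, the base cases are immediate: $\identity$, $\diversity$, and $\Label$ (an edge label) are themselves $\emptyset$-free, and $\emptyset$ is path-equivalent to $\emptyset$. For the inductive step I fix the outermost operator of $f$ and, applying the induction hypothesis to each immediate subexpression, split according to whether that subexpression is path-equivalent to $\emptyset$ or to an $\emptyset$-free expression. The whole argument then rests on the elementary semantic identities
\[
\converse{\emptyset} \equiv \transitive{\emptyset} \equiv \project{i}{\emptyset} \equiv \emptyset \compose g \equiv g \compose \emptyset \equiv \emptyset \intersect g \equiv g \intersect \emptyset \equiv \emptyset \difference g \equiv \emptyset,
\]
together with $\emptyset \union g \equiv g \union \emptyset \equiv g$, $g \difference \emptyset \equiv g$, and $\coproject{1}{\emptyset} \equiv \coproject{2}{\emptyset} \equiv \identity$. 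Concretely: whenever one of the subexpressions is path-equivalent to $\emptyset$, the expressions $\converse{\cdot}$, $\transitive{\cdot}$, $\project{i}{\cdot}$ applied to it, as well as any composition or intersection involving it and any difference having it on the left, are again path-equivalent to $\emptyset$; a union in which one side is path-equivalent to $\emptyset$, and a difference having $\emptyset$ on the right, are path-equivalent to the other side, for which the induction hypothesis supplies an $\emptyset$-free witness in $\Lang(\Fragment)$; and $\coproject{i}{g}$ with $g$ path-equivalent to $\emptyset$ is path-equivalent to $\identity$, which is $\emptyset$-free and in $\Lang(\Fragment)$. In every remaining case all relevant subexpressions have, by induction, $\emptyset$-free witnesses in $\Lang(\Fragment)$, and re-applying the outermost operator to those witnesses yields an $\emptyset$-free expression that is still in $\Lang(\Fragment)$, since that operator already occurs in $\Fragment$ or is one of the basic operators $\identity$, $\compose$, $\union$.

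The only point that is not completely mechanical is the coprojection case: $\coproop$ is the unique operator sending a semantically empty argument to a non-empty relation (namely $\identity$), which is exactly why eliminating $\emptyset$ may require introducing $\identity$, and why the statement must be phrased for all of $\Lang(\Fragment)$. Two remarks keep the bookkeeping light. First, it is harmless that an $\emptyset$-free expression produced this way — say $g' \compose h'$ for $\emptyset$-free witnesses $g',h'$ — may itself be semantically empty: the reformulation only asks for \emph{some} $\emptyset$-free path-equivalent expression, not for one that is non-empty on some graph. Second, the reformulation is a pure existence statement, so there is no need to decide whether a given subexpression is path-equivalent to $\emptyset$; the dichotomy coming from the induction hypothesis is used only to organise the case analysis. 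What remains is routine verification of the displayed identities, which I omit.
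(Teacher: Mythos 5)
Your proof is correct. The paper itself offers no proof of this lemma, presenting it as a ``straightforward result,'' and your structural induction on the strengthened claim (every expression is path-equivalent to $\emptyset$ or to an $\emptyset$-free expression of $\Lang(\Fragment)$) is precisely the standard argument that the authors are implicitly relying on; you also correctly isolate the one non-mechanical point, namely that $\coproject{i}{\emptyset}\equiv\identity$, so eliminating $\emptyset$ under a coprojection requires substituting $\identity$ rather than propagating emptiness. Nothing further is needed.
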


Hence, we may ignore the operator $\emptyset$ unless we need to express the query that returns the empty set on every input graph. We conclude these preliminaries with some established results that will be used throughout this work:
\begin{proposition}[Fletcher et al.~\cite{graph_navjr}]\label{prop:pathbool_carry}
Let $\Lang_1$ and $\Lang_2$ be query languages.
\begin{enumerate}
\item If $\Lang _1 \path\LeqExpr \Lang_2$, then $\Lang_1 \bool\LeqExpr \Lang_2$;
\item If $\Lang_1 \bool\nLeqExpr \Lang_2$, then $\Lang_1 \path\nLeqExpr \Lang_2$.
\end{enumerate}
\end{proposition}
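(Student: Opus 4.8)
The plan is to reduce both statements to the elementary observation that path-equivalence is a refinement of boolean-equivalence. Concretely: if two navigational expressions $\expr_1$ and $\expr_2$ satisfy $\Apply{\expr_1}{\Graph} = \Apply{\expr_2}{\Graph}$ for every graph $\Graph$, then in particular $\Apply{\expr_1}{\Graph} \neq \emptyset$ holds exactly when $\Apply{\expr_2}{\Graph} \neq \emptyset$ holds, so $\expr_1$ and $\expr_2$ are boolean-equivalent. This is immediate from the definitions of path- and boolean-equivalence and requires no computation.

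For part~(1), I would reason as follows. Assume $\Lang_1 \path\LeqExpr \Lang_2$ and let $\expr_1$ be an arbitrary query in $\Lang_1$. By the assumption there is a query $\expr_2 \in \Lang_2$ that is path-equivalent to $\expr_1$, and by the observation above $\expr_1$ and $\expr_2$ are then also boolean-equivalent. Since $\expr_1$ was arbitrary, every query of $\Lang_1$ is boolean-equivalent to some query of $\Lang_2$, which is precisely $\Lang_1 \bool\LeqExpr \Lang_2$.

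Part~(2) is then simply the contrapositive of part~(1): were $\Lang_1 \path\LeqExpr \Lang_2$ to hold, part~(1) would give $\Lang_1 \bool\LeqExpr \Lang_2$, contradicting the hypothesis $\Lang_1 \bool\nLeqExpr \Lang_2$; hence $\Lang_1 \path\nLeqExpr \Lang_2$.

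There is no genuine obstacle here. The only point worth stating explicitly is that boolean-equivalence is a strictly coarser relation than path-equivalence, which is exactly what forces the implication to run in this direction (from path to boolean, and from non-boolean-subsumption to non-path-subsumption) and not the reverse; everything else is just unwinding the definitions.
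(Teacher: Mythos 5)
Your proof is correct: the paper states this proposition without proof, citing Fletcher et al., and your argument — that path-equivalence of two expressions immediately implies their boolean-equivalence, so $\path\LeqExpr$ transfers to $\bool\LeqExpr$ query-by-query, with part~(2) as the contrapositive — is exactly the standard definition-unwinding one would supply. Nothing is missing.
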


Besides carrying over results between boolean and path queries, we can also carry over results between types of graphs.
\begin{proposition}\label{prop:class_carry}
Let $\mathord{\leq} \in \{ \bool\LeqExpr, \path\LeqExpr \}$, let $\Lang_1$ and $\Lang_2$ be query languages, and let $\mathcal{C}_1$ and $\mathcal{C}_2$ be classes of graphs such that $\mathcal{C}_1$ is a subclass of $\mathcal{C}_2$.
\begin{enumerate}
    \item If $\Lang_1 \leq \Lang_2$ on $\mathcal{C}_2$, then $\Lang_1 \leq \Lang_2$ on $\mathcal{C}_1$;
    \item If $\Lang_1 \nleq \Lang_2$ on $\mathcal{C}_1$, then $\Lang_1 \nleq \Lang_2$ on $\mathcal{C}_2$.
\end{enumerate}
\begin{figure}[htb!]
    \centering
    {\it
        \begin{tikzpicture}[xscale=3]
            \node (n1) at (0, 0) {unlabeled chain};
            \node (n2) at (0.5, 1) {labeled chain} edge[<-] (n1);
            \node (n3) at (1, 0) {unlabeled tree}  edge[<-] (n1);
            \node (n4) at (1.5, 1) {labeled tree} edge[<-] (n3) edge[<-] (n2);
            \node (n5) at (2, 0) {unlabeled graph} edge[<-] (n3);
            \node (n6) at (2.5, 1) {labeled graph} edge[<-] (n5) edge[<-] (n4);
        \end{tikzpicture}
    }
    \caption{The various classes of graphs on which relationships in the expressive power of navigational query languages is studied. The arrows between classes define is-a relationships.}\label{fig:graph_classes}
\end{figure}
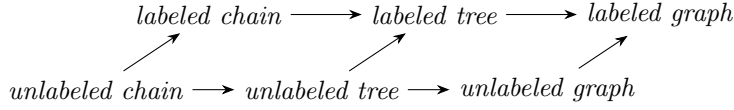
\end{proposition}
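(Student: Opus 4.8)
The plan is to observe that both path-equivalence and boolean-equivalence, when relativized to a class of graphs $\mathcal{C}$, are defined by a universal quantification over the graphs in $\mathcal{C}$, so passing to a subclass only weakens the requirement. First I would unfold the relevant definitions: $\Lang_1 \leq \Lang_2$ on $\mathcal{C}$ (for $\mathord{\leq} \in \{\bool\LeqExpr, \path\LeqExpr\}$) means that for every query $\expr_1 \in \Lang_1$ there is a query $\expr_2 \in \Lang_2$ such that $\expr_1$ and $\expr_2$ are equivalent on every graph $\Graph \in \mathcal{C}$, where ``equivalent on $\Graph$'' abbreviates $\Apply{\expr_1}{\Graph} = \Apply{\expr_2}{\Graph}$ in the path case and $\Apply{\expr_1}{\Graph} \neq \emptyset \iff \Apply{\expr_2}{\Graph} \neq \emptyset$ in the boolean case.

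For part 1, I would fix an arbitrary $\expr_1 \in \Lang_1$ and apply the hypothesis $\Lang_1 \leq \Lang_2$ on $\mathcal{C}_2$ to obtain a witness $\expr_2 \in \Lang_2$ that is equivalent to $\expr_1$ on every graph in $\mathcal{C}_2$. Since $\mathcal{C}_1$ is a subclass of $\mathcal{C}_2$, the statement ``for all $\Graph \in \mathcal{C}_2$, $\expr_1$ and $\expr_2$ are equivalent on $\Graph$'' entails in particular ``for all $\Graph \in \mathcal{C}_1$, $\expr_1$ and $\expr_2$ are equivalent on $\Graph$''. Hence the same $\expr_2$ witnesses equivalence of $\expr_1$ on $\mathcal{C}_1$, and since $\expr_1$ was arbitrary, $\Lang_1 \leq \Lang_2$ on $\mathcal{C}_1$. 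Part 2 is then just the contrapositive of part 1: if $\Lang_1 \leq \Lang_2$ held on $\mathcal{C}_2$, part 1 would yield $\Lang_1 \leq \Lang_2$ on $\mathcal{C}_1$, contradicting the hypothesis $\Lang_1 \nleq \Lang_2$ on $\mathcal{C}_1$; therefore $\Lang_1 \nleq \Lang_2$ on $\mathcal{C}_2$.

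There is no substantive obstacle here; the only point requiring care is that ``$\leq$ on a class of graphs'' is the statement that a \emph{single} witness query works uniformly across the entire class, rather than permitting the witness to depend on the individual graph. It is precisely this uniformity that makes a witness for $\mathcal{C}_2$ reusable verbatim on the subclass $\mathcal{C}_1$. Combined with the concrete is-a hierarchy of Figure~\ref{fig:graph_classes} (unlabeled chains $\subseteq$ labeled chains $\subseteq$ labeled trees, and so on), this proposition is what lets inexpressibility results established on chains propagate upward to trees and graphs, while collapse results established on graphs propagate downward to trees and chains.
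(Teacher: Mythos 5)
Your proof is correct and is exactly the routine argument intended here: the paper states Proposition~\ref{prop:class_carry} without proof, treating it as immediate from the fact that subsumption on a class is a universal quantification over its graphs, which is precisely what you spell out (with part 2 as the contrapositive). Your remark that the witness query must be uniform across the class, and is therefore reusable verbatim on any subclass, correctly identifies the only point of substance.
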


We often use Proposition~\ref{prop:pathbool_carry}, Proposition~\ref{prop:class_carry}, and the subclass relations of Figure~\ref{fig:graph_classes} implicitly to carry over results between various cases.

\begin{proposition}[Fletcher et al.~\cite{graph_icdt}]\label{prop:subleqexpr}
Let $\Fragment_1, \Fragment_2 \subseteq \{ \diversity, \convop, \transop, \proop[1], \proop[2], \coproop[1], \coproop[2], \intersect, \allowbreak \difference \}$. If $\Fragment_1 \subseteq \BASE{\Fragment_2}$, then $\Lang(\Fragment_1) \path\LeqExpr \Lang(\Fragment_2)$.
\end{proposition}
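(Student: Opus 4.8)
The plan is to prove the statement by structural induction on a navigational expression $\expr \in \Lang(\Fragment_1)$, showing that $\expr$ is path-equivalent to some expression in $\Lang(\Fragment_2)$. The base cases $\emptyset$, $\identity$, and $\Label$ (for $\Label$ an edge label) are immediate, since these operators belong to every language $\Lang(\Fragment)$; the case $\diversity$ is also immediate, since no identity in the list preceding the proposition has $\diversity$ on its left-hand side, so $\diversity \in \Fragment_1 \subseteq \BASE{\Fragment_2}$ forces $\diversity \in \Fragment_2$ and hence $\diversity \in \Lang(\Fragment_2)$. The inductive step reduces, after invoking the induction hypothesis on the immediate subexpressions and using that path-equivalence is a congruence (immediate from the compositional definition of $\Apply{\cdot}{\Graph}$), to the following claim: for every operator $o \in \BASE{\Fragment_2}$ and all expressions $\expr_1, \dots, \expr_k \in \Lang(\Fragment_2)$ (with $k$ the arity of $o$), the expression $o(\expr_1, \dots, \expr_k)$ is path-equivalent to some expression in $\Lang(\Fragment_2)$.

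To prove this claim, I would make the closure construction of $\BASE{\Fragment_2}$ explicit: set $\Fragment_2^{(0)} = \Fragment_2$ and, for $i \geq 0$, let $\Fragment_2^{(i+1)}$ consist of $\Fragment_2^{(i)}$ together with every operator whose application can be rewritten, using one of the listed identities, into an expression built only from the basic operators, the operators in $\Fragment_2^{(i)}$, and the original operands. Since there are finitely many operators, $\BASE{\Fragment_2} = \bigcup_i \Fragment_2^{(i)} = \Fragment_2^{(n)}$ for some $n$. The claim is then proved by induction on the least $i$ with $o \in \Fragment_2^{(i)}$. For $i = 0$ it is trivial since $o \in \Fragment_2$. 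For the inductive step, the relevant identity (whose soundness is exactly the content of the identities cited from Fletcher et al.~\cite{graph_navjr}, and is verified directly from the semantics) rewrites $o(\expr_1, \dots, \expr_k)$ into a fixed finite expression whose leaves are among $\expr_1, \dots, \expr_k$ and whose internal operators all lie in $\{\emptyset, \identity, \Label, \compose, \union\} \cup \Fragment_2^{(i-1)}$; a secondary structural induction over this finite expression, applying the outer induction hypothesis at each internal node together with congruence of path-equivalence, yields a path-equivalent expression in $\Lang(\Fragment_2)$.

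With the claim in hand, the main structural induction goes through: if $\expr = o(\expr')$ with $o$ unary and $\expr \in \Lang(\Fragment_1)$, then $o \in \Fragment_1 \subseteq \BASE{\Fragment_2}$; the induction hypothesis gives an $\expr'' \in \Lang(\Fragment_2)$ path-equivalent to $\expr'$, so $\expr$ is path-equivalent to $o(\expr'')$, which by the claim is path-equivalent to an expression in $\Lang(\Fragment_2)$. The binary cases $\compose$, $\union$, $\intersect$, and $\difference$ are handled analogously (for $\compose$ and $\union$ the claim is not even needed). Hence $\Lang(\Fragment_1) \path\LeqExpr \Lang(\Fragment_2)$.

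The main obstacle is organizational rather than conceptual: one must set up the nested induction --- on the closure stage $i$, and, within a stage, on the structure of the rewriting provided by an identity --- with care, and one must confirm the soundness of each listed identity. The latter is routine, since each identity unfolds directly from the semantics of $\proop$, $\coproop$, $\intersect$, $\difference$, $\compose$, $\union$, and $\converse{\cdot}$, but it is the only step requiring genuine verification as opposed to bookkeeping, and it may simply be inherited from Fletcher et al.~\cite{graph_navjr}.
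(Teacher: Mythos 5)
The paper does not prove this proposition at all: it is imported verbatim from Fletcher et al.~\cite{graph_icdt} and stated without argument, so there is no in-paper proof to compare against. Your proposal is a correct, self-contained derivation of the imported fact. The two-level induction --- an outer structural induction on the expression in $\Lang(\Fragment_1)$, reduced via congruence of path-equivalence to the claim that every $o \in \BASE{\Fragment_2}$ applied to $\Lang(\Fragment_2)$-expressions stays (up to path-equivalence) inside $\Lang(\Fragment_2)$, proved in turn by stratifying the closure $\BASE{\Fragment_2} = \bigcup_i \Fragment_2^{(i)}$ and inducting on the stage --- is exactly the bookkeeping needed to make the informal phrase ``can be expressed indirectly \ldots by using the above identities'' rigorous, and the stratification correctly accounts for chained derivations such as obtaining $\proop[1]$ from $\{\convop, \difference\}$ via the intermediate derivation of $\intersect$. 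The observations that $\diversity$ and $\convop$ never occur on the left-hand side of an identity (so their membership in $\BASE{\Fragment_2}$ forces membership in $\Fragment_2$), and that the soundness of each identity is a routine semantic check inherited from~\cite{graph_navjr}, dispose of the remaining cases. The only thing your write-up buys beyond the paper is explicitness; the paper treats the whole statement as known.
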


\section{Results on the Expressive Power}\label{sec:expressive_power}

In this section, we present all relevant results on the relative expressive power of the downward navigational query languages, resulting in the Hasse diagrams of relative expressiveness visualized in Figure~\ref{fig:main_results}. We divide our results into four categories, based on the techniques used to prove them. Section~\ref{ss:locality} provides all the results obtained using locality-based arguments. Section~\ref{ss:homo} provides all the results obtained using homomorphisms. These include a major collapse result for boolean queries on unlabeled trees and unlabeled chains. Section~\ref{ss:auto} introduces condition automata and uses condition automata to obtain closure results for difference and intersection. Furthermore, condition automata are used to show that projection does not add expressive power for boolean queries on labeled chains. Section~\ref{ss:fo} uses the relation between the navigational query languages and first-order logic to prove some expressiveness results involving transitive closure.

Combined, all  these results prove the following:

\begin{theorem}\label{thm:main_result}
Let $\Fragment_1, \Fragment_2 \subseteq \{ \transop, \coproop, \proop, \intersect, \difference\}$. We have $\Lang(\Fragment_1) \bool\LeqExpr \Lang(\Fragment_2)$, respectively $\Lang(\Fragment_1) \path\LeqExpr \Lang(\Fragment_2)$, on unlabeled chains, respectively labeled chains, unlabeled trees, or labeled trees if and only if there exists a directed path from $\BASE{\Fragment_1}$ to $\BASE{\Fragment_2}$ in the corresponding Hasse diagram of Figure~\ref{fig:main_results}.
\end{theorem}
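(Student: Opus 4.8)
The plan is to read Theorem~\ref{thm:main_result} not as a single argument but as the assembly of the individual expressiveness results proved in Sections~\ref{ss:locality}--\ref{ss:fo}, glued together by a completeness check showing that the subsumptions and separations established there, once closed under transitivity and under the structural implications of Propositions~\ref{prop:pathbool_carry} and~\ref{prop:class_carry}, pin down exactly the four Hasse diagrams of Figure~\ref{fig:main_results}. Throughout, I would keep all four structure classes and both semantics ($\bool\LeqExpr$ and $\path\LeqExpr$) in play simultaneously and use the subclass relations of Figure~\ref{fig:graph_classes} to minimise the statements that must be proved by hand: each collapse result I would prove for path queries on the largest structure class on which it holds, so that Proposition~\ref{prop:pathbool_carry}(1) carries it to boolean queries and Proposition~\ref{prop:class_carry}(1) carries it to subclasses; each separation I would prove as boolean inexpressibility on the smallest structure class on which it holds, so that Proposition~\ref{prop:pathbool_carry}(2) lifts it to path queries and Proposition~\ref{prop:class_carry}(2) lifts it to superclasses.

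For the ``if'' direction, suppose there is a directed path from $\BASE{\Fragment_1}$ to $\BASE{\Fragment_2}$. Each node of the relevant diagram is a set of fragments that the later sections show to be pairwise equivalent in the relevant setting (the collapse results: adding $\intersect$ never changes expressive power on trees; with $\coproop$ present, $\difference$ adds nothing; on labeled chains $\proop$ adds no boolean power; on unlabeled trees and chains everything monotone collapses to ``the height of the tree is at least $k$''), and each edge is a subsumption, coming either directly from Proposition~\ref{prop:subleqexpr} because the $\BASE{\cdot}$-closure of the lower fragment is contained in that of the upper one, or from one of the collapse results applied to the endpoints. Since $\bool\LeqExpr$ and $\path\LeqExpr$ are reflexive and transitive, composing the within-node equivalences at the two endpoints with the subsumptions along the path gives $\Lang(\Fragment_1) \leq \Lang(\Fragment_2)$ for the semantics at hand.

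For the ``only if'' direction I would argue the contrapositive: if there is no directed path from $\BASE{\Fragment_1}$ to $\BASE{\Fragment_2}$, then in the partial order of nodes the target is either strictly below the source or incomparable to it, and in both cases I must produce a query in $\Lang(\Fragment_1)$ with no equivalent in $\Lang(\Fragment_2)$. By transitivity it suffices to do this for a handful of ``extremal'' pairs per diagram: a single witness for $\Lang(\coproop) \bool\nLeqExpr \Lang(\BASE{\transop,\proop})$ rules out every pair whose source lies on or above a $\coproop$-node and whose target does not, and a witness showing that $\transop$ adds power and a witness showing that $\proop$ adds power handle the remaining pairs in the same way. These witnesses come from locality arguments (Section~\ref{ss:locality}) for the monotone separations not involving $\coproop$, from homomorphism arguments (Section~\ref{ss:homo}) --- in particular the collapse to ``the height of the tree is at least $k$'', which is what forces the degenerate diagrams in the unlabeled boolean cases --- from condition automata (Section~\ref{ss:auto}) for the separations that distinguish $\proop$ from $\coproop$, and from the first-order-logic connection (Section~\ref{ss:fo}) for the separations that witness the power of $\transop$.

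The hard part, and the part that genuinely belongs to this theorem rather than to the later sections, is the completeness bookkeeping: one must check that every one of the $2^5$ subsets of $\{\transop,\coproop,\proop,\intersect,\difference\}$, after $\BASE{\cdot}$-closure and the collapse results, is equivalent to exactly one node of each diagram; that the transitive closure of the edges we prove is \emph{exactly} the displayed order, so that no accidental subsumption survives the separations; and that the four diagrams are mutually consistent under Propositions~\ref{prop:pathbool_carry} and~\ref{prop:class_carry} --- for instance, a separation that genuinely requires trees may legitimately be absent from the chain diagrams, but every edge of a tree diagram that is inherited from the chains must already hold there, and similarly between the path and the boolean semantics. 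I would carry this out by going through the nodes of each diagram in topological order; it is tedious but routine once all the collapses and separations of Sections~\ref{ss:locality}--\ref{ss:fo} are available.
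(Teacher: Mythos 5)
Your proposal matches the paper's approach exactly: the paper proves Theorem~\ref{thm:main_result} precisely by combining the subsumptions of Proposition~\ref{prop:subleqexpr} and the collapse results with the separations from Sections~\ref{ss:locality}--\ref{ss:fo}, transferred between semantics and structure classes via Propositions~\ref{prop:pathbool_carry} and~\ref{prop:class_carry}, leaving the final bookkeeping implicit just as you describe. No substantive difference.
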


\subsection{Results using direction and locality}\label{ss:locality}

Navigational expressions that do not utilize diversity $(\diversity)$ or converse $(\convop)$ will always navigate a tree from a parent to a child, which we formalize as \emph{downward}:

\begin{definition}
Let $\Tree = (\Nodes, \Alphabet, \ELabels)$ be a tree and let $m \in \Nodes$ be a node. A node $n \in \Nodes$ is an ancestor of node $m$ if there exists a directed path from node $n$ to node $m$. We say that a navigational expression is \emph{downward} if, for any tree $\Tree$, we have $(m, n) \in \Apply{\expr}{\Tree}$ implies $m$ is an ancestor of $n$.
\end{definition}

As the navigational expressions $\converse{\Edges}$ and $\diversity$ are not downward, we may immediately conclude the following:

\begin{proposition}\label{prop:chain_downward}
Let $\Fragment \subseteq \{ \transop, \proop, \coproop, \intersect, \difference \}$. On unlabeled chains we have $\Lang(\convop) \allowbreak\path\nLeqExpr \Lang(\Fragment)$ and $\Lang(\diversity) \path\nLeqExpr \Lang(\Fragment)$.
\end{proposition}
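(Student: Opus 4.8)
The plan is to exploit the fact that $\Lang(\Fragment)$, for $\Fragment \subseteq \{\transop, \proop, \coproop, \intersect, \difference\}$, contains no operator capable of introducing a "non-downward" pair, and then to argue that $\converse{\Edges}$ and $\diversity$ each produce such a pair on a suitable unlabeled chain, so no expression of $\Lang(\Fragment)$ can be path-equivalent to either of them. The key structural observation is that every operator available in $\Lang(\Fragment)$ preserves downwardness: $\emptyset$, $\identity$, and $\Label$ are downward by inspection (on a tree, $\Label$-edges go from parent to child, and $\identity$ pairs each node with itself, which is trivially an ancestor pair under the convention that a node is its own ancestor, or else $\identity$ contributes only diagonal pairs which we handle separately); composition of downward relations is downward because ancestorship is transitive; union of downward relations is downward; intersection and difference of a downward relation with anything only removes pairs, hence stay downward; transitive closure of a downward relation is downward, again by transitivity of ancestorship; and $\proop[i]$, $\coproop[i]$ only ever produce diagonal pairs $(m,m)$, which are downward.

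First I would make precise the notion that $\identity$ and the projections/coprojections contribute only pairs $(m,m)$; on a chain this is harmless because a diagonal pair is never of the form "child-to-parent with distinct endpoints." So I would prove, by a straightforward structural induction on the expression $\expr \in \Lang(\Fragment)$, the invariant: for every tree $\Tree$ and every $(m,n) \in \Apply{\expr}{\Tree}$, node $m$ is an ancestor of node $n$ (using the paper's convention, ancestor being witnessed by a directed path, which includes the length-zero path, so $m$ is an ancestor of itself). This is exactly the statement that $\expr$ is downward in the sense of the Definition preceding the Proposition. The inductive step for each operator is the one-line argument sketched above; the only mild subtlety is transitive closure, where I use that if $(m,z)$ and $(z,n)$ are both ancestor pairs then so is $(m,n)$, which follows by concatenating the two directed paths.

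Next, I would take a concrete counterexample. Let $\Tree$ be the unlabeled chain with two nodes, root $r$ and its child $c$, so $\Edges = \{(r,c)\}$. Then $\Apply{\converse{\Edges}}{\Tree} = \{(c,r)\}$, and $c$ is not an ancestor of $r$, so $\converse{\Edges}$ is not downward; by the structural induction, no $\expr \in \Lang(\Fragment)$ has $\Apply{\expr}{\Tree} = \{(c,r)\}$, hence $\converse{\Edges}$ is not path-equivalent to any expression of $\Lang(\Fragment)$, giving $\Lang(\convop) \path\nLeqExpr \Lang(\Fragment)$ on unlabeled chains. For $\diversity$, evaluate on the same two-node chain: $\Apply{\diversity}{\Tree} = \{(r,c),(c,r)\}$, which again contains the non-downward pair $(c,r)$, so by the same argument $\diversity$ is not path-equivalent to any expression of $\Lang(\Fragment)$, yielding $\Lang(\diversity) \path\nLeqExpr \Lang(\Fragment)$ on unlabeled chains.

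I do not expect any real obstacle here; the statement is essentially a bookkeeping exercise. The one place to be careful is to phrase the inductive invariant so that it simultaneously covers $\identity$ and the (co)projections (which are not "strictly downward" in the sense of strictly increasing depth, but are downward under the ancestor-including-self convention the paper uses) and to note that $\emptyset$ is trivially downward. Since $\Lang()$ is already the minimal language and $\Fragment$ adds only $\transop, \proop, \coproop, \intersect, \difference$ — none of which can create a pair $(m,n)$ with $m$ not an ancestor of $n$ — the induction goes through uniformly, and the two explicit two-node-chain evaluations complete the proof.
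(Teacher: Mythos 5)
Your proposal is correct and follows exactly the route the paper takes: the paper states (without spelling out the induction) that every expression avoiding $\diversity$ and $\convop$ is downward, and then concludes the proposition immediately from the fact that $\converse{\Edges}$ and $\diversity$ are not downward on a chain. Your explicit structural induction and the two-node-chain witnesses merely fill in the details the paper leaves implicit, including the correct handling of $\identity$ and the (co)projections under the ancestor-including-self convention.
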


Observe that the query result of $\project{1}{\expr}$ and $\project{2}{\expr}$ will always be a subset of $\identity$. For downward languages that cannot express $\proop$ directly or via straightforward rewriting, we can easily show that each navigational expression, when evaluated on a chain $\Chain$, will either include all of $\Apply{\identity}{\Chain}$ or will have no overlap with $\Apply{\identity}{\Chain}$:

\begin{proposition}\label{prop:emptyorid}
Let $\Chain = (\Nodes, \Alphabet, \ELabels)$ be an unlabeled chain, let $\Fragment \subseteq \{ \transop, \intersect, \difference \}$, and let $\expr$ be a navigational expression in $\Lang(\Fragment)$. Then either $\Apply{\identity}{\Chain} \subseteq \Apply{\expr}{\Chain}$ or $\Apply{\identity}{\Chain} \intersect \Apply{\expr}{\Chain} = \emptyset$.
\end{proposition}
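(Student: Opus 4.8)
The plan is to prove something considerably stronger, from which the statement is immediate: on an \emph{unlabeled} chain, every navigational expression in $\Lang(\Fragment)$ evaluates to a relation that depends only on the ``gap'' between the two endpoints. Concretely, write the chain as $\Nodes = \{n_0, \dots, n_k\}$ with edges $(n_l, n_{l+1})$. I claim that for every $\expr \in \Lang(\Fragment)$ there is a set $A_\expr \subseteq \NatNum$, depending on $\expr$ only and not on the chain, such that $\Apply{\expr}{\Chain} = \{ (n_i, n_j) \mid 0 \le i \le j \le k \text{ and } j - i \in A_\expr \}$ for every unlabeled chain $\Chain$. Granting this, note $\Apply{\identity}{\Chain} = \{ (n_i, n_i) \mid 0 \le i \le k \}$: if $0 \in A_\expr$ then every diagonal pair lies in $\Apply{\expr}{\Chain}$, so $\Apply{\identity}{\Chain} \subseteq \Apply{\expr}{\Chain}$; and if $0 \notin A_\expr$ then no diagonal pair lies in $\Apply{\expr}{\Chain}$, so $\Apply{\identity}{\Chain} \intersect \Apply{\expr}{\Chain} = \emptyset$. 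Either way the proposition follows.

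I would establish the claim by induction on the structure of $\expr$. For the base cases, $A_\emptyset = \emptyset$, $A_\identity = \{0\}$, and $A_\Edges = \{1\}$ --- here it is essential that the chain is unlabeled, so that the unique available edge label denotes the full parent--child relation $\Edges$. For the inductive step, the point is that each operator available in $\Lang(\Fragment)$ acts on gap sets in a uniform, chain-independent way: I would show $A_{\expr_1 \union \expr_2} = A_{\expr_1} \union A_{\expr_2}$, $A_{\expr_1 \intersect \expr_2} = A_{\expr_1} \intersect A_{\expr_2}$, $A_{\expr_1 \difference \expr_2} = A_{\expr_1} \difference A_{\expr_2}$, that $A_{\expr_1 \compose \expr_2}$ is the sumset $\{ a + b \mid a \in A_{\expr_1},\, b \in A_{\expr_2} \}$, and that $A_{\transitive{\expr}} = \bigcup_{m \ge 1} \{ a_1 + \dots + a_m \mid a_1, \dots, a_m \in A_\expr \}$. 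Union, intersection, and difference are immediate, since by the induction hypothesis membership of a pair in each argument relation depends only on its gap. For composition one checks that a pair $(n_i, n_j)$ has an intermediate witness $n_l$ with the correct gaps on the left and on the right exactly when $j - i$ decomposes as $a + b$ with $a \in A_{\expr_1}$ and $b \in A_{\expr_2}$: given such a decomposition take $l = i + a$, and observe that the constraints $0 \le i \le l \le j \le k$ hold automatically because $a, b \ge 0$. Transitive closure then follows by iterating composition and taking the union over the number of factors.

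The one place where one might worry the argument breaks is the difference operator, since $\difference$ is non-monotone and could a priori introduce a dependence on the absolute positions $i$ and $j$ rather than only on $j - i$; this is the step I expect to require the most care. However, it causes no trouble: by the induction hypothesis, membership in both $\Apply{\expr_1}{\Chain}$ and $\Apply{\expr_2}{\Chain}$ is determined by the gap alone, hence so is membership in their difference, giving $A_{\expr_1 \difference \expr_2} = A_{\expr_1} \difference A_{\expr_2}$ as claimed. (If preferred, the operator $\emptyset$ may be removed at the outset via Lemma~\ref{lem:remove_emptyset} when $\expr$ is nonempty on some graph, but it is just as easy to retain it as the base case $A_\emptyset = \emptyset$.) This completes the plan.
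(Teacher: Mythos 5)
Your proposal is correct, and it reaches the conclusion by a genuinely stronger route than the paper. The paper also proceeds by structural induction on $\expr$, but its induction invariant is exactly the dichotomy to be proved (either $\Apply{\identity}{\Chain} \subseteq \Apply{\expr}{\Chain}$ or the two are disjoint); in the cases for $\compose$, $\transop$, and $\difference$ it then has to invoke downwardness to rule out off-diagonal witnesses, e.g.\ to see that a pair $(n,n)$ in a composition forces the intermediate node to equal $n$. You instead strengthen the induction hypothesis to a full normal form: on an unlabeled chain every expression of $\Lang(\Fragment)$ denotes a union of ``gap classes'' $\{(n_i,n_j) \mid j-i \in A_\expr\}$ with $A_\expr \subseteq \NatNum$ independent of the chain, and the operators act on gap sets by union, intersection, difference, sumset, and iterated sumset. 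The downwardness that the paper uses implicitly is absorbed into the fact that $A_\expr \subseteq \NatNum$, which is what makes the intermediate node $l = i+a$ in the composition case a legitimate node of the chain, and your explicit treatment of $\difference$ correctly identifies and discharges the only non-monotone step. What your version buys is a reusable characterization of the whole fragment $\Lang(\transop,\intersect,\difference)$ on unlabeled chains (translation invariance of query results), from which the proposition, and for instance also the inexpressibility of $\project{1}{\Edges}$ used in Corollary~\ref{cor:prim_proop_downchain}, fall out immediately; what the paper's version buys is brevity, since it never needs to name or manipulate the gap sets. Both arguments are sound.
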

\begin{proof}
The proof is by induction on the structure of $\expr$. The base cases are $\expr$ with $\esize{\expr} = 0$ ($\expr \in \{ \emptyset, \identity, \Label \}$, with $\Label$ an edge label) and are straightforward to verify. Assume that, for every expression $\expr'$ with $\esize{\expr'} < i$ and every chain $\Chain$, we have $\Apply{\identity}{\Chain} \subseteq \Apply{\expr'}{\Chain}$ or we have $\Apply{\identity}{\Chain} \intersect \Apply{\expr'}{\Chain} = \emptyset$.  Let $\expr$ be an expression with $\esize{\expr} = i$. We distinguish the following cases:
\begin{enumerate}
\item $\expr = \transitive{\expr''}$. We have $\esize{\expr''} = i - 1$, hence, we apply the induction hypothesis to conclude that $\Apply{\identity}{\Chain} \subseteq \Apply{\expr''}{\Chain}$ or $\Apply{\identity}{\Chain} \intersect \Apply{\expr''}{\Chain} = \emptyset$. As $\expr$ is downward, we can use the semantics of $\transop$ in a straightforward manner to conclude that $\Apply{\identity}{\Chain} \subseteq \Apply{\expr}{\Chain}$ if and only if $\Apply{\identity}{\Chain} \subseteq \Apply{\expr''}{\Chain}$ and $\Apply{\identity}{\Chain} \intersect \Apply{\expr}{\Chain} = \emptyset$ otherwise.

\item $\expr = \expr_1 \compose \expr_2$. We have $\esize{\expr_1} < i$ and $\esize{\expr_2} < i$, and we apply the induction hypothesis to conclude that $\Apply{\identity}{\Chain} \subseteq \Apply{\expr_1}{\Chain}$ or $\Apply{\identity}{\Chain} \intersect \Apply{\expr_1}{\Chain} = \emptyset$, and that $\Apply{\identity}{\Chain} \subseteq \Apply{\expr_2}{\Chain}$ or $\Apply{\identity}{\Chain} \intersect \Apply{\expr_2}{\Chain} = \emptyset$. Using the semantics of $\compose$, we conclude that $\Apply{\identity}{\Chain} \subseteq \Apply{\expr}{\Chain}$ if and only if $\Apply{\identity}{\Chain} \subseteq \Apply{\expr_1}{\Chain}$ and $\Apply{\identity}{\Chain} \subseteq \Apply{\expr_2}{\Chain}$, and $\Apply{\identity}{\Chain} \intersect \Apply{\expr}{\Chain} = \emptyset$ otherwise.

\item $\expr = \expr_1 \union \expr_2$. We have $\esize{\expr_1} < i$ and $\esize{\expr_2} < i$, and we apply the induction hypothesis to conclude that $\Apply{\identity}{\Chain} \subseteq \Apply{\expr_1}{\Chain}$ or $\Apply{\identity}{\Chain} \intersect \Apply{\expr_1}{\Chain} = \emptyset$, and that $\Apply{\identity}{\Chain} \subseteq \Apply{\expr_2}{\Chain}$ or $\Apply{\identity}{\Chain} \intersect \Apply{\expr_2}{\Chain} = \emptyset$. Using the semantics of $\union$, we conclude that $\Apply{\identity}{\Chain} \subseteq \Apply{\expr}{\Chain}$ if and only if $\Apply{\identity}{\Chain} \subseteq \Apply{\expr_1}{\Chain}$ or $\Apply{\identity}{\Chain} \subseteq \Apply{\expr_2}{\Chain}$, and $\Apply{\identity}{\Chain} \intersect \Apply{\expr}{\Chain} = \emptyset$ otherwise.

\item \label{prop:emptyorid:diff} $\expr = \expr_1 \difference \expr_2$. We have $\esize{\expr_1} < i$ and $\esize{\expr_2} < i$, and we apply the induction hypothesis to conclude that $\Apply{\identity}{\Chain} \subseteq \Apply{\expr_1}{\Chain}$ or $\Apply{\identity}{\Chain} \intersect \Apply{\expr_1}{\Chain} = \emptyset$, and that $\Apply{\identity}{\Chain} \subseteq \Apply{\expr_2}{\Chain}$ or $\Apply{\identity}{\Chain} \intersect \Apply{\expr_2}{\Chain} = \emptyset$. Using the semantics of $\difference$, we conclude that $\Apply{\identity}{\Chain} \subseteq \Apply{\expr}{\Chain}$ if and only if $\Apply{\identity}{\Chain} \subseteq \Apply{\expr_1}{\Chain}$ and $\Apply{\identity}{\Chain} \intersect \Apply{\expr_2}{\Chain} = \emptyset$, and $\Apply{\identity}{\Chain} \intersect \Apply{\expr}{\Chain} = \emptyset$ otherwise.

\item $\expr = \expr_1 \intersect \expr_2$. Follows from Case~\ref{prop:emptyorid:diff}, as $\expr_1 \intersect \expr_2 = \expr_1 \difference (\expr_1 \difference \expr_2)$. \qedhere
\end{enumerate}
\end{proof}

Proposition~\ref{prop:emptyorid} yields the following:

\begin{corollary}\label{cor:prim_proop_downchain}
Let $\Fragment \subseteq \{ \transop, \intersect, \difference \}$. On unlabeled chains we have $\Lang(\Fragment \union \{\proop \}) \path\nLeqExpr \Lang(\Fragment)$ and  $\Lang(\Fragment \union \{\coproop \}) \path\nLeqExpr \Lang(\Fragment)$.
\end{corollary}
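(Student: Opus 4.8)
The plan is to derive both statements directly from Proposition~\ref{prop:emptyorid} by exhibiting, for each of the two larger languages, a single expression together with a single unlabeled chain on which that expression cannot be matched by any expression of $\Lang(\Fragment)$. Recall that to establish path-inequivalence it suffices to produce one graph on which the two queries disagree, and that $\proop$ abbreviates $\{\proop[1], \proop[2]\}$ while $\coproop$ abbreviates $\{\coproop[1], \coproop[2]\}$, so that $\project{1}{\Edges} \in \Lang(\proop)$ and $\coproject{1}{\Edges} \in \Lang(\coproop)$.

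Fix the two-node unlabeled chain $\Chain$ with $\Nodes = \{n_1, n_2\}$ and single edge $(n_1, n_2)$, so that $\Apply{\identity}{\Chain} = \{(n_1,n_1),(n_2,n_2)\}$. For the first claim, take $\expr = \project{1}{\Edges} \in \Lang(\Fragment \union \{\proop\})$. By the semantics of $\proop[1]$ we have $\Apply{\project{1}{\Edges}}{\Chain} = \{(n_1,n_1)\}$, since $n_1$ has a child and $n_2$ does not. This is a nonempty proper subset of $\Apply{\identity}{\Chain}$: it is not disjoint from $\Apply{\identity}{\Chain}$, as it contains $(n_1,n_1)$, and it does not contain all of $\Apply{\identity}{\Chain}$, as it misses $(n_2,n_2)$. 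By Proposition~\ref{prop:emptyorid}, every $\expr' \in \Lang(\Fragment)$ satisfies $\Apply{\identity}{\Chain} \subseteq \Apply{\expr'}{\Chain}$ or $\Apply{\identity}{\Chain} \intersect \Apply{\expr'}{\Chain} = \emptyset$, so no such $\expr'$ can satisfy $\Apply{\expr'}{\Chain} = \Apply{\project{1}{\Edges}}{\Chain}$. Hence $\project{1}{\Edges}$ is not path-equivalent to any expression of $\Lang(\Fragment)$, which gives $\Lang(\Fragment \union \{\proop\}) \path\nLeqExpr \Lang(\Fragment)$ on unlabeled chains. The second claim follows by the same argument applied to $\coproject{1}{\Edges} \in \Lang(\Fragment \union \{\coproop\})$: on the same chain, the semantics of $\coproop[1]$ yield $\Apply{\coproject{1}{\Edges}}{\Chain} = \{(n_2,n_2)\}$, again a nonempty proper subset of $\Apply{\identity}{\Chain}$, which by Proposition~\ref{prop:emptyorid} no $\Lang(\Fragment)$ expression can produce on $\Chain$.

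There is no substantial obstacle here, since this is essentially an immediate consequence of Proposition~\ref{prop:emptyorid}. The only points requiring a moment's care are that the witness chain must have at least two nodes, so that a singleton subset of $\Apply{\identity}{\Chain}$ is genuinely neither contained in nor disjoint from $\Apply{\identity}{\Chain}$, and that $\project{1}{\Edges}$ and $\coproject{1}{\Edges}$ really do lie in the respective extended fragments — both of which are immediate from the definitions, noting that on an unlabeled graph $\Edges$ is simply the single edge label.
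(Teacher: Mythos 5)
Your proof is correct and follows essentially the same route as the paper: both exhibit $\project{1}{\Edges}$ (and its coprojection analogue) on a two-node unlabeled chain as a nonempty proper subset of the identity relation, and invoke Proposition~\ref{prop:emptyorid} to rule out any path-equivalent expression in $\Lang(\Fragment)$. Your write-up is in fact slightly more complete, since it spells out the $\coproop$ case explicitly rather than leaving it implicit.
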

\begin{proof}
Let $\Chain = (\Nodes, \Alphabet, \ELabels)$ be a chain with $\abs{\Nodes} \leq 2$. We have $\emptyset \subsetneq \Apply{\project{1}{\Edges}}{\Chain} = \Apply{\project{1}{\Edges}}{\Chain} \intersect \Apply{\identity}{\Chain} \subsetneq \Apply{\identity}{\Chain}$. By Proposition~\ref{prop:emptyorid}, no expression in $\Lang(\Fragment)$ can be path-equivalent to $\project{1}{\Edges}$.
\end{proof}

\subsection{Results using homomorphisms}\label{ss:homo}

Many of the language fragments we consider are closed under homomorphisms. Using these closure results, we can prove several expressivity results for boolean queries.

\begin{definition}
Let $\Graph_1 = (\Nodes_1, \Alphabet, \ELabels_1)$ and $\Graph_2 = (\Nodes_2, \Alphabet, \ELabels_2)$ be graphs. We say that a mapping $h : \Nodes_1 \rightarrow \Nodes_2$ is a \emph{homomorphism} from $\Graph_1$ to $\Graph_2$ if, for every pair of nodes $m, n \in \Nodes_1$ and every edge label $\Label \in \Alphabet$, we have that $(m, n) \in \Get{\Label}{\ELabels_1}$ implies $(h(m), h(n)) \in \Get{\Label}{\ELabels_2}$. A homomorphism $h : \Nodes_1 \rightarrow \Nodes_2$ is called \emph{injective} if, for all $m, n \in \Nodes_1$, $n \neq m$ implies $h(n) \neq h(m)$.
\end{definition}

\begin{definition}
Let $F$ be a class of functions of $\Nodes_1 \rightarrow \Nodes_2$ and let $\Fragment \subseteq \{ \diversity, \convop, \transop, \proop, \coproop, \intersect, \difference \}$. We say that  $\Lang(\Fragment)$ is \emph{closed} under $F$ if, for every navigational expression $\expr$ in $\Lang(\Fragment)$ and every $f \in F$, we have, $(m, n) \in \Apply{\expr}{\Graph_1}$ implies $(f(m), f(n)) \in \Apply{\expr}{\Graph_2}$.
\end{definition}

Via straightforward induction proofs, we show that navigational query languages without the operators diversity, coprojection, and difference are closed under homomorphisms. Observe that diversity is a form of inequality, hence, when we do allow diversity, we  show that these languages are closed under injective homomorphisms:
\begin{lemma}\label{lem:closed_homo}
Let $\Fragment \subseteq \{ \convop, \transop, \proop, \intersect \}$. The language $\Lang(\Fragment)$ is closed under homomorphisms and the language $\Lang(\Fragment  \union \{ \diversity \})$ is closed under injective homomorphisms.
\end{lemma}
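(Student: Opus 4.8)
The plan is to prove both closure statements simultaneously by a single structural induction on the navigational expression $\expr$, since the argument for injective homomorphisms parallels the one for arbitrary homomorphisms, with the diversity case being the only place where injectivity is needed. So let $h$ be a homomorphism (respectively, an injective homomorphism) from $\Graph_1$ to $\Graph_2$, and for each expression $\expr$ in the relevant fragment I would prove the claim: $(m,n) \in \Apply{\expr}{\Graph_1}$ implies $(h(m), h(n)) \in \Apply{\expr}{\Graph_2}$.

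For the base cases, $\emptyset$ is trivial; $\identity$ uses that $(m,m) \mapsto (h(m), h(m))$; $\Label$ is exactly the defining property of a homomorphism; and $\diversity$ is the case that needs injectivity, since $m \neq n$ must be preserved as $h(m) \neq h(n)$. For the inductive step I would go through the operators one at a time. Composition and union are immediate from the induction hypothesis applied to the subexpressions, using the semantics of $\compose$ and $\union$ (for $\compose$, if $(m,z)$ and $(z,n)$ are mapped to $(h(m),h(z))$ and $(h(z),h(n))$, we get $(h(m),h(n)) \in \Apply{\expr_1 \compose \expr_2}{\Graph_2}$). Converse is symmetric and immediate. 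Transitive closure follows by an inner induction on $k$: if $(m,n) \in \Apply{\expr'}{\Graph_1}^k$, then chaining the induction hypothesis along the witnessing path gives $(h(m),h(n)) \in \Apply{\expr'}{\Graph_2}^k \subseteq \transitive{\Apply{\expr'}{\Graph_2}}$. Projection $\project{1}{\expr'}$ (and symmetrically $\project{2}{\expr'}$): if $(m,m) \in \Apply{\project{1}{\expr'}}{\Graph_1}$ then there is $n$ with $(m,n) \in \Apply{\expr'}{\Graph_1}$, so $(h(m),h(n)) \in \Apply{\expr'}{\Graph_2}$ by the induction hypothesis, whence $(h(m),h(m)) \in \Apply{\project{1}{\expr'}}{\Graph_2}$. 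Intersection: if $(m,n)$ lies in both $\Apply{\expr_1}{\Graph_1}$ and $\Apply{\expr_2}{\Graph_1}$, the induction hypothesis applied to each gives $(h(m),h(n))$ in both $\Apply{\expr_1}{\Graph_2}$ and $\Apply{\expr_2}{\Graph_2}$, hence in the intersection.

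The conceptual point worth emphasizing — and the closest thing to an "obstacle," though it is really just a matter of being careful about which operators are excluded — is precisely \emph{why} coprojection and difference must be absent: both involve negation (an existentially quantified \emph{non}-existence, respectively a set difference), and negation is not preserved by homomorphisms, which can only create new tuples, never destroy them. So there is nothing to prove for those operators because they are not in the fragment; the induction only needs to handle the monotone operators plus $\converse{}$ and, with injectivity, $\diversity$. The whole proof is routine bookkeeping once this monotonicity intuition is made precise, and the statement of Lemma~\ref{lem:closed_homo} is deliberately scoped so that every operator it permits is handled by one of the uniform cases above.
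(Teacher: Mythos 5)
Your proof is correct and matches the paper's approach: the paper itself only remarks that the lemma follows ``via straightforward induction proofs,'' and your structural induction (with injectivity invoked exactly once, for $\diversity$, and monotonicity explaining the exclusion of $\coproop$ and $\difference$) is precisely that argument spelled out.
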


We shall now show that there always exist homomorphisms from unlabeled trees to long unlabeled chains, and injective homomorphisms from unlabeled chains to deep unlabeled trees and longer chains. We shall use these (injective) homomorphisms to show that the languages closed under (injective) homomorphisms are unable to recognize complex structures in trees.

Let $\Tree = (\Nodes, \Alphabet, \ELabels)$ be a tree and let $m,n \in \Nodes$ be nodes such that $n$ is an ancestor of $m$. We define the distance from $n$ to $m$, denoted by $\distance(n, m)$, as the length of the path (in edges) from $n$ to$m$. The \emph{depth} of tree $\Tree$ is defined by $\depth(\Tree) = \max_{m \in \Nodes} \distance(r, m)$. If $\Tree$ is a chain, then $\abs{\Nodes} = \depth(\Tree) + 1$.

\begin{lemma}\label{lem:simple_homo}

\begin{enumerate}
\item Let $\Chain_1 = (\Nodes_1, \Alphabet, \ELabels_1)$ and $\Chain_2 = (\Nodes_2, \Alphabet, \ELabels_2)$ be unlabeled chains. If $\abs{\Nodes_1} \leq \abs{\Nodes_2}$, then there exists an injective homomorphism from $\Chain_1$ to $\Chain_2$.
\item Let $\Tree = (\Nodes_\Tree, \Alphabet, \ELabels_\Tree)$ be an unlabeled tree and let $\Chain = (\Nodes_\Chain, \Alphabet, \ELabels_\Chain)$ be an unlabeled chain.
\begin{enumerate*}
\item If $\abs{\Nodes_\Chain} \geq \depth(\Tree) + 1$, then there exists a homomorphism from $\Tree$ to $\Chain$.
\item If $\abs{\Nodes_\Chain} \leq \depth(\Tree) + 1$, then there exists an injective homomorphism from $\Chain$ to $\Tree$.
\end{enumerate*}
\end{enumerate}
\end{lemma}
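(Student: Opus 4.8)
The plan is to prove Lemma~\ref{lem:simple_homo} by giving explicit maps in each case and checking the homomorphism condition, which for unlabeled structures amounts to checking that parent-child edges are preserved. Throughout, I identify an unlabeled chain on $k$ nodes with the linearly ordered set $\{0, 1, \dots, k-1\}$, where node $i$ is the parent of node $i+1$ and node $0$ is the root; similarly, for an unlabeled tree $\Tree$ with root $r$, I use the function $\distance(r, \cdot)$ to assign each node its depth.

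For part~(1), given chains $\Chain_1$ on $\{0, \dots, k_1 - 1\}$ and $\Chain_2$ on $\{0, \dots, k_2 - 1\}$ with $k_1 \le k_2$, I define $h(i) = i$. This is well-defined since $k_1 \le k_2$, it preserves edges since the edge relation on both chains is $\{(i, i+1)\}$, and it is injective since it is the identity on its domain. For part~(2)(a), given a tree $\Tree$ with root $r$ and a chain $\Chain$ on $\{0, \dots, k-1\}$ with $k \ge \depth(\Tree) + 1$, I define $h(m) = \distance(r, m)$. Since $\distance(r, m) \le \depth(\Tree) \le k - 1$, this lands in $\Nodes_\Chain$; and if $(m, n)$ is an edge of $\Tree$, then $n$ is a child of $m$, so $\distance(r, n) = \distance(r, m) + 1$, which is exactly the edge relation on $\Chain$ — so $h$ is a homomorphism (not necessarily injective, as expected, since distinct nodes at the same depth collapse). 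For part~(2)(b), given a chain $\Chain$ on $\{0, \dots, k-1\}$ with $k \le \depth(\Tree) + 1$, I fix a node $w \in \Nodes_\Tree$ with $\distance(r, w) = \depth(\Tree)$ and let $r = v_0, v_1, \dots, v_{\depth(\Tree)} = w$ be the unique root-to-$w$ path; I then define $h(i) = v_i$ for $0 \le i \le k-1$, which is valid since $k - 1 \le \depth(\Tree)$. Consecutive $v_i$'s form edges of $\Tree$ by construction, so edges are preserved, and the $v_i$'s are pairwise distinct (they lie on a simple path, or: they have distinct depths), so $h$ is injective.

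None of the steps presents a serious obstacle; this is essentially a bookkeeping lemma. The one point that deserves care is ensuring the relevant maps are genuinely \emph{injective} where claimed, which I would handle uniformly via the observation that in a tree, any two nodes on a single root-to-leaf path have distinct depths, and in a chain, depth determines the node; hence any map that sends distinct elements to nodes of distinct depth is injective. I would also remark explicitly that in the unlabeled setting there is a single edge label, so the homomorphism condition of Definition~\ref{def:nav_grammar}'s companion (the homomorphism definition) reduces to preservation of the single edge relation $\Edges$, which is what all three constructions verify.
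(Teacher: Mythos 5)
Your proposal is correct and follows essentially the same approach as the paper's proof sketch: all three maps are defined by matching distance from the root, with part~(2)(b) restricting to a maximal root-to-leaf path. You merely spell out the bookkeeping (identification of chains with initial segments of $\NatNum$, edge preservation, and injectivity via distinctness of depths) that the paper leaves implicit.
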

\begin{proof}[Proof (sketch).]
\begin{enumerate}
\item Map nodes $n \in \Nodes_1$ to $n' \in \Nodes_2$ such that $n$ and $n'$ have equal distance from the root of $\Chain_1$ and $\Chain_2$, respectively.
\item
\begin{enumerate}
\item Map nodes $n \in \Nodes_\Tree$ to $n' \in \Nodes_\Chain$  such that $n$ and $n'$ have equal distance from the root of $\Tree$ and $\Chain$, respectively.
\item Choose a path in the tree $\Tree$ from the root node $r$ to leaf node $m$ with $\distance(r, m) = \depth(\Tree)$. Map nodes $n \in \Nodes_\Chain$ to nodes $n'$ on this path such that $n$ and $n'$ have equal distance from the root of $\Chain$ and $\Tree$, respectively.\qedhere
\end{enumerate}
\end{enumerate}
\end{proof}

We use Lemma~\ref{lem:simple_homo} to show that, on unlabeled trees, all navigational expressions that are closed under homomorphisms are equivalent to $\emptyset$ or to the boolean query ``\emph{the height of the tree is at least $k$}'', for a fixed value of $k$.

\begin{proposition}\label{pro:depthatk}
Let $\Fragment \subseteq \{ \convop, \transop, \proop, \intersect \}$ and let $\expr$ be a navigational expression in $\Lang(\Fragment)$ ($\Lang(\Fragment \union \{ \diversity \})$). If, on unlabeled trees (unlabeled chains), $\expr$ is not boolean-equivalent to $\emptyset$, then there exists a $k$, $0 \leq k$, such that $\expr$ is boolean-equivalent to $\Edges^k$.
\end{proposition}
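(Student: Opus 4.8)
The plan is to combine the closure-under-homomorphisms results of Lemma~\ref{lem:closed_homo} with the concrete (injective) homomorphisms produced by Lemma~\ref{lem:simple_homo}. First I would argue that, on unlabeled chains, the boolean behaviour of $\expr$ is completely determined by the \emph{length} of the chain, and in fact is monotone in that length: if $\expr$ returns a nonempty result on a chain $\Chain_1$, then by Lemma~\ref{lem:simple_homo}(1) there is an injective homomorphism from $\Chain_1$ into any longer chain $\Chain_2$, and by Lemma~\ref{lem:closed_homo} (applied to $\Lang(\Fragment)$ under arbitrary homomorphisms, or to $\Lang(\Fragment \union \{\diversity\})$ under injective homomorphisms) this forces $\Apply{\expr}{\Chain_2} \neq \emptyset$. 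Hence the set of chain-lengths on which $\expr$ is nonempty is upward closed, so there is a minimal $k$ such that $\expr$ is nonempty on a chain exactly when $\abs{\Nodes} \geq k+1$, i.e.\ when the depth is at least $k$; since $\expr$ is not boolean-equivalent to $\emptyset$, such a $k$ exists. It then remains to observe that $\Edges^k$ is nonempty on an unlabeled chain precisely when the chain has depth at least $k$, so $\expr$ and $\Edges^k$ are boolean-equivalent on unlabeled chains. This settles the ``unlabeled chains'' case.

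For the ``unlabeled trees'' case I would first record the easy inequality: by Proposition~\ref{prop:class_carry} together with the subclass relation ``unlabeled chain $\Rightarrow$ unlabeled tree'' of Figure~\ref{fig:graph_classes}, if $\expr$ is boolean-equivalent to $\Edges^k$ on unlabeled trees then it is so on unlabeled chains, so the value of $k$ must be the one already found above; conversely I must show that this same $k$ works on all unlabeled trees. The point is that for an unlabeled tree $\Tree$, the query $\Edges^k$ is nonempty on $\Tree$ iff $\depth(\Tree) \geq k$, so I must show $\Apply{\expr}{\Tree} \neq \emptyset \iff \depth(\Tree) \geq k$. For the forward direction, suppose $\Apply{\expr}{\Tree} \neq \emptyset$ but $\depth(\Tree) \leq k-1$; then by Lemma~\ref{lem:simple_homo}(2a) there is a homomorphism from $\Tree$ to the chain $\Chain$ on $k$ nodes (which has depth $k-1$), and by closure this gives $\Apply{\expr}{\Chain} \neq \emptyset$, contradicting minimality of $k$. (For the $\diversity$ variant one restricts to unlabeled chains throughout, so this subcase does not arise.) For the converse direction, suppose $\depth(\Tree) \geq k$; pick a root-to-leaf path of length $\geq k$ inside $\Tree$ — equivalently, by Lemma~\ref{lem:simple_homo}(2b) there is an injective homomorphism from a chain $\Chain'$ with $\abs{\Nodes} = k+1$ into $\Tree$ — and since $\expr$ is nonempty on $\Chain'$ (as $\abs{\Nodes} = k+1 \geq k+1$), closure under (injective) homomorphisms transports a witnessing pair into $\Tree$, giving $\Apply{\expr}{\Tree} \neq \emptyset$.

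The only genuinely delicate point is bookkeeping with the two variants of closure: for $\Lang(\Fragment)$, $\Fragment \subseteq \{\convop,\transop,\proop,\intersect\}$, we have closure under \emph{all} homomorphisms, which is exactly what the tree-to-chain map in Lemma~\ref{lem:simple_homo}(2a) needs (that map is not injective in general), so the full ``unlabeled trees'' statement goes through. For $\Lang(\Fragment \union \{\diversity\})$ we only have closure under injective homomorphisms, so the tree-to-chain collapse is unavailable; this is precisely why the statement only claims the result on \emph{unlabeled chains} in that case, and there Lemma~\ref{lem:simple_homo}(1) supplies injective homomorphisms in both directions needed for monotonicity. I would therefore present the argument uniformly up to the monotonicity step, and then split: for the no-$\diversity$ fragments extend from chains to trees via the two halves of Lemma~\ref{lem:simple_homo}(2), and for the $\diversity$ fragments simply stop at chains. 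I expect the main obstacle to be nothing deep, just stating the monotonicity-of-nonemptiness lemma cleanly and being careful that ``height at least $k$'' on a tree, ``depth $\geq k$'', and ``some root-to-leaf path has $\geq k+1$ nodes'' are identified consistently with the nonemptiness condition for $\Edges^k$.
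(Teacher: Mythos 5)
Your proposal is correct and follows essentially the same route as the paper's proof: both rest on Lemma~\ref{lem:closed_homo} plus the (injective) homomorphisms of Lemma~\ref{lem:simple_homo}, establishing monotonicity of nonemptiness in chain length and the equivalence of a tree with a chain of equal depth, then taking the minimal witnessing depth $k$. The only difference is organizational (you settle chains first and then transfer to trees, whereas the paper proves the tree--chain indistinguishability first), which does not change the substance.
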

\begin{proof}
Let $\Tree = (\Nodes, \Alphabet, \ELabels)$ be a tree and let $\Chain' = (\Nodes', \Alphabet, \ELabels')$ be a chain such that $\depth(\Tree) + 1 = \abs{\Nodes'}$. By Lemma~\ref{lem:simple_homo}, there exists a homomorphism $h_1 : \Nodes \rightarrow \Nodes'$ from $\Tree$ to $\Chain'$ and a homomorphism $h_2 : \Nodes' \rightarrow \Nodes$ from $\Chain'$ to $\Tree$. Hence, by Lemma~\ref{lem:closed_homo}, we have, for every navigational expression $\expr'$ in $\Lang(\Fragment)$, $\Apply{\expr}{\Tree} \neq \emptyset$ if and only if $\Apply{\expr}{\Chain'} \neq \emptyset$. As a consequence, we can conclude that no navigational expression in $\Lang(\Fragment)$ can distinguish between trees and chains of equal depth.

Let $\Chain = (\Nodes, \Alphabet, \ELabels)$ and $\Chain' = (\Nodes', \Alphabet, \ELabels')$ be chains such that $\abs{\Nodes} < \abs{\Nodes'}$. By Lemma~\ref{lem:simple_homo}, there exists a homomorphism $h : \Nodes \rightarrow \Nodes'$ from $\Chain$ to $\Chain'$. Hence, by Lemma~\ref{lem:closed_homo}, we have, for every navigational expression $\expr'$ in $\Lang(\Fragment)$, $\Apply{\expr}{\Chain} \neq \emptyset$ implies  $\Apply{\expr}{\Chain'} \neq \emptyset$. As a consequence, we can conclude that, if a navigational expression holds on a chain, it also holds on every chain of greater depth.

Let $\expr$ be a navigational expression in $\Lang(\Fragment)$. We choose $\Chain = (\Nodes, \Alphabet, \ELabels)$ to be the chain with minimum depth such that $\Apply{\expr}{\Chain} \neq \emptyset$. If no such chain exists, then, by the two properties shown above, $\expr$ is boolean-equivalent to $\emptyset$. Since $\Chain$ is also the chain with minimum depth such that $\Apply{\Edges^k}{\Chain} \neq \emptyset$, with $k = \abs{\Nodes} - 1$, we may conclude, by the two properties shown above, that $\Edges^k$ and $\expr$ are boolean-equivalent.

The case for $\Lang(\Fragment \union \{ \diversity \})$ on unlabeled chains is analogous, taking into account that on chains we can always construct injective homomorphisms.
\end{proof}

As a consequence, we have the following collapses.

\begin{corollary}\label{cor:collapse_lang}
Let $\Fragment \subseteq \{ \convop, \transop, \proop, \intersect \}$. On unlabeled trees, we have $\Lang(\Fragment) \bool\LeqExpr \Lang()$, and, on unlabeled chains, we have $\Lang(\Fragment \union \{ \diversity \}) \bool\LeqExpr \Lang()$.
\end{corollary}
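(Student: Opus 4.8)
The plan is to derive Corollary~\ref{cor:collapse_lang} directly from Proposition~\ref{pro:depthatk} together with the bookkeeping facts recorded in the preliminaries. The key observation is that Proposition~\ref{pro:depthatk} already does all the heavy lifting: it tells us that every expression in the relevant fragment is boolean-equivalent, on the appropriate class of unlabeled structures, either to $\emptyset$ or to $\Edges^k$ for some fixed $k \geq 0$. Both of these targets are expressible in $\Lang()$ itself (for $\emptyset$ we invoke the $\emptyset$ operator, and $\Edges^k$ on an unlabeled graph is just $\Label^k$ where $\Label$ is the single edge label, a composition of $\Label$ with itself $k$ times, which only uses $\identity$, $\Label$, and $\compose$). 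So the argument is essentially a two-line deduction: take an arbitrary $\expr \in \Lang(\Fragment)$, apply Proposition~\ref{pro:depthatk} to obtain a boolean-equivalent member of $\Lang()$, and conclude $\Lang(\Fragment) \bool\LeqExpr \Lang()$; the case $\Lang(\Fragment \union \{\diversity\}) \bool\LeqExpr \Lang()$ on unlabeled chains is identical, using the second alternative in the statement of Proposition~\ref{pro:depthatk}.

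Concretely, first I would fix $\Fragment \subseteq \{\convop, \transop, \proop, \intersect\}$ and an arbitrary navigational expression $\expr \in \Lang(\Fragment)$. Second, I would split on whether $\expr$ is boolean-equivalent to $\emptyset$ on unlabeled trees: if so, $\emptyset \in \Lang()$ witnesses the equivalence; if not, Proposition~\ref{pro:depthatk} supplies a $k \geq 0$ with $\expr$ boolean-equivalent to $\Edges^k$ on unlabeled trees, and since the structures are unlabeled, $\Edges = \Label$ for the unique label $\Label$, so $\Edges^k \in \Lang()$. Third, since $\expr$ was arbitrary, this establishes $\Lang(\Fragment) \bool\LeqExpr \Lang()$ on unlabeled trees by definition of $\bool\LeqExpr$. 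Fourth, I would repeat the same three steps verbatim for $\Lang(\Fragment \union \{\diversity\})$ on unlabeled chains, now invoking the parenthetical version of Proposition~\ref{pro:depthatk}.

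There is essentially no obstacle here — the corollary is a packaging of Proposition~\ref{pro:depthatk} into the language-subsumption vocabulary, so the only thing to be careful about is the trivial remark that $\Edges^k$ and $\emptyset$ both lie in $\Lang()$, which is immediate from the definition of $\Lang()$ (it contains $\emptyset$, $\identity$, the edge labels, $\compose$, and $\union$). One might note, if one wants to be scrupulous, that Lemma~\ref{lem:remove_emptyset} is not even needed: we are allowed to use $\emptyset$ freely, and $\Lang()$ does include it. The closest thing to a subtlety is making sure the quantifier structure in the definition of $\bool\LeqExpr$ is respected, namely that we produce, for \emph{each} query in the larger fragment, \emph{some} boolean-equivalent query in $\Lang()$ — which is exactly what the case split delivers.
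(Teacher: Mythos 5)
Your proposal is correct and follows exactly the paper's own argument: the corollary is derived directly from Proposition~\ref{pro:depthatk} by noting that $\emptyset$ and $\Edges^k$ are both expressible in $\Lang()$. Your version merely spells out the case split and the quantifier bookkeeping that the paper's one-sentence proof leaves implicit.
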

\begin{proof}
By Proposition~\ref{pro:depthatk}, we must only be able to express $\emptyset$ and $\Edges^k$, for $0 \leq k$, which are both already expressible in $\Lang()$.
\end{proof}

Using $\coproop$, we can easily distinguish chains from longer chains. Hence, we conclude the following:

\begin{proposition}\label{prop:unl_coproop}
Let $\Fragment \subseteq \{ \diversity, \convop, \transop, \proop, \intersect \}$. On unlabeled chains, we have $\Lang(\coproop) \bool\nLeqExpr \Lang(\Fragment)$.
\end{proposition}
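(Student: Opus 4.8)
The plan is to exhibit one boolean query in $\Lang(\coproop)$ whose truth value is \emph{not} monotone along the homomorphisms from shorter chains to longer chains, and then to invoke Proposition~\ref{pro:depthatk} to conclude that no fragment $\Lang(\Fragment)$ with $\Fragment \subseteq \{\diversity, \convop, \transop, \proop, \intersect\}$ can be boolean-equivalent to it on unlabeled chains. The witness I would use is $\expr \GETS \coproject{2}{\Edges} \compose \coproject{1}{\Edges}$, which clearly lies in $\Lang(\coproop)$, since it uses only $\Edges$, composition, and the two coprojections.

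The first step is to read off the semantics of $\expr$ on an unlabeled chain $\Chain$. Here $\coproject{2}{\Edges}$ evaluates to $\{(r,r)\}$, where $r$ is the root (the unique node with no incoming edge), and $\coproject{1}{\Edges}$ evaluates to $\{(\ell,\ell)\}$, where $\ell$ is the unique leaf (the unique node with no outgoing edge). As both relations are subsets of $\Apply{\identity}{\Chain}$, their composition $\Apply{\expr}{\Chain}$ equals $\{(r,r)\}$ if $r=\ell$ and equals $\emptyset$ otherwise; that is, $\Apply{\expr}{\Chain} \neq \emptyset$ if and only if $\Chain$ consists of a single node. In particular, writing $\Chain_1$ for the one-node chain and $\Chain_2$ for the two-node chain, we have $\Apply{\expr}{\Chain_1} \neq \emptyset$ and $\Apply{\expr}{\Chain_2} = \emptyset$.

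The second step is the contradiction argument. Suppose some $\Fragment \subseteq \{\diversity, \convop, \transop, \proop, \intersect\}$ admits an expression $\expr' \in \Lang(\Fragment)$ boolean-equivalent to $\expr$ on unlabeled chains. Any such $\Fragment$ is contained in $\Fragment_0 \union \{\diversity\}$ for some $\Fragment_0 \subseteq \{\convop,\transop,\proop,\intersect\}$, so Proposition~\ref{pro:depthatk} applies: either $\expr'$ is boolean-equivalent on unlabeled chains to $\emptyset$, or there is a $k \geq 0$ with $\expr'$ boolean-equivalent on unlabeled chains to $\Edges^k$. The case $\emptyset$ is impossible because $\Apply{\expr}{\Chain_1} \neq \emptyset = \Apply{\emptyset}{\Chain_1}$. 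If $k \geq 1$, then $\Apply{\Edges^k}{\Chain_1} = \emptyset$ (the one-node chain has no edges) whereas $\Apply{\expr}{\Chain_1} \neq \emptyset$, a contradiction. If $k = 0$, then $\Edges^0 = \identity$, so $\Apply{\Edges^0}{\Chain_2} = \Apply{\identity}{\Chain_2} \neq \emptyset$ whereas $\Apply{\expr}{\Chain_2} = \emptyset$, again a contradiction. Hence no such $\expr'$ exists, which is exactly $\Lang(\coproop) \bool\nLeqExpr \Lang(\Fragment)$.

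I do not expect a genuine obstacle here: the classification of homomorphism-closed queries on unlabeled chains as ``$\emptyset$ or the height is at least $k$'' has already been carried out in Proposition~\ref{pro:depthatk}, and what remains is a routine semantic computation. The only point that requires a little care is keeping the witness query strictly inside $\Lang(\coproop)$ — in particular, not accidentally using $\intersect$ or $\difference$ to combine the two identity-subrelations $\coproject{2}{\Edges}$ and $\coproject{1}{\Edges}$; using composition instead sidesteps this entirely.
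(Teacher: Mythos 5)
Your proof is correct and follows essentially the same route as the paper: exhibit a query in $\Lang(\coproop)$ whose truth value on unlabeled chains depends non-monotonically on the depth, and invoke Proposition~\ref{pro:depthatk} to rule out any boolean-equivalent expression in $\Lang(\Fragment)$. The only difference is the witness --- the paper uses $\coproject{2}{\Edges} \compose \Edges \compose \coproject{1}{\Edges}$ (non-empty only on a chain of one fixed positive depth) while you use $\coproject{2}{\Edges}\compose\coproject{1}{\Edges}$ (non-empty only on the one-node chain) --- and both work, with your explicit split into the cases $k=0$ and $k\geq 1$ being a slightly more careful writeup of the same contradiction.
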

\begin{proof}
Let $\expr$ be the expression $\coproject{2}{\Edges} \compose \Edges \compose \coproject{1}{\Edges}$. If a chain $\Chain$ has depth $2$, then we have $\Apply{\expr}{\Chain} \neq \emptyset$. For all chains $\Chain'$ with a depth other than $2$, we have $\Apply{\expr}{\Chain'} = \emptyset$. Hence, Proposition~\ref{pro:depthatk} shows that no navigational expression in $\Lang(\Fragment)$ is boolean-equivalent to $\expr$. 
\end{proof}

Besides the above results, we use homomorphisms to show that $\Lang()$ and $\Lang(\transop)$ cannot properly distinguish between labeled trees and labeled chains. This result is then used to show that, on labeled trees, $\Lang()$ and $\Lang(\transop)$ cannot express all boolean queries expressible by $\Lang(\proop)$:

\begin{lemma}\label{lem:nonempty_chain_l}
Let $\Fragment \subseteq \{ \transop \}$. Let $\expr$ be a navigational expression in $\Lang(\Fragment)$. If, on labeled trees, $\expr$ is not boolean-equivalent to $\emptyset$, then there exists a labeled chain $\Chain$ such that $\Apply{\expr}{\Chain} \neq \emptyset$. 
\end{lemma}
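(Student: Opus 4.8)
The plan is to show that any navigational expression $\expr$ in $\Lang(\transop)$ that is satisfiable on some labeled tree is already satisfiable on some labeled chain, by extracting a chain-shaped ``witness'' from a successful evaluation on a tree. The key observation is that expressions in $\Lang(\transop)$ are built only from $\emptyset$, $\identity$, edge labels, $\compose$, $\union$, and $\transop$; crucially, there is no $\proop$, $\coproop$, $\intersect$, or $\difference$, and in particular no branching: the value $\Apply{\expr}{\Tree}$ is a union of relations each of which is ``witnessed'' by a single directed path in $\Tree$. So if $(m,n) \in \Apply{\expr}{\Tree}$ for some labeled tree $\Tree$, there is a directed path $P$ from $m$ to $n$ in $\Tree$ that witnesses this membership. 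First I would make this precise by an induction on the structure of $\expr$: I claim that for every $\expr$ in $\Lang(\transop)$ and every pair $(m,n) \in \Apply{\expr}{\Tree}$, the labeled subchain of $\Tree$ induced by the directed path from $m$ to $n$ — call it $\Chain_{m,n}$, rerooted at $m$ — already satisfies $(m', n') \in \Apply{\expr}{\Chain_{m,n}}$, where $m', n'$ are the copies of $m, n$. The base cases ($\emptyset$, $\identity$, $\Label$) are immediate since such a path has length $0$ or $1$. For $\expr_1 \union \expr_2$ the witnessing path lies in one of the two disjuncts, so we recurse into that disjunct. For $\expr_1 \compose \expr_2$, there is an intermediate node $z$ on the path from $m$ to $n$ with $(m,z) \in \Apply{\expr_1}{\Tree}$ and $(z,n) \in \Apply{\expr_2}{\Tree}$ (here we use that $\Lang(\transop)$ is downward, so $z$ lies between $m$ and $n$ on the unique tree path); applying the induction hypothesis to each factor gives witnessing subchains that, because they are segments of the same path from $m$ to $n$, glue together consistently into $\Chain_{m,n}$. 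For $\transitive{\expr''}$, the pair $(m,n)$ comes from a chain $m = x_0, x_1, \dots, x_k = n$ of consecutive $\expr''$-steps, each step downward, so the $x_i$ appear in order along the tree path; apply the induction hypothesis to each $(x_{i-1}, x_i)$ and glue.

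Once this structural claim is established, the lemma follows directly: if $\expr$ is not boolean-equivalent to $\emptyset$ on labeled trees, pick a labeled tree $\Tree$ with $\Apply{\expr}{\Tree} \neq \emptyset$, choose any $(m,n) \in \Apply{\expr}{\Tree}$, and take $\Chain = \Chain_{m,n}$; then $\Apply{\expr}{\Chain} \neq \emptyset$, and $\Chain$ is a labeled chain. Alternatively — and this is perhaps the cleaner packaging given the tools already available in the paper — one could avoid the hands-on induction and instead argue via homomorphisms: the path $P$ from $m$ to $n$ in $\Tree$, viewed as an abstract labeled chain $\Chain$, admits a homomorphism $h : \Chain \to \Tree$ (the inclusion of the path), so by Lemma~\ref{lem:closed_homo} (since $\Lang(\transop)$ is closed under homomorphisms) any pair in $\Apply{\expr}{\Chain}$ maps into $\Apply{\expr}{\Tree}$ — but that is the wrong direction. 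The direction we need is $\Tree \to \Chain$, which does not exist for an arbitrary labeled tree and a single path. Hence the homomorphism machinery of Section~\ref{ss:homo} does not short-circuit this, and the structural ``witness-path'' induction above really is the right route; the point is simply that the absence of branching operators ($\proop$, $\coproop$, $\intersect$) means a single root-to-node path always suffices as a witness.

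The main obstacle is the $\compose$ and $\transop$ cases of the induction: one must check that the witnessing subchains produced for the sub-expressions are mutually compatible, i.e.\ that they are all subchains of one common path from $m$ to $n$ obtained by concatenating segments at the intermediate nodes, and that evaluating on this larger glued chain does not destroy the sub-witnesses. This compatibility is exactly where downwardness (Proposition-style observation that $\Lang(\transop)$ expressions only relate ancestors to descendants) is used: it forces the intermediate witness node $z$, and the intermediate nodes $x_1, \dots, x_{k-1}$ in the transitive-closure case, to lie \emph{on} the unique tree path from $m$ to $n$ in their natural order, so the segments fit together end-to-end without overlap conflicts. Care is also needed to phrase the induction hypothesis about an \emph{arbitrary} pair $(m,n)$ and its associated path rather than just about non-emptiness, since the compositional cases need the stronger pairwise statement; the final non-emptiness conclusion is then a trivial specialization. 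No estimates or case explosion beyond the four grammar constructs of $\Lang(\transop)$ are involved.
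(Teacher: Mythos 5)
Your proof is correct, but it runs in the opposite direction from the paper's. The paper proves the strengthened claim ``there is a chain $\Chain$ with root $r$ and leaf $l$ such that $(r,l)\in\Apply{\expr}{\Chain}$'' by \emph{synthesizing} a chain bottom-up over the structure of $\expr$: fresh one-node or one-edge chains for the base cases, concatenation of the two subexpression chains (merging leaf with root) for $\compose$, either chain for $\union$, and--- conveniently---simply the chain for $\expr''$ in the case $\transitive{\expr''}$, since $\Apply{\expr''}{\Chain}\subseteq\Apply{\transitive{\expr''}}{\Chain}$. You instead \emph{extract} a witness: starting from an arbitrary tree $\Tree$ with $(m,n)\in\Apply{\expr}{\Tree}$, you show the restriction of $\Tree$ to the unique downward path from $m$ to $n$ already satisfies $\expr$ on that pair. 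This forces you to carry a stronger induction hypothesis quantified over all pairs, to invoke downwardness to order the intermediate nodes $z$ and $x_1,\dots,x_{k-1}$ along the path, and to unfold $\transitive{\expr''}$ into $k$ individual $\expr''$-steps, none of which the paper needs. In exchange you obtain a genuinely stronger structural fact (evaluation of $\Lang(\transop)$ expressions on trees is preserved under restriction to the witnessing path), of which the lemma is an immediate corollary. One point you should make explicit: in your gluing step for $\compose$ and $\transop$, membership of a sub-witness in $\Apply{\expr_i}{\Chain_{m,z}}$ must be transferred to $\Apply{\expr_i}{\Chain_{m,n}}$, and this is exactly an application of Lemma~\ref{lem:closed_homo} to the inclusion homomorphism of the segment into the longer chain---the same device the paper uses for its concatenation step. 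Your aside that the homomorphism machinery goes ``the wrong direction'' is accurate only for transferring the whole result from $\Tree$ to $\Chain$ in one shot; you still need it, in the right direction, inside the induction.
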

\begin{proof}
We prove a slightly stronger claim, namely that there exists a labeled chain $\Chain$ with root $r$ and leaf $l$ such that $(r, l) \in \Apply{\expr}{\Chain}$. The proof is by induction on the structure of $\expr$. We assume that $\expr$ is not boolean-equivalent to $\emptyset$, and, by Lemma~\ref{lem:remove_emptyset}, is $\emptyset$-free. The base cases are $\expr$ with $\esize{\expr} = 0$ ($\expr \in \{ \identity, \Label \}$, with $\Label$ an edge label). For $\expr = \identity$, we choose $\Chain$ to be a single node and for $\expr = \Label$ a single edge labeled $\Label$.

Now assume that, for every expression $\expr'$ with $\esize{\expr'} < i$, there exists a labeled chain $\Chain'$ with root $r'$ and leaf $l'$ such that $(r', l') \in \Apply{\expr'}{\Chain'}$. Let $\expr$ be an expression with $\esize{\expr} = i$. We distinguish the following cases:
\begin{enumerate}
\item $\expr = \transitive{\expr''}$. By the semantics of $\transop$, we have $\Apply{\expr''}{\Chain} \subseteq \Apply{\transitive{\expr''}}{\Chain}$. As $\esize{\expr''} = i -1$, we can use the induction hypothesis to conclude that there exists a labeled chain $\Chain''$ with root $r''$ and leaf $l''$ such that $(r'',l'') \in \Apply{\expr''}{\Chain''} \subseteq \Apply{\transitive{\expr''}}{\Chain''}$.

\item $\expr = \expr_1 \compose \expr_2$. We have $\esize{\expr_1} < i$ and $\esize{\expr_2} < i$, and we apply the induction hypothesis to conclude that there exists labeled chains $\Chain_1 = (\Nodes_1, \Alphabet, \ELabels_1)$, $\Chain_2 = (\Nodes_2, \Alphabet, \ELabels_2)$ with roots $r_1$, $r_2$ and leafs $l_1$, $l_2$ such that $(r_1, l_1) \in \Apply{\expr_1}{\Chain_1}$ and  $(r_2, l_2) \in \Apply{\expr_2}{\Chain_2}$. Now consider the chain $\Chain = (\Nodes, \Alphabet, \ELabels)$ obtained by concatenating $\Chain_1$ and $\Chain_2$ (by merging $l_1$ and $r_2$ to a single node). Let $h_1 :\Nodes_1 \rightarrow \Nodes$ and $h_2 :\Nodes_2 \rightarrow \Nodes$ be the functions mapping nodes from $\Chain_1$ and $\Chain_2$ to the corresponding node in $\Chain$, hence, with $h_1(r_1)$ being the root of $\Chain$, $h_1(l_1) = h_2(r_2)$, and $h_2(l_2)$ being the leaf of $\Chain$. The functions $h_1$ and $h_2$ are homomorphisms from $\Chain_1$ to $\Chain$ and from $\Chain_2$ to $\Chain$, respectively. Hence, by Lemma~\ref{lem:closed_homo}, we have $(h_1(r_1), h_1(l_1)) \in \Apply{\expr_1}{\Chain}$ and $(h_2(r_2), h_2(l_2)) \in \Apply{\expr_2}{\Chain}$. By the semantics of $\compose$, we conclude $(h_1(r_1), h_2(l_2)) \in \Apply{\expr}{\Chain}$.

\item $\expr = \expr_1 \union \expr_2$. By the semantics of $\union$, we have $\Apply{\expr}{\Tree} = \Apply{\expr_1}{\Tree_1} \union \Apply{\expr_2}{\Tree_1}$. We have $\esize{\expr_1} < i$ and $\esize{\expr_2} < i$, and we apply the induction hypothesis to conclude that there exists labeled chains $\Chain_1$, $\Chain_2$ with roots $r_1$, $r_2$ and leaf $l_1$, $l_2$ such that $(r_1, l_1) \in \Apply{\expr_1}{\Chain_1}$ and  $(r_2, l_2) \in \Apply{\expr_2}{\Chain_2}$. Hence, using the semantics of $\union$, both $\Chain_1$ and $\Chain_2$ meet the required conditions. \qedhere
\end{enumerate}
\end{proof}

\begin{proposition}\label{prop:proop_branching}
Let $\Fragment \subseteq \{ \transop \}$. On labeled trees we have $\Lang(\proop) \bool\nLeqExpr \Lang(\Fragment)$ and $\Lang(\convop) \bool\nLeqExpr \Lang(\Fragment)$.
\end{proposition}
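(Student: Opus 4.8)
The plan is to exhibit a single labeled tree $\Tree$ and a navigational expression $\expr \in \Lang(\proop)$ such that $\Apply{\expr}{\Tree} \neq \emptyset$, while no expression in $\Lang(\transop)$ can match this behaviour. The natural witness is a tree encoding genuine branching: let $\Tree$ be the tree with root $r$ having two children $a$ and $b$, where the edge $(r,a)$ carries label $\Label_1$ and the edge $(r,b)$ carries label $\Label_2$ (with $\Label_1 \neq \Label_2$). Consider the expression $\expr = \project{1}{\Label_1} \compose \project{1}{\Label_2}$, which evaluates to $\{(r,r)\}$ on $\Tree$ but evaluates to $\emptyset$ on any labeled chain, since a chain node has at most one outgoing edge and hence cannot simultaneously be the source of a $\Label_1$-edge and a $\Label_2$-edge. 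So $\expr$ is a boolean query that is true on $\Tree$ but false on every labeled chain.

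The second step is to invoke Lemma~\ref{lem:nonempty_chain_l}. Suppose, for contradiction, that some $\expr' \in \Lang(\transop)$ is boolean-equivalent to $\expr$ on labeled trees. Then $\Apply{\expr'}{\Tree} \neq \emptyset$, so $\expr'$ is not boolean-equivalent to $\emptyset$ on labeled trees. By Lemma~\ref{lem:nonempty_chain_l}, there is a labeled chain $\Chain$ with $\Apply{\expr'}{\Chain} \neq \emptyset$. But boolean-equivalence then forces $\Apply{\expr}{\Chain} \neq \emptyset$, contradicting the observation above that $\expr$ is false on every labeled chain. Hence $\Lang(\proop) \bool\nLeqExpr \Lang(\transop)$, and since $\Fragment \subseteq \{\transop\}$, also $\Lang(\proop) \bool\nLeqExpr \Lang(\Fragment)$ by Proposition~\ref{prop:subleqexpr} together with the transitivity of $\bool\LeqExpr$.

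For the statement $\Lang(\convop) \bool\nLeqExpr \Lang(\Fragment)$, the plan is to reuse the same argument with a converse-based expression that detects the same branching. Note that $\project{1}{\expr}$ is expressible in $\Lang(\convop)$ via the identity $\project{1}{\expr} \equiv \expr \compose \converse{\expr} \intersect \identity$ — but $\intersect$ is not available in $\Lang(\convop)$ alone, so instead I would use a direct converse expression: on $\Tree$, the pair $(a,b)$ lies in $\Apply{\converse{\Label_1} \compose \Label_2}{\Tree}$, whereas on any labeled chain $\converse{\Label_1} \compose \Label_2$ is empty (a chain node cannot be reached by a $\Label_1$-edge and also start a $\Label_2$-edge unless it is traversed in both directions, which forces distinct incident edges at a single node — impossible on a chain where each node has one parent-edge and one child-edge with, in a suitable chain, these labels arranged so no node has an incoming $\Label_1$ and outgoing $\Label_2$; more simply, take $\Tree$ so that $a$ has no children, making $\converse{\Label_1}\compose\Label_2$ require node $b$ to be a child of $a$, which it is not). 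Applying Lemma~\ref{lem:nonempty_chain_l} to the hypothetical $\Lang(\transop)$-equivalent yields the same contradiction.

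The main obstacle is purely one of bookkeeping: choosing the witness tree and the converse/projection expression so that the expression is genuinely empty on \emph{all} labeled chains, not merely on chains that happen to avoid the relevant label pattern — one must verify that no labeled chain, regardless of how its finitely many edge labels are assigned, can make the expression nonempty. For the projection expression $\project{1}{\Label_1}\compose\project{1}{\Label_2}$ this is immediate from the single-child property; for a converse expression one must argue carefully, which is why I favour the projection witness for $\proop$ and would pick the cleanest possible converse witness (e.g.\ $\converse{\Label_1}\compose\converse{\Label_1}$-style expressions that demand two parents) for the $\convop$ case. Everything else is a direct application of Lemma~\ref{lem:nonempty_chain_l} and Proposition~\ref{prop:subleqexpr}.
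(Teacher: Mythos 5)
Your treatment of the first claim, $\Lang(\proop) \bool\nLeqExpr \Lang(\Fragment)$, is correct and is essentially the paper's own proof: the same witness $\project{1}{\Label_1}\compose\project{1}{\Label_2}$, the same observation that it is nonempty on a branching tree but empty on every labeled chain, and the same appeal to Lemma~\ref{lem:nonempty_chain_l} to rule out a boolean-equivalent expression in $\Lang(\Fragment)$.

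The $\convop$ half, however, contains a genuine gap. The paper uses exactly the witness you first name, $\converse{\Label_1}\compose\Label_2$, but your justification for why it is empty on every labeled chain misreads the semantics of composition with converse. By definition, $(m,n)$ is in the evaluation of $\converse{\Label_1}\compose\Label_2$ if and only if there is an intermediate node $z$ with $(z,m)$ a $\Label_1$-edge and $(z,n)$ a $\Label_2$-edge; it does not ask for a node that is ``reached by a $\Label_1$-edge and also starts a $\Label_2$-edge'' (that would be $\Label_1\compose\Label_2$), nor does it ``require $b$ to be a child of $a$''. The correct one-line argument is that on a chain the intermediate node $z$ has at most one child, so $m$ and $n$ would have to coincide and the single edge $(z,m)$ would have to carry both labels, which is excluded (the same implicit disjointness of edge relations that the $\proop$ argument relies on). This emptiness holds on \emph{every} labeled chain; your hedge ``in a suitable chain, these labels arranged so that no node has an incoming $\Label_1$ and outgoing $\Label_2$'' would not suffice, because Lemma~\ref{lem:nonempty_chain_l} only produces \emph{some} chain on which the hypothetical $\Lang(\Fragment)$-expression is nonempty, so the witness must be empty on all of them. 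Your proposed fallback, a $\converse{\Label_1}\compose\converse{\Label_1}$-style expression ``demanding two parents'', also does not work: that expression merely navigates two $\Label_1$-edges upward and is nonempty on a chain of two $\Label_1$-edges, while an expression genuinely requiring two distinct parents would be empty on every tree and hence useless as a witness. With the corrected emptiness argument for $\converse{\Label_1}\compose\Label_2$, the remainder of your reasoning goes through and coincides with the paper's.
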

\begin{proof}
Let $\Label_1$ and $\Label_2$ be two edge labels and let $\expr = \project{1}{\Label_1}\compose\project{1}{\Label_2}$ be a navigational expression in $\Lang(\proop)$. On labeled trees this expression evaluates to non-empty if and only if a node has two distinct outgoing edges labeled with $\Label_1$ and $\Label_2$, respectively. Hence, for all chains $\Chain = (\Nodes, \Alphabet, \ELabels)$, we have $\Apply{\expr}{\Chain} = \emptyset$. As $\expr$ never evaluates to non-empty on chains, we use Lemma~\ref{lem:nonempty_chain_l} to conclude that no navigational expression in $\Lang(\Fragment)$ is boolean-equivalent to $\expr$. For showing $\Lang(\convop) \bool\nLeqExpr \Lang(\Fragment)$, we use the above reasoning on the expression $\converse{\Label_1} \compose \Label_2$.
\end{proof}

\subsection{Automaton-based Results}\label{ss:auto}

Observe that $\Lang(\transop)$ and the regular path queries~\cite{rpq} are equivalent: queries in these query languages select pairs of nodes $m, n$ such that there exists a directed path from $m$ to $n$ whose labeling satisfies some regular expression. In the case of trees, this directed path is unique, which yields a strong relation between $\Lang(\transop)$ and the closure results under intersection and difference for regular languages~\cite{automata}. As a consequence, we can show, in a relative straightforward way, that $\Lang(\transop, \intersect, \difference) \path\LeqExpr \Lang(\transop)$.

\begin{example}\label{exam:intro_loop_int}
We can, for example, rewrite the navigational expressions $\transitive{\Label^3} \intersect \transitive{\Label^7}$ and $\transitive{\Label^3} \difference \transitive{\Label^7}$ to path-equivalent navigational expressions use neither intersection ($\intersect$) nor difference ($\difference$):
\begin{align*}
\transitive{\Label^3} \intersect \transitive{\Label^7} &\equiv \transitive{\Label^{21}};\\
\transitive{\Label^3} \difference \transitive{\Label^7} &\equiv \left(\Label^3 \union \Label^6 \union \Label^9 \union \Label^{12} \union \Label^{15} \union \Label^{18}\right) \compose \transitivestar{\Label^{21}}.
\end{align*}
Notice that this rewriting does not work on arbitrary graphs. Indeed, on the graph $\Graph$ in Figure~\ref{fig:3int7graph}, we have $\Apply{\transitive{\Label^3} \intersect \transitive{\Label^7}}{\Graph} \neq \emptyset$, whereas $\Apply{\transitive{\Label^{21}}}{\Graph} = \emptyset$.
\begin{figure}[htb!]
    \centering
    \begin{tikzpicture}[tree_style]
        \node[dot] (src) at (1, 1) {};
        \node[dot] (n1) at (3, 1) {} edge[<-] (src);
        \node[dot] (n2) at (5, 1) {} edge[<-] (n1);
        \node[dot] (m1) at (1.5, 2) {} edge[<-] (src);
        \node[dot] (m2) at (2.5, 2) {} edge[<-] (m1);
        \node[dot] (m3) at (3.5, 2) {} edge[<-] (m2);
        \node[dot] (m4) at (4.5, 2) {} edge[<-] (m3);
        \node[dot] (m5) at (5.5, 2) {} edge[<-] (m4);
        \node[dot] (m6) at (6.5, 2) {} edge[<-] (m5);
        \node[dot] (tgt) at (7, 1) {}  edge[<-] (n2) edge[<-] (m6);
        
        \node[left] at (src) {$\src\vphantom{\tgt}$};
        \node[right] at (tgt) {$\tgt\vphantom{\src}$};
    \end{tikzpicture}
    \caption{An acyclic directed graph, which is not a chain, a tree, or a forest. Observe that $\Label^3 \intersect \Label^7$ will return the node pair $(\src, \tgt)$.}\label{fig:3int7graph}
\end{figure}
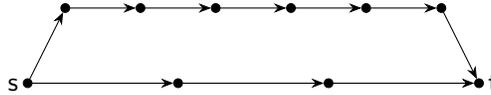
\end{example}

\subsubsection{Adapting closure results under intersection and difference}
For regular expressions, the closure results under intersection and difference are usually proven by first proving that regular expressions have the same expressive power as finite automata, and then proving that finite automata are closed under intersection and difference.  We extend these automata-based techniques to the languages $\Lang(\Fragment)$ with $\Fragment \subseteq \{ \transop, \proop, \coproop \}$ by introducing conditions on automaton states.

\begin{definition}
A navigational expression $\expr$ is a \emph{condition} if, for every graph $\Graph$, we have $\Apply{\expr}{\Graph} \subseteq \Apply{\identity}{\Graph}$.
\end{definition}

The conditions we consider in the following are expressions of the form $\emptyset$, $\identity$, $\project{1}{\expr}$, $\project{2}{\expr}$, $\coproject{1}{\expr}$, or $\coproject{2}{\expr}$.

We use these extended automata to prove that, on trees, $\Lang(\Fragment)$ is closed under intersection and $\Lang(\BASE{\Fragment})$ is closed under difference.

\begin{definition}
A \emph{condition automaton} is a $7$-tuple $\Automaton = (\States, \Alphabet, \Conditions, \Initials, \Finals, \Transitions, \StateConditions)$, where $\States$ is a set of states, $\Alphabet$ a set of transition labels, $\Conditions$  a set of condition expressions, $\Initials \subseteq \States$  a set of initial states, $\Finals \subseteq \States$ a set of final states, $\Transitions \subseteq \States \times (\Alphabet \cup \{ \identity \}) \times \States$ the transition relation, and $\StateConditions \subseteq \States \times \Conditions$ the state-condition relation. For a state $q \in \States$, we denote $\StateConditions(q) = \{ \cond \mid (q, \cond) \in \StateConditions \}$.

Let $\Fragment \subseteq \{ \transop, \proop, \coproop \}$. We say that $\Automaton$ is \emph{$\Fragment$-free} if every condition in $\Conditions$ is a navigational expression in $\Lang(\{ \transop, \proop, \coproop \} \difference \Fragment)$, we say that $\Automaton$ is \emph{acyclic} if the transition relation $\Transitions$ of $\Automaton$ is acyclic (viewed as a labeled graph over $\States \times \States$), and we say that $\Automaton$ is \emph{$\identity$-transition free} if $\Transitions \subseteq \States \times \Alphabet \times \States$.
\end{definition}

\begin{example}\label{exam:cta}
Consider the condition automaton  $\Automaton = (\States, \Alphabet, \Conditions, \Initials, \Finals,\allowbreak \Transitions, \StateConditions)$ with \begin{align*}
        \States         &= \{ q_1, q_2, q_3, q_4 \};\\
        \Alphabet         &= \{ \Label_1, \Label_2, \Label_3 \};\\
        \Conditions     &= \{ \identity, \project{2}{{\Label_1}^2}, \project{1}{{\Label_2}^3} \};\\
        \Initials       &= \{ q_1, q_4 \};\\
        \Finals         &= \{ q_3, q_4 \};\\
        \Transitions    &= \{ (q_1, \Label_1, q_2),  (q_1, \Label_3, q_4), (q_2, \Label_1, q_2), (q_2, \Label_2, q_3) \};\text{ and}\\
        \StateConditions&= \{ (q_1, \identity), (q_2, \project{1}{{\Label_1}^2}), (q_2, \project{2}{{\Label_2}^3}) \}.\end{align*}
This automaton is visualized in Figure~\ref{fig:example_automaton}. Using this visualization, it is easy to verify that the condition automaton is not acyclic (due to the $\Label_1$ labeled self-loop), is $\{ \coproop, \transop \}$-free, and is $\identity$-transition free.
\begin{figure}[htb!]
    \centering
    \begin{tikzpicture}[automata_style]
        \node        (s1)  at (-1, 0) {};
        \node[state] (n1) at (0, 0) {$q_1$};
        \node[state] (n2) at (2, 0) {$q_2$};
        \node[state,accepting] (n3) at (4, 0) {$q_3$};  
        \node        (s4)  at (-1, 2) {};
        \node[state,accepting] (n4) at (0, 2) {$q_4$};
        
        \node[conditions,below of=n1] {$\{ \identity \}$};
        \node[conditions,below of=n2] {$\{ \project{2}{{\Label_1}^2}, \project{1}{{\Label_2}^3} \}$};
        \node[conditions,below of=n3] {$\{ \}$};
        \node[conditions,above of=n4] {$\{ \}$};

        \begin{scope}[on background layer]
            \path[->] (s1) edge (n1)
                      (s4) edge (n4)
                      (n1) edge node[label] {$\Label_1$} (n2)
                      (n1) edge node[label,left] {$\Label_3$} (n4)
                      (n2) edge[loop] node[label] {$\Label_1$} (n2)
                      (n2) edge node[label] {$\Label_2$} (n3);
        \end{scope}
    \end{tikzpicture}
    \caption{An example of a condition automaton.}\label{fig:example_automaton}
\end{figure}
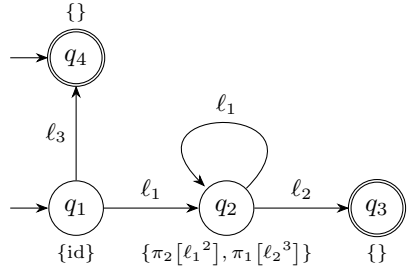
\end{example}

Observe that  condition automata are strongly related to finite automata, the main difference being that states in the automata have a set of conditions. In the evaluation of condition automata on trees, this set of conditions determines in which tree nodes a state can hold, which we define next.

\begin{definition}
Let $\Graph = (\Nodes, \Alphabet, \ELabels)$ be a graph and let $\Automaton = (\States, \Alphabet, \Conditions, \Initials, \Finals, \Transitions, \StateConditions)$ be a condition automaton. If $q\in \States$, then $\cexpr{q}$ denotes the expression $\cexpr{q} = \cond_1 \compose \dots \compose \cond_k$, with $\StateConditions(q) = \{ \cond_1, \dots, \cond_k \}$, unless $\StateConditions(q) = \emptyset$, in which case $\cexpr{q} = \identity$.\footnote{Since each term in $\StateConditions(q)$ is a condition, the compositions used in $\cexpr{q}$ are commutative, as such the ordering of the terms in $\StateConditions(q)$ is not relevant. The expression $\cexpr{q}$ is path-equivalent to the expression $\smash{\bigcap_{\cond \in \StateConditions(q) } \cond}$ and is also used to express the intersection of a set of conditions (without actually using intersection).} We say that a node $n \in \Nodes$ \emph{satisfies} state $q \in \States$ if $(n,n) \in \Apply{\cexpr{q}}{\Graph}$.

A \emph{run} of $\Automaton$ on $\Graph$ is a sequence $$(q_0, n_0) \Label_0 (q_1, n_1) \Label_1 \dots (q_{i-1}, n_{i-1}) \Label_{i-1} (q_i, n_i),$$ where  $q_0, \dots, q_i \in \States$, $n_0, \dots, n_i \in \Nodes$, $\Label_0, \dots, \Label_{i-1} \in \Alphabet \cup \{ \identity \}$, and the following conditions hold: \begin{enumerate}
\item for all $0 \leq j \leq i$, $n_j$ satisfies $q_j$;
\item for all $0 \leq j < i$, $(q_j, \Label_j, q_{j+1}) \in \Transitions$; and
\item for all $0 \leq j < i$, $(n_j, n_{j+1}) \in \Apply{\Label_j}{\Graph}$.
\end{enumerate}

We say that $\Automaton$ \emph{accepts} node pair $(m, n) \in \Nodes \times \Nodes$ if there exists a run $(q_0, m)\Label_0\dots(q_i, n)$ of $\Automaton$ on $\Graph$ with $q_0 \in \Initials$ and $q_i \in \Finals$. We define the \emph{evaluation} of $\Automaton$ on $\Graph$, denoted by $\Apply{\Automaton}{\Graph}$, as $\Apply{\Automaton}{\Graph} = \{ (m, n) \mid \text{$\Automaton$ accepts $(m, n)$} \}$. Using \emph{path query semantics}, $\Automaton$ on $\Graph$ evaluates to $\Apply{\Automaton}{\Graph}$, and using \emph{boolean query semantics}, $\Automaton$ on $\Graph$ evaluates to the truth value of $\Apply{\Automaton}{\Graph} \neq \emptyset$.
\end{definition}

\begin{example}
Consider the condition automaton of Example~\ref{exam:cta}, shown in Figure~\ref{fig:example_automaton}, and the graph shown in Figure~\ref{fig:example_auto_graph}. For this combination of a condition automaton and a graph, we can construct several accepting runs. Examples are the run $(q_1, r) \Label_3 (q_4, m)$, which semantically implies \[ (r, m) \in \Apply{\cexpr{q_1} \compose \Label_3 \compose \cexpr{q_4}}{\Graph} = \Apply{\identity \compose \Label_3 \compose \identity}{\Graph}, \] and $(q_1, n_1) \Label_1 (q_2, n_2) \Label_1 (q_2, n_3) \Label_2 (q_3, n_4)$, which semantically implies \begin{multline*}(n_1, n_2) \in \Apply{\cexpr{q_1} \compose \Label_1 \compose \cexpr{q_2} \compose \Label_1 \compose \cexpr{q_2} \compose \Label_2 \compose \cexpr{q_3}}{\Graph} ={}\\ \Apply{\identity \compose \Label_1 \compose \project{2}{{\Label_1}^2} \compose \project{1}{{\Label_2}^3}\compose \Label_1 \compose \project{2}{{\Label_1}^2} \compose \project{1}{{\Label_2}^3} \compose \Label_2 \compose \identity}{\Graph}.\end{multline*}

\begin{figure}[ht!]
    \centering
    \begin{tikzpicture}[tree_style,scale=0.75,xscale=0.5]
        \node[dot] (root) at (0  , 5)  {};
        \node[dot] (n1)   at (-1 , 4)  {} edge[<-] node[left,label] {$\Label_1$} (root);
        \node[dot] (n2)   at (-2 , 3)  {} edge[<-] node[left,label] {$\Label_1$} (n1);
        \node[dot] (n3)   at (-3 , 2)  {} edge[<-] node[left,label] {$\Label_1$} (n2);
        \node[dot] (n4)   at (-4 , 1)  {} edge[<-] node[left,label] {$\Label_2$} (n3);
        \node[dot] (m21)  at (-1 , 2)  {} edge[<-] node[right,label] {$\Label_2$} (n2);
        \node[dot] (m22)  at (0  , 1)  {} edge[<-] node[right,label] {$\Label_2$} (m21);
        \node[dot] (m23)  at (1  , 0)  {} edge[<-] node[right,label] {$\Label_2$} (m22);
        \node[dot] (m41)  at (-3 , 0)  {} edge[<-] node[right,label] {$\Label_2$} (n4);
        \node[dot] (m42)  at (-2 ,-1)  {} edge[<-] node[right,label] {$\Label_2$} (m41);
        \node[dot] (rr)   at (1, 4)    {} edge[<-] node[right,label] {$\Label_3$} (root);

        \node[right] at (root) {$r$};
        \node[right] at (rr) {$m$};
        \node[right] at (n1) {$n_1$};
        \node[right] at (n2) {$n_2$};
        \node[right] at (n3) {$n_3$};
        \node[right] at (n4) {$n_4$};
    \end{tikzpicture}
    \caption{A labeled tree in which six distinct nodes are named.}\label{fig:example_auto_graph}
\end{figure}
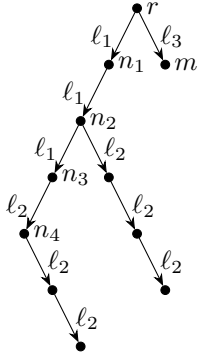
\end{example}

Our first goal is to show the path-equivalence of $\Lang(\Fragment)$, $\Fragment \subseteq \{ \transop, \proop, \coproop \}$, with a restricted class of condition automata, as summarized in Table~\ref{tbl:expr_cta}.

\begin{example}
Consider the condition automaton of Example~\ref{exam:cta}, shown in Figure~\ref{fig:example_automaton}. By carefully examining the automaton, one can conclude that it is path-equivalent to the navigational expression $$\Label_1 \compose \project{2}{{\Label_1}^2} \compose \project{1}{{\Label_2}^3} \compose \transitivestar{\Label_1 \compose \project{2}{{\Label_1}^2} \compose \project{1}{{\Label_2}^3}} \compose \Label_2 \union{}\\ \Label_3 \union \identity.$$
\end{example}

\begin{table}[htb!]
    \centering
    \begin{tabular}{l|l}
        Navigational language&Class of condition automata\\
        \hline
        \hline
        $\Lang()$& $\{\transop, \proop, \coproop\}$-free and acyclic.\\
        $\Lang(\proop)$& $\{\transop, \coproop\}$-free and acyclic.\\
        $\Lang(\proop, \coproop)$& $\{\transop\}$-free and acyclic.\\
        $\Lang(\transop)$& $\{\proop, \coproop \}$-free.\\
        $\Lang(\transop, \proop)$& $\{\coproop\}$-free.\\
        $\Lang(\transop, \proop, \coproop )$& no restrictions.\\
    \end{tabular}
    \caption{Navigational languages and the corresponding class of condition automata.}\label{tbl:expr_cta}
\end{table}

To show the path-equivalence of Table~\ref{tbl:expr_cta},  we first adapt standard closure properties for finite automata under composition, union, and Kleene plus to the setting of condition automata:

\begin{proposition}\label{prop:cta_close_basics}
Let $\Fragment \in \{ \transop, \proop, \coproop \}$ and let $\Automaton_1$ and $\Automaton_2$ be $\Fragment$-free condition automata. There exists $\Fragment$-free condition automata $\Automaton_\compose$, $\Automaton_\union$, and $\Automaton_\transop$ such that, for every graph $\Graph$,  $\Apply{\Automaton_\compose}{\Graph} = \Apply{\Automaton_1}{\Graph} \compose \Apply{\Automaton_2}{\Graph}$, $\Apply{\Automaton_\union}{\Graph} = \Apply{\Automaton_1}{\Graph} \union \Apply{\Automaton_2}{\Graph}$, and $\Apply{\Automaton_\transop}{\Graph} = \transitive{\Apply{\Automaton_1}{\Graph}}$. The condition automata $\Automaton_\compose$ and $\Automaton_\union$ are acyclic whenever $\Automaton_1$ and $\Automaton_2$ are acyclic.
\end{proposition}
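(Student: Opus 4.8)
The plan is to mimic the classical constructions for finite automata (Thompson-style constructions for union, concatenation, and Kleene plus), but to carry the state-condition relation $\StateConditions$ along and verify that it behaves correctly under disjoint union of state sets. Throughout, I would rename states so that $\Automaton_1$ and $\Automaton_2$ have disjoint state sets, and I would keep $\Conditions$ to be the union of the condition sets of $\Automaton_1$ and $\Automaton_2$; since each $\Automaton_j$ is $\Fragment$-free, every condition in the union is a navigational expression in $\Lang(\{\transop,\proop,\coproop\}\difference\Fragment)$, so all three constructed automata are again $\Fragment$-free. For $\Automaton_\union$, I would simply take the disjoint union of the two automata: $\States = \States_1 \uplus \States_2$, $\Initials = \Initials_1 \union \Initials_2$, $\Finals = \Finals_1 \union \Finals_2$, $\Transitions = \Transitions_1 \union \Transitions_2$, $\StateConditions = \StateConditions_1 \union \StateConditions_2$. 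A run of $\Automaton_\union$ stays entirely within one of the two components (no transitions cross), so accepting runs of $\Automaton_\union$ are exactly accepting runs of $\Automaton_1$ together with accepting runs of $\Automaton_2$; hence $\Apply{\Automaton_\union}{\Graph} = \Apply{\Automaton_1}{\Graph}\union\Apply{\Automaton_2}{\Graph}$, and acyclicity is preserved since we only take a disjoint union of acyclic transition graphs.

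For $\Automaton_\compose$, I would again take the disjoint union of state sets, set $\Initials = \Initials_1$, $\Finals = \Finals_2$, keep $\StateConditions = \StateConditions_1 \union \StateConditions_2$, and build $\Transitions$ from $\Transitions_1 \union \Transitions_2$ plus ``bridging'' $\identity$-transitions $(q, \identity, q')$ for every $q \in \Finals_1$ and $q' \in \Initials_2$. The key verification is that an accepting run of $\Automaton_\compose$ from some $(q_0,m)$ with $q_0 \in \Initials_1$ to some $(q_i,n)$ with $q_i \in \Finals_2$ must use exactly one bridging $\identity$-transition (it cannot cross back, since no transitions go from component $2$ to component $1$, and it must cross at least once to reach a final state in $\Finals_2$; it crosses at most once since after crossing it is permanently in component $2$). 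At the crossing point $(q,z)\,\identity\,(q',z)$ we have $z$ satisfying both $q \in \Finals_1$ and $q' \in \Initials_2$ and $(z,z)\in\Apply{\identity}{\Graph}$, so the prefix is an accepting run of $\Automaton_1$ ending at $(q,z)$ and the suffix an accepting run of $\Automaton_2$ starting at $(q',z)$; composing via the semantics of $\compose$ gives $(m,n)\in\Apply{\Automaton_1}{\Graph}\compose\Apply{\Automaton_2}{\Graph}$. Conversely, any pair in the composition yields such runs that glue together at the common intermediate node. Acyclicity: the added edges all go from component $1$ to component $2$, so the combined transition graph has no new cycles, and remains acyclic if both $\Automaton_1$ and $\Automaton_2$ are.

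For $\Automaton_\transop$, I would take $\Automaton_1$ and add $\identity$-transitions $(q, \identity, q')$ for every $q \in \Finals_1$ and every $q' \in \Initials_1$, keeping everything else (in particular $\StateConditions$ and $\Conditions$) unchanged, so that $\Automaton_\transop$ is again $\Fragment$-free. An accepting run of $\Automaton_\transop$ decomposes at its back-edges into a sequence of $k \geq 1$ accepting runs of $\Automaton_1$, with the endpoint of one block sharing its node with the startpoint of the next block (because the back-edge $(q,z)\,\identity\,(q',z)$ keeps the node fixed and $q\in\Finals_1$, $q'\in\Initials_1$); by the semantics of $\transop$ this witnesses membership in $\transitive{\Apply{\Automaton_1}{\Graph}}$, and conversely a chain of $k$ witnesses for $\Apply{\Automaton_1}{\Graph}$ can be glued using the back-edges into one accepting run of $\Automaton_\transop$. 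Note $\Automaton_\transop$ is generally not acyclic, consistent with the statement. The main obstacle, and the only genuinely non-routine point, is the bookkeeping that the satisfaction condition at a glue node is consistent: since the glue node $z$ is required to satisfy the last state of one block and, separately, the first state of the next block, and the $\identity$-transition imposes no further node condition, there is no conflict — but one must state this carefully and note that $(z,z)\in\Apply{\identity}{\Graph}$ always holds so the $\identity$-transition is always ``available'' at any node. Everything else is a direct transcription of the standard automaton constructions, with the conditions simply carried along inertly by the disjoint union of $\StateConditions$.
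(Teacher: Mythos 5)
Your constructions for union and composition coincide with the paper's, and your verification arguments (in particular the ``exactly one bridging $\identity$-transition'' observation for $\Automaton_\compose$) are sound; the only difference is that for $\Automaton_\transop$ the paper introduces two fresh condition-free states $v,w$ as sole initial and final states, with $\identity$-transitions from $v$ to $\Initials_1$, from $\Finals_1$ to $w$, and from $w$ back to $v$, whereas you add $\identity$-back-edges from $\Finals_1$ to $\Initials_1$ directly --- a cosmetic variation yielding the same accepted relation. In short, this is essentially the paper's proof (which is itself only a sketch of the constructions; you carry out the semantic verification in somewhat more detail).
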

\begin{proof}[Proof (sketch).]
Let $\Automaton_1 = (\States_1, \Alphabet_1, \Conditions_1, \Initials_1, \Finals_1, \Transitions_1, \StateConditions_1)$ and $\Automaton_2 = (\States_2, \Alphabet_2, \Conditions_2, \Initials_2, \Finals_2,\allowbreak \Transitions_2, \StateConditions_2)$ be $\Fragment$-free condition automata.  Without loss of generality, we may assume that $\States_1 \intersect \States_2 = \emptyset$. We define $\Automaton_\compose$, $\Automaton_\union$, and $\Automaton_\transop$ as follows:

\begin{enumerate}
\item $\Automaton_\compose = (\States_1 \union \States_2, \Alphabet_1 \union \Alphabet_2, \Conditions_1 \union \Conditions_2, \Initials_1, \Finals_2, \Transitions_1 \union \Transitions_2 \union \Transitions_\compose, \StateConditions_1 \union \StateConditions_2)$, in which $\Transitions_\compose = \{ (q_1, \identity, q_2) \mid (q_1 \in \Finals_1) \land (q_2 \in \Initials_2) \}$.
\item $\Automaton_\union = (\States_1 \union \States_2, \Alphabet_1 \union \Alphabet_2, \Conditions_1 \union \Conditions_2, \Initials_1 \union \Initials_2, \Finals_1 \union \Finals_2, \Transitions_1 \union \Transitions_2, \StateConditions_1 \union \StateConditions_2)$.
\item $\Automaton_\transop = (\States_1 \union \{ v, w \}, \Alphabet_1, \Conditions_1, \{ v \},  \{ w \}, \Transitions_1 \union \Transitions_\transop, \StateConditions_1)$, in which $v, w \notin \States_1$ are two distinct fresh states and $\Transitions_\transop = \{ (v, \identity, q) \mid q \in \Initials_1 \} \union \{ (q, \identity, w) \mid q \in \Finals_1 \} \union \{ (w, \identity, v) \}$.
\end{enumerate}

Observe that we did not add new condition expressions to the set of condition expressions in the proposed constructions. Hence, we conclude that $\Automaton_\compose$, $\Automaton_\union$, and $\Automaton_\transop$ are $\Fragment$-free whenever $\Automaton_1$ and $\Automaton_2$ are $\Fragment$-free. In  $\Automaton_\compose$ and $\Automaton_\union$, no new loops have been introduced, and, hence, $\Automaton_\compose$ and $\Automaton_\union$ are acyclic whenever $\Automaton_1$ and $\Automaton_2$ are acyclic.
\end{proof}

\begin{proposition}\label{prop:ne_to_cta}
Let $\Fragment \subseteq \{\transop, \proop, \coproop \}$. On labeled graphs, each navigational expression in $\Lang(\Fragment)$ is path-equivalent to some condition automaton in the class specified for $\Lang(\Fragment)$ in Table~\ref{tbl:expr_cta}.
\end{proposition}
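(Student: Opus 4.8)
The plan is to proceed by structural induction on the navigational expression $\expr \in \Lang(\Fragment)$, at each step maintaining the invariant that the condition automaton produced lies in the class prescribed for $\Lang(\Fragment)$ in Table~\ref{tbl:expr_cta}. First I would dispatch the base cases: for $\expr = \identity$ take a single state that is both initial and final with condition set $\{\identity\}$ and no transitions; for $\expr = \Label$ take two states joined by an $\Label$-labeled transition, both with empty condition sets; for $\expr = \emptyset$ take the automaton with an initial and a final state but no transitions linking them (or invoke Lemma~\ref{lem:remove_emptyset} to avoid $\emptyset$ entirely when the expression is not everywhere-empty). Each of these automata is trivially $\{\transop,\proop,\coproop\}$-free, acyclic, and $\identity$-transition free, so it lies in the smallest class, hence in every class.

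For the inductive step, the operators $\compose$, $\union$, and $\transop$ are handled directly by Proposition~\ref{prop:cta_close_basics}: given $\Fragment$-free automata path-equivalent to the subexpressions, the constructions $\Automaton_\compose$, $\Automaton_\union$, $\Automaton_\transop$ yield $\Fragment$-free automata path-equivalent to $\expr_1\compose\expr_2$, $\expr_1\union\expr_2$, and $\transitive{\expr_1}$, respectively; moreover $\Automaton_\compose$ and $\Automaton_\union$ preserve acyclicity, which is exactly what the table demands when $\transop\notin\Fragment$ (the $\transop$ case only arises when $\transop\in\Fragment$, where acyclicity is not required). The one subtlety to check is that $\Fragment$-freeness of the subautomata is inherited from the induction hypothesis: a subexpression of an expression in $\Lang(\Fragment)$ is itself in $\Lang(\Fragment)$, so its automaton is in the corresponding class, which by definition is $(\{\transop,\proop,\coproop\}\setminus\Fragment)$-free.

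The cases that require genuinely new work are $\project{1}{\expr'}$, $\project{2}{\expr'}$ (available when $\proop\in\Fragment$) and $\coproject{1}{\expr'}$, $\coproject{2}{\expr'}$ (available when $\coproop\in\Fragment$). Here the inductive hypothesis gives an automaton $\Automaton'$ for $\expr'$, but $\project{i}{\expr'}$ is a condition, and I need to package it as a single state whose condition set contains $\project{i}{\expr'}$ — which is only legitimate if $\project{i}{\expr'}$ is syntactically a condition expression of the allowed form, i.e.\ $\expr'$ must itself be a navigational expression, not an automaton. So the construction here is not automaton-to-automaton but rather: build a one-state automaton $\Automaton_{\proop}$ with a single initial-and-final state $q$, no transitions, and $\StateConditions(q) = \{\project{i}{\expr'}\}$; then $\Apply{\Automaton_{\proop}}{\Graph} = \Apply{\project{i}{\expr'}}{\Graph}$ by the definition of $\cexpr{q}$ and the satisfaction relation. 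For this to keep the automaton inside the table's class, I must verify that $\project{i}{\expr'}$ (resp. $\coproject{i}{\expr'}$) is a navigational expression in $\Lang(\{\transop,\proop,\coproop\}\setminus\Fragment)$ — which holds because $\expr'\in\Lang(\Fragment)$ and $\proop$ (resp. $\coproop$) is the operator being applied, so the condition uses only operators in $\Fragment$ that are permitted to appear inside conditions; but crucially the table says e.g.\ $\Lang(\proop)$ maps to "$\{\transop,\coproop\}$-free", so $\proop$-conditions are allowed, consistent with $\expr'$ possibly containing $\proop$ and the outermost being $\proop$.

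The main obstacle I anticipate is precisely this bookkeeping about which operators are allowed to occur inside condition expressions versus as automaton structure, and correspondingly which class the resulting automaton falls into. The delicate point is that a projection or coprojection cannot be applied to an arbitrary condition automaton in a structure-preserving way; it must be applied to an \emph{expression}. One clean way to manage this is to strengthen the induction so that it simultaneously tracks, for each subexpression, both a path-equivalent condition automaton and (when the subexpression happens to be a condition) the syntactic condition expression itself; then applying $\proop$ or $\coproop$ to a condition subexpression just prepends the new condition to a fresh one-state automaton, and applying it to a non-condition subexpression first requires rewriting $\project{i}{\expr'}$ using $\expr'$ directly as a navigational subexpression. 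Since $\expr'$ is given as part of the original expression $\expr$, this is available for free — the induction is really over the syntax tree of a fixed expression, so every subexpression is at hand as an expression. Once this is set up, checking that $\identity$-transition-freeness and acyclicity are preserved in the $\proop$/$\coproop$ cases is immediate (no transitions are added at all), and the theorem follows by combining the base cases, Proposition~\ref{prop:cta_close_basics} for $\compose$, $\union$, $\transop$, and the one-state construction for $\proop$, $\coproop$.
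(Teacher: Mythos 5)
Your proposal is correct and follows essentially the same route as the paper: structural induction with $\compose$, $\union$, and $\transitive{\cdot}$ handled by Proposition~\ref{prop:cta_close_basics}, and with $\project{i}{\expr'}$ and $\coproject{i}{\expr'}$ packaged syntactically as the condition of a single initial-and-final state (the paper simply lists these as base cases in Table~\ref{tbl:expr_cta_base}, which is exactly the resolution of the ``bookkeeping'' subtlety you identify). The only cosmetic divergence is your one-state automaton for $\identity$ versus the paper's two states joined by an $\identity$-transition; both are valid.
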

\begin{proof}
 Let $\expr$ be a navigational expression in $\Lang(\Fragment)$ and let $\Alphabet$ be the set of all edge labels used in $\expr$. We translate $\expr$ to a condition automaton using structural induction. The base cases are described in Table~\ref{tbl:expr_cta_base}. The inductive cases are expressions of the form  $\expr = \expr_1 \compose \expr_2$, $\expr = \expr_1 \union \expr_2$, or $\expr =\transitive{\expr_1}$ with $\expr_1$ and $\expr_2$ navigational sub-expressions. For each of the inductive cases, we use the constructions needed to prove Proposition~\ref{prop:cta_close_basics}.\qedhere

\begin{table}[htb!]
    \centering
    \begin{tabular}{l|l}
        $\expr$&Condition automaton\\
        \hline
        \hline
        $\emptyset$&$\Automaton = (\{v, w\}, \Sigma, \emptyset, \{v\}, \{w\},\emptyset, \emptyset)$\\
        $\identity$&$\Automaton = (\{v, w\}, \Sigma, \emptyset, \{v\}, \{w\},\{ (v, \identity, w) \}, \emptyset)$\\
        $\Label$& $\Automaton = (\{v, w\}, \Sigma, \emptyset, \{v\}, \{w\},\{ (v, \Label, w) \}, \emptyset)$\\
        $\project{1}{\expr'}$&$\Automaton = (\{v\}, \Sigma, \{ \project{1}{\expr'} \}, \{v\}, \{v\},\emptyset, \{(v, \project{1}{\expr'})\})$\\
        $\project{2}{\expr'}$&$\Automaton = (\{v\}, \Sigma, \{ \project{2}{\expr'} \}, \{v\}, \{v\},\emptyset, \{(v, \project{2}{\expr'})\})$\\
        $\coproject{1}{\expr'}$&$\Automaton = (\{v\}, \Sigma, \{ \coproject{1}{\expr'} \}, \{v\}, \{v\},\emptyset, \{(v, \coproject{1}{\expr'})\})$\\
        $\coproject{2}{\expr'}$&$\Automaton = (\{v\}, \Sigma, \{ \coproject{2}{\expr'} \}, \{v\}, \{v\},\emptyset, \{(v, \coproject{2}{\expr'})\})$\\
    \end{tabular}
    \caption{Basic building blocks used by the translation from navigational expressions to condition automata. In the table, $\Label$ is an edge label.}\label{tbl:expr_cta_base}
\end{table}
\end{proof}

\begin{proposition}\label{prop:cta_to_ne}
Let $\Fragment \subseteq \{\transop, \proop, \coproop \}$. On labeled graphs, each condition automaton in the class specified for $\Lang(\Fragment)$ in Table~\ref{tbl:expr_cta} is path-equivalent to some expression in $\Lang(\Fragment)$. 
\end{proposition}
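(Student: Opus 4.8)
The plan is to prove the converse of Proposition~\ref{prop:ne_to_cta} by a state-elimination argument, adapting the classical construction that turns a finite automaton into a regular expression~\cite{automata} to the setting of condition automata. Fix $\Fragment \subseteq \{\transop, \proop, \coproop\}$ and a condition automaton $\Automaton = (\States, \Alphabet, \Conditions, \Initials, \Finals, \Transitions, \StateConditions)$ in the class specified for $\Lang(\Fragment)$ in Table~\ref{tbl:expr_cta}. By inspection of that table, every condition in $\Conditions$ — and hence every expression $\cexpr{q}$, being a composition of conditions — is a navigational expression in $\Lang(\Fragment)$; moreover, $\Automaton$ is acyclic whenever $\transop \notin \Fragment$. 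These two facts are what make the fragment bookkeeping work out.

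The first step is to absorb the state conditions into the transitions, producing a \emph{generalized condition automaton} $\Automaton'$: its transitions carry arbitrary navigational expressions as labels, and it has a single initial state $s$ and a single final state $t$, both fresh. Concretely, for every $(p, \Label, q) \in \Transitions$ we add to $\Automaton'$ a transition from $p$ to $q$ labeled $\cexpr{p} \compose \Label$; for every $q_0 \in \Initials$ we add a transition from $s$ to $q_0$ labeled $\identity$; and for every $q_i \in \Finals$ we add a transition from $q_i$ to $t$ labeled $\cexpr{q_i}$. The evaluation of $\Automaton'$ on a graph $\Graph$ is defined as the union, over all directed paths from $s$ to $t$ with labels $e_0, \dots, e_{k-1}$, of $\Apply{e_0 \compose \dots \compose e_{k-1}}{\Graph}$. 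Comparing the two definitions of evaluation — and using that the combined label of the $\Automaton'$-path through $s, q_0, \dots, q_i, t$ is path-equivalent to $\cexpr{q_0} \compose \Label_0 \compose \cexpr{q_1} \compose \dots \compose \Label_{i-1} \compose \cexpr{q_i}$, precisely the expression a run of $\Automaton$ through $q_0, \dots, q_i$ encodes — one obtains $\Apply{\Automaton'}{\Graph} = \Apply{\Automaton}{\Graph}$ for every $\Graph$. Since $s$ has no incoming and $t$ no outgoing transition, $\Automaton'$ is acyclic whenever $\Automaton$ is.

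The second step is state elimination. Repeatedly choose a state $v \notin \{s, t\}$, let $e_{vv}$ be the label of the self-loop at $v$ (or $\emptyset$ if there is none) and $e_{pv}$, $e_{vq}$ the labels of the edges into and out of $v$, and for every surviving pair $(p, q)$ replace the label $e_{pq}$ (read as $\emptyset$ when absent) by $e_{pq} \union (e_{pv} \compose \transitivestar{e_{vv}} \compose e_{vq})$, then delete $v$; the standard argument shows each such step preserves the evaluation. When only $s$ and $t$ remain, the label $e_{st}$ on the unique edge is path-equivalent to $\Automaton'$, hence to $\Automaton$. Every label ever produced is built from the initial labels using only $\compose$, $\union$, and $\transitivestar{\cdot}$, and since $\transitivestar{\expr} = \identity \union \transitive{\expr}$, all labels lie in $\Lang(\Fragment \union \{\transop\})$, which equals $\Lang(\Fragment)$ when $\transop \in \Fragment$. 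When $\transop \notin \Fragment$, $\Automaton'$ is acyclic; since removing a vertex from a directed acyclic graph while adding the edges that pass through it again yields a directed acyclic graph, no self-loop is ever created during the elimination, so we may use the self-loop-free update $e_{pq} \union (e_{pv} \compose e_{vq})$, which uses only $\compose$ and $\union$, and therefore $e_{st} \in \Lang(\Fragment)$. Either way, $\Automaton$ is path-equivalent to an expression in $\Lang(\Fragment)$, which is what Proposition~\ref{prop:cta_to_ne} asserts; combined with Proposition~\ref{prop:ne_to_cta}, this yields the path-equivalence of Table~\ref{tbl:expr_cta}.

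The main obstacle is the correctness of the condition-absorption step: one must verify that placing the factor $\cexpr{p}$ on every \emph{outgoing} transition of $p$ (together with the $\identity$-labeled $s$-transitions and the $\cexpr{q_i}$-labeled $t$-transitions) reproduces the pattern $\cexpr{q_0} \compose \Label_0 \compose \dots \compose \Label_{i-1} \compose \cexpr{q_i}$ of the run semantics exactly — in particular, that a state occurring several times along a run (as happens for a cyclic automaton) has its condition checked once per occurrence, no more and no less, because the exit transition taken after each occurrence (a further transition, a self-loop, or the transition to $t$) supplies exactly one copy of that state's condition. Once this is pinned down, the remaining steps — the standard correctness of state elimination, the preservation of acyclicity under elimination, and the membership of all intermediate labels in $\Lang(\Fragment)$ — are routine.
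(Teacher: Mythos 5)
Your proof is correct and follows essentially the same route as the paper's: the paper likewise reduces to a single-initial/single-final automaton and then runs a state-elimination procedure (its Algorithm~\ref{alg:ctatone}) whose update rule $\expr_{p_1,p_2} \GETS \expr_{p_1,p_2} \union \expr_{p_1,q}\compose\transitivestar{\expr_{q,q}}\compose\expr_{q,p_2}$ is exactly your elimination step, with the same acyclicity argument disposing of $\transstarop$ when $\transop \notin \Fragment$. The only (immaterial) difference is where the state conditions are attached: the paper initializes each transition label as $\cexpr{q}\compose\Label\compose\cexpr{r}$, checking conditions at both endpoints and relying on their idempotence, whereas you place only the source condition on each outgoing transition and supply $\cexpr{q_i}$ on the edges into the final sink.
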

\begin{proof}
Let $\Automaton = (\States, \Alphabet, \Conditions, \Initials, \Finals, \Transitions, \StateConditions)$ be a condition automaton. Let $v, w \notin \States$ be two distinct fresh states. Let $\Automaton'  = (\States \union \{v, w\}, \Alphabet, \Conditions,\allowbreak \{v\}, \{w\}, \Transitions \union \Transitions_{v,w}, \StateConditions)$ with $\Transitions_{v,w} = \{ (v, \identity, q) \mid q \in \Initials \} \union \{ (q, \identity, w) \mid q \in \Finals \}$ be a condition automaton that is path-equivalent to $\Automaton$ and having only a single initial state and a single final state. We translate $\Automaton'$ into a navigational expression using Algorithm~\ref{alg:ctatone}.

\begin{algorithm}[ht!]
\caption{From condition automaton to navigational expression}\label{alg:ctatone}
\begin{algorithmic}[1]
\STATE We mark each state $q \in \States \union \{v, w\}$: $M[q] \GETS \true$
\STATE We construct navigational expressions $\expr_{q,r}$ between state $q \in \States \union \{v, w\}$ and $r \in \States \union \{v, w\}$ and initialize $$\expr_{q,r} \GETS \bigcup_{(q, \Label, r) \in \Transitions \union \Transitions_{v,w}} \cexpr{q} \compose \Label \compose \cexpr{r},$$ with $\expr_{q,r} \GETS \emptyset$ if there are no transitions between $q$ and $r$\label{line:inital_expr}
\WHILE{$\exists q\ (q \in  \States) \land (M[q] = \true)$}\label{line:step_while_base}
    \STATE Choose $q$ with $(q \in  \States) \land (M[q] = \true)$\label{line:while_start}
    \FOR{$p_1,p_2 \in  \States \union \{v, w\}$ with $q \notin \{ p_1,p_2 \}$}
        \STATE $\expr_{p_1,p_2} \GETS \expr_{p_1,p_2} \union \expr_{p_1,q}\compose\transitivestar{\expr_{q,q}}\compose\expr_{q, p_2}$\label{line:step_expr}
        \STATE If applicable, remove $\emptyset$ from $\expr_{p_1,p_2}$ or reduce $\expr_{p_1,p_2}$ to $\emptyset$
    \ENDFOR
    \STATE Unmark state $q$: $M[q] \GETS \false$\label{line:unmark_q}
\ENDWHILE
\RETURN $\expr_{v, w}$
\end{algorithmic}
\end{algorithm}

Let $\Graph = (\Nodes, \Alphabet, \ELabels)$ be a graph. We prove that the final navigational expression $\expr_{v,w}$ is path-equivalent to $\Automaton'$. We do so by proving the following invariants of Algorithm~\ref{alg:ctatone}:
\begin{enumerate}
\item \label{inv:emptypathfree} \emph{If no path exists from state $q_1$ to state $q_2$ of at least a single transition, with $q_1,q_2\in \States \cup \{ v, w\}$, then $\expr_{q_1, q_2} = \emptyset$.}

If there exists no path from $q_1$ to $q_2$, then also no transition exists from $q_1$ to $q_2$. Hence, we initialize $\expr_{q_1,q_2} = \emptyset$. After initialization, the value of $\expr_{q_1,q_2}$ only changes at line~\ref{line:step_expr}. As there exists no path from $q_1$ to $q_2$, we have one of the following three cases:
\begin{enumerate}
\item No path from $q_1$ to $q$ exists and there exists a path from $q$ to $q_2$. In this case we have $\expr_{q_1,q_2} = \emptyset \union \emptyset\compose\transitivestar{\expr_{q,q}}\compose\expr_{q, q_2}$ after line~\ref{line:step_expr}.
\item There exists a path from $q_1$ to $q$ and no path from $q$ to $q_2$ exists. In this case we have $\expr_{q_1,q_2} = \emptyset \union \expr_{q_1, q}\compose\transitivestar{\expr_{q,q}}\compose\emptyset$ after line~\ref{line:step_expr}.
\item No path from $q_1$ to $q$ exists and no path from $q$ to $q_2$ exists. In this case we have $\expr_{q_1,q_2} = \emptyset \union \emptyset\compose\transitivestar{\expr_{q,q}}\compose\emptyset$ after line~\ref{line:step_expr}.
\end{enumerate}
In all three cases, $\expr_{q_1,q_2}$ can be simplified to $\emptyset$ using Lemma~\ref{lem:remove_emptyset}.

\item \label{inv:loopfree} \emph{If $\Automaton'$ is acyclic, then $\expr_{q,q} = \emptyset$ for all $q \in \States \union \{ v, w \}$.}

Observe that $\Automaton'$ is acyclic if there exists no path from a state to itself of at least a single transition. Hence, by Invariant~\ref{inv:emptypathfree}, we have $\expr_{q,q} = \emptyset$.

\item \label{inv:exprinf} \emph{Every expression $\expr_{q_1, q_2}$, with $q_1,q_2\in \States \cup \{ v, w\}$, is a navigational expression in $\Lang(\Fragment)$.}

We initialize $\expr_{q_1, q_2}$ as either $\emptyset$ or a union of navigational expressions of the form $\cexpr{q_1} \compose \Label \compose \cexpr{q_2}$, with $\Label$ an edge label. Clearly, these navigational expressions are in $\Lang(\Fragment)$ if all condition expressions in $\Conditions$ are in $\Lang(\Fragment)$. After initialization, the value of $\expr_{q_1,q_2}$ only changes at line~\ref{line:step_expr}. Line~\ref{line:step_expr} does not introduce the operators $\proop$ and $\coproop$.

Line~\ref{line:step_expr} does introduce the operator $\transop$ via the operator $\transstarop$. The operator $\transstarop$ is only introduced for subexpressions of the form $\transitivestar{\expr_{q,q}}$. If $\Automaton'$ is acyclic, which must be the case when $\transop \notin \Fragment$, then, by Invariant~\ref{inv:loopfree}, we have $\expr_{q,q} = \emptyset$ and, using Lemma~\ref{lem:remove_emptyset}, we have $\transitivestar{\expr_{q,q}} = \transitivestar{\emptyset}$, which is path-equivalent to $\identity$.

\item \label{inv:exprtocta} \emph{If $(m,n) \in \Apply{\expr_{q_1,q_2}}{\Graph}$, with $q_1,q_2\in \States \cup \{ v, w\}$, then there exists a run $(t_1, m) \dots (t_i, n)$ of $\Automaton'$ on $\Graph$ with $t_1 = q_1$ and $t_i  = q_2$ that performs at least one transition.}

If at line~\ref{line:inital_expr} we have $(m,n) \in \Apply{\expr_{q_1,q_2}}{\Graph}$, then, by the initial construction of $\expr_{q_1,q_2}$ at line~\ref{line:inital_expr}, and, by the semantics of $\union$, there exists a subexpression $\cexpr{q_1}\compose \Label \compose \cexpr{q_2}$ in $\expr_{q_1,q_2}$ such that $(m,n) \in \Apply{\cexpr{q_1} \compose \Label \compose \cexpr{q_2}}{\Graph}$ and $(q_1, \Label, q_2) \in \Transitions \union \Transitions_{v,w}$. By the semantics of $\compose$, this implies $(m,m) \in \Apply{\cexpr{q_1}}{\Graph}$,  $(m,n) \in \Apply{\Label}{\Graph}$, and  $(n,n) \in \Apply{\cexpr{q_2}}{\Graph}$. Hence, we construct the run $(q_1, m) \Label (q_2,n)$ of $\Automaton'$ on $\Graph$.

Assume the Invariant holds before execution of line~\ref{line:step_expr}. Now consider the change made to $\expr_{q_1,q_2}$ when executing line~\ref{line:step_expr}. We denote the new value of $\expr_{q_1,q_2}$ by $\expr_{q_1,q_2}'$ for distinction. If $(m,n) \in \Apply{\expr_{q_1,q_2}'}{\Graph}$, then, by the construction of $\expr_{q_1,q_2}'$, there are two possible cases:
\begin{enumerate}
\item $(m,n) \in \Apply{\expr_{q_1,q_2}}{\Graph}$, in which case the Invariant can be applied to $\expr_{q_1,q_2}$ to provide the required run.
\item $(m,n) \notin \Apply{\expr_{q_1,q_2}}{\Graph}$ and $(m, n) \in \Apply{\expr_{q_1, q}\compose\transitivestar{\expr_{q,q}}\compose\expr_{q, q_2}}{\Graph}$. By the semantics of $\compose$, there exists nodes $m', n' \in \Nodes$ such that $(m, m') \in \Apply{\expr_{q_1, q}}{\Graph}$, $(m', n') \in \Apply{\transitivestar{\expr_{q,q}}}{\Graph}$, and $(n', n) \in \Apply{\expr_{q, q_2}}{\Graph}$. By applying the Invariant on $(m, m') \in \Apply{\expr_{q_1, q}}{\Graph}$ and $(n', n) \in \Apply{\expr_{q, q_2}}{\Graph}$, we conclude that there exists runs $(q_1, m) \dots (q, m')$ and $(q, n') \dots (q_2, n)$ of $\Automaton$ on $\Graph$. Since $\transitivestar{\expr_{q,q}} = \transitive{\expr_{q,q}} \union \identity$, there are two possible cases:
        \begin{enumerate}
            \item $(m', n') \in \Apply{\transitive{\expr_{q,q}}}{\Graph}$. By the semantics of $\transop$, there exists $k\in \NatPlus$ such that $(m',n') \in \Apply{\expr_{q,q}^k}{\Graph}$. By the semantics of $\compose$, there exists nodes $n_1, \dots, n_{k+1}$ with $m'= n_1$ and $n' = n_{k+1}$ such that, for $1 \leq i \leq k$, $(n_i, n_{i+1}) \in \Apply{\expr_{q,q}}{\Graph}$. By applying the Invariant on each $(n_i, n_{i+1}) \in \Apply{\expr_{q,q}}{\Graph}$, we conclude that there exists runs $(q, n_i) \dots (q, n_{i+1})$ of $\Automaton$ on $\Graph$. To construct the required run, we concatenate the runs $(q_1, m) \dots (q, m')$, $(q, n_1) \dots (q, n_2)$, $\dots$, $(q, n_k) \dots (q, n_{k+1})$, $(q, n') \dots (q_2, n)$.
            \item $(m', n') \in \Apply{\identity}{\Graph}$. Hence, $m' = n'$. To construct the required run, we concatenate the runs $(q_1, m) \dots (q, m')$ and $(q, n') \dots (q_2, n)$.
        \end{enumerate}
\end{enumerate}

\item \label{inv:increase_expr} \emph{If , at some point during the execution of the algorithm, $(m,n) \in \Apply{\expr_{q_1,q_2}}{\Graph}$, with $q_1,q_2\in \States \cup \{ v, w\}$, then $(m,n) \in \Apply{\expr_{q_1,q_2}}{\Graph}$ at all later steps.}

Follows immediately from inspecting line~\ref{line:step_expr} of the algorithm.

\item \label{inv:ctatoexpr} \emph{If $(q_0, n_0) \Label_0 (q_1, n_1) \Label_1 \dots (q_{i-1}, n_{i-1}) \Label_{i-1} (q_i, n_i)$ is a run of $\Automaton'$ on $\Graph$, with $M[q_1] = M[q_2] = \dots = M[q_{i-1}] = \false$, which performs at least one transition, then $(n_0, n_i) \in \Apply{\expr_{q_0,q_i}}{\Graph}$.}

First, we consider the runs that meet the conditions of the Invariant before execution of the while-loop starting at line~\ref{line:step_while_base}. Initially, at line~\ref{line:inital_expr}, all states are marked, and, hence, only runs of the form $(q_0, n_0) \Label (q_1, n_1)$ of $\Automaton'$ on $\Graph$ satisfy the restrictions. By the definition of a run, we have $(q_0, \Label, q_1) \in \Transitions \union \Transitions_{v,w}$. By the initial construction of $\expr_{q_0, q_1}$ at line~\ref{line:inital_expr}, $\expr_{q_0,q_1}$ is a union that includes the subexpression $\cexpr{q_0} \compose \Label \compose \cexpr{q_1}$. By the definition of a run, we have $(n_0, n_0) \in \Apply{\cexpr{q_0}}{\Graph}$, $(n_0, n_1) \in \Apply{\Label}{\Graph}$, and $(n_1, n_1) \in \Apply{\cexpr{q_1}}{\Graph}$. Hence, $(n_0, n_1) \in \Apply{\cexpr{q_0} \compose \Label \compose \cexpr{q_1}}{\Graph}$, yielding $(n_0, n_1) \in \Apply{\expr_{q_0,q_1}}{\Graph}$.

Next we consider the runs that meet the conditions of the Invariant due to changes made during execution of the while-loop starting at line~\ref{line:step_while_base}. Assume the Invariant holds before execution of line~\ref{line:while_start}. Now consider that we unmark state $q$ by executing line~\ref{line:unmark_q}. Let $$(q_0, n_0) \Label_0 (q_1, n_1) \Label_1 \dots (q_{i-1}, n_{i-1}) \Label_{i-1}(q_i, n_i)$$ be a run of $\Automaton'$ on $\Graph$ that performs at least one transition with $M[q_1] = M[q_2] = \dots = M[q_{i-1}] = \false$. We have two cases.
    \begin{enumerate}
        \item $q \notin \{ q_1, \dots, q_{i-1} \}$. Hence, by the Invariant, we had $(n_0, n_i) \in \Apply{\expr_{q_0,q_i}}{\Graph}$ at line~\ref{line:while_start}. By Invariant~\ref{inv:increase_expr}, we conclude that, when executing line~\ref{line:unmark_q}, we have $(n_0, n_i) \in \Apply{\expr_{q_0,q_i}}{\Graph}$.
        \item $q \in \{ q_1, \dots, q_{i-1} \}$.  During an iteration of the while loop at line~\ref{line:step_while_base}, the value of $\expr_{q_0, q_i}$ is changed by the execution of line~\ref{line:step_expr}. We denote the new value of $\expr_{q_0,q_i}$ by $\expr_{q_0,q_i}'$ for distinction. We have
            \begin{align*}
                \expr_{q_0,q_i}' &= \expr_{q_0,q_i} \union \expr_{q_0,q}\compose\transitivestar{\expr_{q,q}}\compose\expr_{q,q_i}\\
                                &= \expr_{q_0,q_i} \union \expr_{q_0,q}\compose\left(\transitive{\expr_{q,q}} \union \identity\right)\compose\expr_{q,q_i}\\
                                &= \expr_{q_0,q_i} \union \expr_{q_0,q}\compose\transitive{\expr_{q,q}}\compose\expr_{q,q_i} \union \expr_{q_0,q}\compose\identity\compose\expr_{q,q_i}
                \end{align*}
        
        Based on the number of occurrences of $q$ in $\{ q_1, \dots, q_{i-1} \}$, we again distinguish two cases::
        \begin{enumerate}
            \item There exists exactly one $j$, $1\leq j\leq i-1$, with $q_j = q$. We split the run into $(q_0, n_0)\dots(q_j, n_j)$ and $(q_j, n_j)\dots(q_i, n_i)$. At line~\ref{line:while_start}, we had $M[q_1] = \dots = M[q_{j-1}] = M[q_{j+1}] = \dots = M[q_{i-1}] = \false$. Hence, by the Invariant, we had $(n_0, n_j) \in \Apply{\expr_{q_0,q}}{\Graph}$ and $(n_j, n_i) \in \Apply{\expr_{q,q_i}}{\Graph}$ at line~\ref{line:while_start}. 
It follows that $(n_0,n_i) \in \Apply{\expr_{q_0,q} \compose \identity \compose \expr_{q,q_i}}{\Graph}$. Hence, we have $(n_0, n_i) \in \Apply{\expr_{q_0,q_i}'}{\Graph}$.
            
            \item There exists several $j$, $1\leq j\leq i-1$ with $q_j = q$. We split the run $(q_0, n_0) \Label_0 (q_1, n_1) \Label_1 \dots (q_{i-1}, n_{i-1}) \Label_{i-1}(q_i, n_i)$ at each $(q_j, n_j)$, $1 \leq j \leq i$, with $q_j = q$ resulting in runs $(q_1, n_1) \dots (q, n'_1)$, $(q, n'_1) \dots (q, n'_2)$, $\dots$, $(q, n'_{k-1}) \dots (q, n'_k)$, $(q, n'_k) \dots (q_i, n_i)$, with $1 \leq k$.
            
            At line~\ref{line:while_start} we had $M[q'] = \false$ for all $q' \in \{q_1, \dots, q_{i-1} \} \difference \{ q \}$. Hence, by the Invariant, we had $(n_1, n'_1) \in \Apply{\expr_{q_1, q}}{\Graph}$, $(n'_l, n'_{l+1}) \in \Apply{\expr_{q, q}}{\Graph}$, for $1 \leq l < k$, and $(n'_k, n_i) \in \Apply{\expr_{q, q_i}}{\Graph}$. It follows that $(n'_1,n'_k)\in \Apply{\transitive{\expr_{q,q}}}{\Graph}$ and $(n_0,n_i) \in \Apply{\expr_{q_0,q} \compose \transitive{\expr_{q,q}} \compose \expr_{q,q_i}}{\Graph}$. Hence, we have $(n_0, n_i) \in \Apply{\expr_{q_0,q_i}'}{\Graph}$.
        \end{enumerate}
 In both cases we use Invariant~\ref{inv:increase_expr} to conclude that, when executing line~\ref{line:unmark_q}, we have $(n_0, n_i) \in \Apply{\expr_{q_0,q_i}}{\Graph}$.        
    \end{enumerate}
\end{enumerate}

As $v \neq w$, $v$ is the only initial state, and $w$ is the only final state, each accepting run of $\Automaton'$ performs at least one transition. Hence Invariants~\ref{inv:exprtocta} and~\ref{inv:ctatoexpr} imply that $\Automaton'$ and the resulting navigational expression $\expr_{v,w}$ are path-equivalent. Invariant~\ref{inv:exprinf} implies that the resulting navigational expression is, as required, in $\Lang(\Fragment)$.
\end{proof}

Notice that we did not only prove path-equivalence between classes of condition automata and classes of the navigational expressions on general labeled graphs, but also provided constructive algorithms to translate between these classes.

In the following, we work towards showing that condition automata, when used as queries on \emph{trees}, are not only closed under $\compose$, $\union$, and $\transop$, as Proposition~\ref{prop:cta_close_basics} showed, but also under $\intersect$ and $\difference$. We then use this closure result to remove $\intersect$ and $\difference$ from navigational expressions that are used to query trees. The standard approach to constructing the intersection of two finite automata is by making their cross-product. In a fairly straightforward manner, we can apply a similar cross-product construction to condition automata, given that they are $\identity$-transition free. Observe that the $\identity$-labeled transitions fulfill a similar role as empty-string-transitions in finite automata and, as such, can be removed, which we show next.

\begin{definition}
Let $\Automaton = (\States, \Alphabet, \Conditions, \Initials, \Finals, \Transitions, \StateConditions)$ be a condition automaton. The pair $(q, \{ q, q_1, \dots, q_i\})$, with $q, q_1, \dots, q_i \in \States$, is an \emph{identity pair} of $\Automaton$ if there exists a path $q\,\identity\,q'_1\dots\,\identity\, q'_j$ in $\Automaton$ with $\{q, q_1, \dots, q_i\} = \{q, q'_1, \dots, q'_j \}$.
\end{definition}

\begin{lemma}\label{lem:identity_free}
Let $\Fragment \in \{ \transop, \proop, \coproop \}$ and let $\Automaton$ be an $\Fragment$-free condition automaton. There exists an $\identity$-transition free and $\Fragment$-free condition automaton $\Automaton_\nIdentity$ that is path-equivalent to $\Automaton$ with respect to labeled graphs. The condition automaton $\Automaton_\nIdentity$ is acyclic whenever $\Automaton$ is acyclic.
\end{lemma}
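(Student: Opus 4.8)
The plan is to imitate the classical transformation that removes $\varepsilon$-transitions from a finite automaton, now carrying the state conditions along. Write $\Automaton = (\States, \Alphabet, \Conditions, \Initials, \Finals, \Transitions, \StateConditions)$. I would take as the states of $\Automaton_\nIdentity$ all identity pairs $(q,P)$ of $\Automaton$, keep the same condition set $\Conditions$, and assign to the state $(q,P)$ the conditions $\bigcup_{p\in P}\StateConditions(p)$. I would declare $(q,P)$ to be initial exactly when $q\in\Initials$ and final exactly when $P\intersect\Finals\neq\emptyset$, and I would put a transition $((q,P),\Label,(q',P'))$, for $\Label\in\Alphabet$, exactly when $(p,\Label,q')\in\Transitions$ for some $p\in P$, with $(q',P')$ ranging over all identity pairs whose first component is $q'$. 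This definition uses no $\identity$-labelled transitions and introduces no new condition expressions, so $\Automaton_\nIdentity$ is $\identity$-transition free and $\Fragment$-free whenever $\Automaton$ is; the number of identity pairs may be exponential in $\abs{\States}$, but that is irrelevant for an existence statement. Throughout the proof I would use the elementary fact, already recorded in the paper, that because every expression in $\Conditions$ is a condition, a node satisfies a composition of conditions if and only if it satisfies each of them; consequently a node satisfies the state $(q,P)$ of $\Automaton_\nIdentity$ if and only if it satisfies every state of $P$ in the sense of $\Automaton$.

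For path-equivalence I would establish both inclusions by translating accepting runs on an arbitrary labeled graph $\Graph$. Given an accepting run of $\Automaton$, I would cut it at its $\Alphabet$-labelled transitions; this splits it into maximal blocks of $\identity$-transitions, along each of which the current node stays fixed (say at the successive nodes $x_0,\dots,x_t$) and along which the visited states form the second component $P_s$ of an identity pair. From this one reads off an accepting run of $\Automaton_\nIdentity$ over the nodes $x_0,\dots,x_t$ and the same real labels, visiting identity pairs with second components $P_0,\dots,P_t$: its first state is initial because its first component is the state $q_0\in\Initials$ at which the original run started, its last state is final because the original run ended in a state of $P_t\intersect\Finals$, each transition exists because the relevant $\Alphabet$-transition departs from a state of $P_s$, and $x_s$ satisfies the $s$-th state because in the original run $x_s$ satisfied every state of the $s$-th block. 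Conversely, from an accepting run $((q^0,P^0),y_0)\dots((q^t,P^t),y_t)$ of $\Automaton_\nIdentity$ — all of whose transitions carry $\Alphabet$-labels, since $\Automaton_\nIdentity$ is $\identity$-transition free — I would expand each visited state $(q^s,P^s)$ back into an $\identity$-path of $\Automaton$ witnessing that identity pair, rerouted to run from $q^s$ to the state $p^s\in P^s$ from which the next real transition leaves (and, for the final state, to some state of $P^t\intersect\Finals$), all performed at the node $y_s$; each state so reinserted is satisfied by $y_s$ because $y_s$ satisfies the composite condition of $(q^s,P^s)$ and hence every state of $P^s$. The degenerate cases — a run of $\Automaton$ consisting only of $\identity$-transitions, and a zero-transition run of $\Automaton_\nIdentity$ — are absorbed by the same correspondence, using that $(q,\{q\})$ is always an identity pair.

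For the acyclicity claim I would argue by contraposition. A cycle $(q_1,P_1)\to\dots\to(q_k,P_k)\to(q_1,P_1)$ in the transition relation of $\Automaton_\nIdentity$ supplies, for each step, a state $p_s\in P_s$ reachable from $q_s$ by an $\identity$-path (a prefix of the path witnessing the identity pair), followed by the $\Alphabet$-transition $(p_s,\Label_s,q_{s+1})\in\Transitions$. Concatenating these $\identity$-paths with the $k\geq 1$ real transitions produces a closed walk of positive length in $\Automaton$, contradicting the acyclicity of $\Automaton$.

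The work here is bookkeeping rather than depth, and the one point that genuinely needs care is the interaction between collapsing $\identity$-blocks and the conditions: collapsing a block of $\identity$-transitions into a single state must still enforce that the node satisfy every condition encountered along the block, which is exactly why a state has to be an identity pair $(q,P)$ carrying $\bigcup_{p\in P}\StateConditions(p)$, and why all identity pairs — not merely the $\subseteq$-maximal ones — have to be kept, since an accepting run of $\Automaton$ may traverse only part of an $\identity$-reachable set of states. The remaining care points are handling the degenerate run shapes and making the initial and final conditions of $\Automaton_\nIdentity$ permissive enough to absorb a leading and a trailing block of $\identity$-transitions.
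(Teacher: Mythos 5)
Your construction coincides exactly with the paper's: the states of $\Automaton_\nIdentity$ are the identity pairs, with the same choice of initial states, final states, transitions, and accumulated state conditions, and your two-way run translation (collapsing maximal $\identity$-blocks one way, expanding identity pairs into $\identity$-paths the other way) together with the path-lifting argument for acyclicity is the same reasoning the paper carries out by induction along the run. The proposal is correct and takes essentially the same approach as the paper.
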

\begin{proof}
Let $\Automaton = (\States, \Alphabet, \Conditions, \Initials, \Finals, \Transitions, \StateConditions)$ be an $\Fragment$-free condition automaton. We construct $\Automaton_{\nIdentity} = (\States_{\nIdentity}, \Alphabet, \Conditions, \Initials_{\nIdentity}, \Finals_{\nIdentity}, \Transitions_{\nIdentity}, \StateConditions_{\nIdentity})$ with
\begin{align*}
\States_\nIdentity        &= \{ (q, Q) \mid \text{$(q, Q)$ is an identity pair of $\Automaton$} \};\\
\Initials_{\nIdentity}    &= \{ (q, Q) \mid ((q, Q) \in \States_\nIdentity) \land (q \in \Initials) \};\\
\Finals_{\nIdentity}      &= \{ (q, Q) \mid ((q, Q) \in \States_\nIdentity) \land (Q \intersect \Finals \neq \emptyset) \};\\
\Transitions_{\nIdentity} &= \{ ((p, P), \Label, (q, Q)) \mid ((p, P) \in \States_\nIdentity) \land  (\Label \in \Alphabet) \land{}\\
                          &\hphantom{{}=\qquad} ((q, Q) \in \States_\nIdentity) \land{} (\exists p'\ (p' \in P) \land (p', \Label, q) \in \Transitions) \};\\
\StateConditions_{\nIdentity} &= \{ ((q, Q), \cond) \mid ((q, Q) \in \States_\nIdentity) \land (\exists q'\ (q' \in Q) \land (\cond \in \StateConditions(q'))) \}.
\end{align*}

Let $\Graph = (\Nodes, \Alphabet, \ELabels)$ be a graph. To prove that $\Automaton_\nIdentity$ is path-equivalent to $\Automaton$, we prove $\Apply{\Automaton_\nIdentity}{\Graph} = \Apply{\Automaton}{\Graph}$.

\begin{enumerate}
\item $\Apply{\Automaton_\nIdentity}{\Graph} \subseteq \Apply{\Automaton}{\Graph}$. Assume $(n_0, n_i) \in \Apply{\Automaton_\nIdentity}{\Graph}$. Hence, there exists a run $((q_0, Q_0), n_0)\dots((q_i, Q_i), n_i)$ of $\Automaton_\nIdentity$ on $\Graph$ with $(q_0, Q_0) \in \Initials_{\nIdentity}$ and $(q_i, Q_i) \in \Finals_{\nIdentity}$. By the definition of $\Finals_{\nIdentity}$, we have $Q_i \intersect \Finals \neq \emptyset$. Let $p \in Q_i \intersect \Finals$. For each $j$, $0 \leq j \leq i$, we shall prove that there exists a run $(q_j, n_j)\dots(p, n_i)$ of $\Automaton$ on $\Graph$ using induction on $i - j$.

The base case is $j = i$. Consider $((q_i, Q_i), n_i)$. For each $q \in Q_i$, $n_i$ satisfies $q$ by the definition of $\States_\nIdentity$ and $\StateConditions_\nIdentity$. Moreover, there exists a path $q_i\,\identity\, s_1\dots s_k\,\identity\, p$ in $\Automaton$ from $q_i$ to $p$ with $q_i, s_1, \dots, s_k \in Q_i$. Hence, $(q_i, n_i)\,\identity\,(s_1, n_i)\dots(s_k, n_i)\,\identity\,(p, n_i) $ is a run of $\Automaton$ on $\Graph$.

Now assume as induction hypothesis that, for all $j$, $0 < k \leq j \leq i$, there exists a run $(q_{j}, n_{j})\dots(p, n_i)$ of $\Automaton$ on $\Graph$. Consider the $k$-th step in the run, $((q_{k-1}, Q_{k-1}), n_{k-1}) \Label_{k-1} ((q_k, Q_k), n_k)$. By the definition of a run, we have $((q_{k-1}, Q_{k-1}), \Label_{k-1}, (q_{k}, Q_{k})) \in \Transitions_{\nIdentity}$, $n_{k-1}$ satisfies $(q_{k-1}, Q_{k-1})$, and $n_k$ satisfies $(q_k, Q_k)$. By the construction of $\States_\nIdentity$ and $\StateConditions_{\nIdentity}$, we have, for each $q \in Q_{k-1}$, $n_{k-1}$ satisfies $q$, and, for each $q \in Q_{k}$, $n_{k}$ satisfies $q$. By the construction of $\Transitions_{\nIdentity}$, there exists a state $p_{k-1} \in Q_{k-1}$ such that $(p_{k-1}, \Label, q_{k}) \in \Transitions$ and, by the construction of $\States_{\nIdentity}$, there exists a path $q_{k-1}\,\identity\, s_1\dots s_{i'}\,\identity\, p_{k-1}$ in $\Automaton$ from $q_{k-1}$ to $p_{k-1}$ with $s_1, \dots, s_{i'} \in Q_{k-1}$. Hence, $(q_{k-1}, n_{k-1})\,\identity\,(s_1, n_j)\dots(s_{i'}, n_j)\,\identity\,(p_{k-1}, n_j)\Label_j(q_k, n_k)$ is a run of $\Automaton$ on $\Graph$. Using the induction hypothesis on $k$, we conclude that there exists a run $(q_k, n_k)\dots(p, n_i)$ of $\Automaton$ on $\Graph$. We concatenate these runs to conclude that there exists a run $(q_{k-1}, n_{k-1})\dots(p, n_i)$ of $\Automaton$ on $\Graph$.

By the definition of $\Initials_{\nIdentity}$, we have $q_0 \in \Initials$. Hence, we conclude that a run $(q_0, n_0)\dots(p, n_i)$ of $\Automaton$ on $\Graph$ with $q_0 \in \Initials$ and $p \in \Finals$ exists, and, as a consequence, $(n_0, n_i) \in \Apply{\Automaton}{\Graph}$.

\item $\Apply{\Automaton_\nIdentity}{\Graph} \supseteq \Apply{\Automaton}{\Graph}$. Assume $(n_0, n_i) \in \Apply{\Automaton}{\Graph}$. Hence, there exists a run $(q_0, n_0)\dots(q_i, n_i)$ of $\Automaton$ on $\Graph$ with $q_0 \in \Initials$ and $q_i \in \Finals$. For each $j$, $0 \leq j \leq i$, we shall prove that there exists a run $((q_j, Q), n_j)\dots((p, P), n_i)$ of $\Automaton_\nIdentity$ on $\Graph$ with $q_i \in P$ using induction on $i - j$.

The base case is $j = i$. Observe that $(q_i, \{ q_i \})$ is an identity pair. Hence, $(q_i, \{ q_i \}) \in \States_\nIdentity$. By the construction of $\StateConditions_\nIdentity$, $n_i$ satisfies $(q_i, \{ q_i \})$. We conclude that, $(q_i, \{ q_i \})$ is a run of $\Automaton_\nIdentity$ on $\Graph$.

Now assume as induction hypothesis that, for all $k$, $0 < k \leq j \leq i$, there exists a run $((q_j, Q), n_j)\dots\allowbreak((p, P), n_i)$ of $\Automaton_\nIdentity$ on $\Graph$ with $q_i \in P$. Consider the $k$-th step in the run, $(q_{k-1}, n_{k-1}) \Label_{k-1}(q_{k}, n_{k})$. By the induction hypothesis, there exists a run $r = ((q_{k}, Q), n_{k})\dots((p, P), n_i)$ of $\Automaton_\nIdentity$ on $\Graph$ with $q_i \in P$.  We distinguish two cases:
\begin{enumerate}
\item $\Label_{k-1} \neq \identity$. By the definition of a run, we have $(n_{k-1}, n_k) \in \Label_{k-1}$. By the construction of $\States_\nIdentity$, we have $(q_{k-1}, \{ q_{k-1} \}) \in \States_\nIdentity$, by the construction of $\StateConditions_\nIdentity$, we have $n_{k-1}$ satisfies $(q_{k-1}, \{ q_{k-1} \})$, and, by the construction of $\Transitions_\nIdentity$, we have $((q_{k-1}, \{ q_{k-1} \}), \Label_{k-1}, (q_{k}, Q)) \in \Transitions_\nIdentity$. Hence, $((q_{k-1}, \{ q_{k-1} \}), n_{k-1})\Label_{k-1}((q_{k}, Q), n_{k})\dots((p, P), n_i)$ is a run of $\Automaton_\nIdentity$ on $\Graph$.

\item $\Label_{k-1} = \identity$. By the semantics of $\identity$, we have $n_{k-1} = n_{k}$. Let $Q = \{q_{k}, p_1, \dots, p_{i'} \}$. By the construction of $\States_\nIdentity$, there exists a path $q_{k}\,\identity\, p_1\dots p_{i'}$ in $\Automaton$ and, by the definition of a run, we have $(q_{k-1}, \identity, q_{k}) \in \Transitions$. Hence, we conclude that $q_{k-1}\,\identity\, q_{k}\,\identity\, p_1\dots p_{i'}$ is a path in $\Automaton$, $(q_{k-1}, Q \cup \{ q_{k-1} \})$ is an identity pair, and, by the construction of $\States_\nIdentity$, we have $(q_{k-1}, Q \cup \{ q_{k-1} \}) \in \States_\nIdentity$. We again distinguish two cases:
\begin{enumerate}
\item $r = ((q_{k}, Q), n_{k})$. In this case, we have $q_i \in Q$, and, hence, $q_i \in Q \cup \{ q_{k-1} \}$. We conclude that $((q_{k-1}, Q \cup \{ q_{k-1} \}), n_k)$ is a run of $\Automaton_\nIdentity$ with $q_i \in Q \cup \{ q_{k-1} \}$.
\item $r = ((q_k, Q), n_k)\Label'((q', Q'), n')\dots((p, P), n_i)$. In this case we have, by the definition of a run, $((q_k, Q), \Label', (q', Q'))\in \Transitions_\nIdentity$. By the definition of $\Transitions_\nIdentity$, we have $((q_k, Q), \Label', (q', Q'))\in \Transitions_\nIdentity$ if and only if there exists $q'' \in Q$ such that $(q'', \Label', q') \in \Transitions$. It follows that $q'' \in Q \cup \{ q_{k-1} \}$ and $((q_{k-1}, Q \cup \{ q_{k-1} \}), \Label', (q', Q'))\in \Transitions_\nIdentity$. So, $((q_{k-1}, Q \cup \{ q_{k-1} \}), n_k)\Label'((q', Q'), n')\dots((p, P), n_i)$ is a run of $\Automaton_\nIdentity$.
\end{enumerate}
\end{enumerate}
By the definition of $\Initials_{\nIdentity}$, we have $(q_0, Q) \in \Initials_\nIdentity$ and by the definition of $\Finals_\nIdentity$ and $q_i \in P$, we have $(p, P) \in \Finals_\nIdentity$. Hence, we conclude that the run $((q_0, Q), n_0)\dots((p, P), n_i)$ is a run of $\Automaton_\nIdentity$ on $\Graph$ with $(q_0, Q) \in \Initials_\nIdentity$ and $(p, P) \in \Finals_\nIdentity$, and, as a consequence, $(n_0, n_i) \in \Apply{\Automaton_\nIdentity}{\Graph}$.
\end{enumerate}

As we did not add new condition expressions to the set of condition expressions in the above constructions, it follows that $\Automaton_\nIdentity$ is $\Fragment$-free whenever $\Automaton$ is $\Fragment$-free. As each path in $\Automaton_\nIdentity$ can be translated to a path of at least equal length in $\Automaton$ using the same reasoning as in the proof of $\Apply{\Automaton_\nIdentity}{\Graph} \subseteq \Apply{\Automaton}{\Graph}$, it finally follows that $\Automaton_\nIdentity$ is acyclic whenever $\Automaton$ is acyclic.
\end{proof}

Hence, we may assume that condition automata are $\identity$-transition free.

\begin{example}
In Figure~\ref{fig:example_identity} two condition automata are shown. The condition automaton on the \emph{left} is a simple automaton with $\identity$-transitions. The $\identity$-transition free condition automaton on the \emph{right} is obtained by applying the construction of Lemma~\ref{lem:identity_free}. The main step in constructing the condition automaton on the \emph{right} is constructing the identity pairs (as these are the  states of the constructed condition automaton). Observe that the condition automaton on the \emph{left} has the following paths consisting of identity-transitions only:
\[ u,\quad u\,\identity\,v, \quad u\,\identity\,v\,\identity\,w, \quad v, \quad v\,\identity\,w,\quad w, \]
resulting in the identity pairs $(u, \{ u\})$, $(u, \{u,v\})$, $(u, \{u,v,w\})$, $(v, \{v \})$, $(v, \{v,w\})$, and $(w, \{w\})$, which are the states in the condition automaton on the \emph{right}.

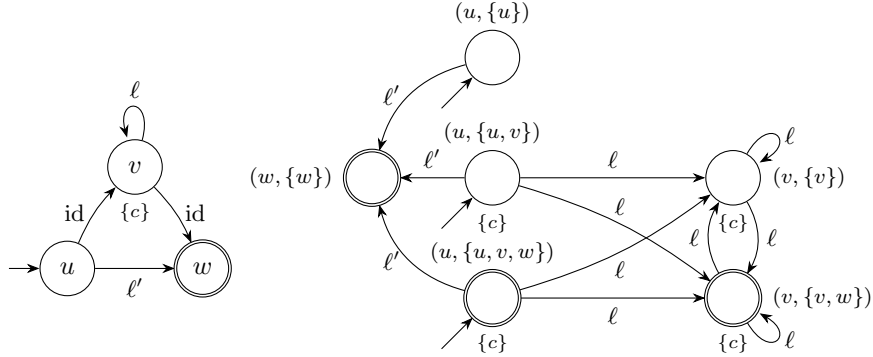
\begin{figure}[ht!]
\centering
    \begin{tikzpicture}[automata_style,scale=0.9]
        \node                  (s) at (-1, 0) {};
        \node[state]           (u) at (0, 0) {$u$};
        \node[state]           (v) at (1, 1.5) {$v$};
        \node[state,accepting] (w) at (2, 0) {$w$};  
        
        \node[conditions,below of=v] {$\{ \cond \}$};

        \begin{scope}[on background layer]
            \path[->] (s) edge (u)
                      (u) edge[bend left=10] node[label,left] {$\identity$} (v)
                      (v) edge[bend left=10] node[label,right] {$\identity$} (w)
                      (v) edge[in=105,out=75,loop] node[label] {$\Label$} (v)
                      (u) edge node[label,below] {$\Label'$} (w);
        \end{scope}
    \end{tikzpicture}
    \begin{tikzpicture}[automata_style,scale=0.8]
        \node                  (su)   at (1, 5) {};
        \node                  (suv)  at (1, 3) {};
        \node                  (suvw) at (1, 1) {};
        \node[state]           (u)    at (2, 6) {};
        \node[state]           (uv)   at (2, 4) {};
        \node[state,accepting] (uvw)  at (2, 2) {};
        \node[state]           (v)    at (6, 4) {};
        \node[state,accepting] (vw)   at (6, 2) {};
        \node[state,accepting] (w)    at (0, 4) {};

        \node[idset,above of=u]   {$(u, \{ u \})$};
        \node[idset,above of=uv]  {$(u, \{ u, v \})$};
        \node[idset,above of=uvw] {$(u, \{ u, v, w \})$};
        \node[idset,right=0.4cm] at (v)   {$(v, \{ v \})$};
        \node[idset,right=0.4cm] at (vw)  {$(v, \{v, w \})$};
        \node[idset,left=0.4cm] at (w)  {$(w, \{ w \})$};
        
        \node[conditions,below of=uv]  {$\{ \cond \}$};
        \node[conditions,below of=uvw] {$\{ \cond \}$};
        \node[conditions,below of=v]   {$\{ \cond \}$};
        \node[conditions,below of=vw]  {$\{ \cond \}$};
        
        \begin{scope}[on background layer]
            \path[->] (su) edge (u)
                      (suv) edge (uv)
                      (suvw) edge (uvw)
                      (u) edge[bend right] node[label,left] {$\Label'$} (w)
                      (uv) edge node[label] {$\Label'$} (w)
                      (uvw) edge[bend left] node[label,left] {$\Label'$} (w)
                      (uv) edge node[label] {$\Label$} (v)
                      (uv) edge[bend left=10] node[label] {$\Label$} (vw)
                      (uvw) edge[bend right=10] node[label,below] {$\Label$} (v)
                      (uvw) edge node[label,below] {$\Label$} (vw)
                      (v) edge[bend left] node[label,right] {$\Label$} (vw)
                      (vw) edge[bend left] node[label,left] {$\Label$} (v)
                      (v) edge[in=30,out=60,loop] node[label,right] {$\Label$} (v)
                      (vw) edge[in=330,out=300,loop] node[label,right] {$\Label$} (vw);
        \end{scope}
    \end{tikzpicture}
    \caption{Two path-equivalent condition automata, only the one on the \emph{right} is $\identity$-transition free.}\label{fig:example_identity}
\end{figure}
\end{example}

We now proceed with showing that condition automata, when used to query trees as opposed to general labeled graphs, are closed under intersection ($\intersect$). We already know from Example~\ref{exam:intro_loop_int} that on general graphs standard automata-techniques cannot be adapted to obtain closure results. On trees, however, the situation of Example~\ref{exam:intro_loop_int} cannot occur, as a directed path between two nodes in a tree is always unique. This observation is crucial in showing that the cross-product construction on condition automata works for querying trees. The lemma below formalizes this observation:

\begin{lemma}\label{lem:run_identity_free}
Let $\Automaton_1$ and $\Automaton_2$ be $\identity$-transition free condition automata and let $\Tree = (\Nodes, \Alphabet, \ELabels)$ be a tree. If there exists a run $r_1 = (p_1,n_1)\Label_1^1\dots\Label_{i_1}^1(q_1, n_{i_1 + 1})$ of $\Automaton_1$ on $\Tree$ and there exists a run $r_2 = (p_2, m_1)\Label_1^2\dots\Label_{i_2}^2\allowbreak(q_2, m_{i_2 + 1})$ of $\Automaton_2$ on $\Tree$ with $n_1 = m_1$ and $n_{i_1 + 1} = m_{i_2 +1}$, then $i_1 = i_2 = i$ and, for all $1 \leq j \leq i$, $\Label_j^1 = \Label_j^2$ and $n_j = m_j$.
\end{lemma}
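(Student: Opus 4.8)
The plan is to reduce the statement to the basic fact that between two nodes of a tree there is at most one directed path, and that this path is uniquely determined both as a sequence of nodes and as a sequence of edge labels. First I would note that, since $\Automaton_1$ and $\Automaton_2$ are $\identity$-transition free, every $\Label_j^1$ and every $\Label_j^2$ belongs to $\Alphabet$, so that, by the definition of a run, $(n_j,n_{j+1}) \in \Apply{\Label_j^1}{\Tree} \subseteq \Edges$ for all $1 \leq j \leq i_1$, and likewise $(m_j,m_{j+1}) \in \Apply{\Label_j^2}{\Tree} \subseteq \Edges$ for all $1 \leq j \leq i_2$. Hence $r_1$ traces the directed path $n_1,n_2,\dots,n_{i_1+1}$ and $r_2$ traces the directed path $m_1,m_2,\dots,m_{i_2+1}$ in $\Tree$, and by hypothesis these two paths have the same first node ($n_1 = m_1$) and the same last node ($n_{i_1+1} = m_{i_2+1}$). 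The state-condition part of the definition of a run is irrelevant here and may be ignored.

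Next I would compare the lengths. Along a directed path, the distance from the root strictly increases by one at each step; writing $d(n)$ for the distance from the root of $\Tree$ to a node $n$, we obtain $d(n_{i_1+1}) = d(n_1) + i_1$ and $d(m_{i_2+1}) = d(m_1) + i_2$, and combining these with $n_1 = m_1$ and $n_{i_1+1} = m_{i_2+1}$ forces $i_1 = i_2$; denote this common value by $i$. I would then prove $n_j = m_j$ for all $1 \leq j \leq i+1$ by downward induction on $j$. The base case $j = i+1$ is the hypothesis $n_{i_1+1} = m_{i_2+1}$. For $1 \leq j \leq i$, assume $n_{j+1} = m_{j+1}$; since $n_{j+1}$ has an incoming edge (namely from $n_j$) it is not the root, so it has a unique parent in $\Tree$, and both $n_j$ and $m_j$ are that parent, whence $n_j = m_j$. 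Finally, for each $1 \leq j \leq i$, the label $\Label_j^1$ is a label carried by the tree edge $(n_j,n_{j+1})$ and $\Label_j^2$ is a label carried by the tree edge $(m_j,m_{j+1}) = (n_j,n_{j+1})$; since each edge of a tree carries exactly one label, $\Label_j^1 = \Label_j^2$, which completes the proof.

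The only place where the tree hypothesis is essential is the uniqueness of the directed path: the unique-parent property is what makes the node sequences coincide once their endpoints do, and it is also what pins down the labels. On a general graph, or even on a DAG, the conclusion fails --- Example~\ref{exam:intro_loop_int} together with Figure~\ref{fig:3int7graph} exhibits precisely the kind of situation (two distinct directed paths, of different lengths, between the same pair of nodes) that the tree hypothesis rules out. I do not expect any real obstacle; the only things to be careful about are the alignment of indices in the downward induction, the degenerate case $i_1 = 0$ or $i_2 = 0$ (which is immediate from the same distance argument), and making explicit --- if one wants to be pedantic --- the convention that on trees each edge carries a single label, which is what the label part of the statement needs.
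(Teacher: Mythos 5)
Your proposal is correct and follows essentially the same route as the paper's proof: both reduce the claim to the uniqueness of the directed path between two nodes of a tree, using $\identity$-transition freeness to ensure every transition consumes an actual edge. You merely spell out in more detail (distance-from-root for the lengths, unique-parent induction for the nodes, one label per edge for the labels) what the paper compresses into ``the two runs must traverse the same path, and, hence, follow the same edge labels.''
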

\begin{proof}
Let $m = m_1 = m_2$ and $n = n_1 = n_2$. By the semantics of condition automata, the existence of run $r_1$ implies that $(m, n) \in \Apply{\Label_1^1\compose\dots\compose\Label_{i_1}^1}{\Tree}$ and the existence of run $r_2$ implies that $(m, n) \in \Apply{\Label_1^2\compose\dots\compose\Label_{i_2}^2}{\Tree}$. As $\Automaton_1$ and $\Automaton_2$ are $\identity$-transition free, each $\Label_{j_1}^1$ and each $\Label_{j_2}^2$, with $1 \leq {j_1} \leq i_1$ and $1 \leq {j_2} \leq i_2$, is an edge label. As there is only a single downward path from node $m$ to node $n$ in $\Tree$, the two runs must traverse the same path, and, hence, follow the same edge labels. We conclude that $i_1 = i_2 = i$ and, for all $1 \leq j \leq i$, $\Label_j^1 = \Label_j^2$.
\end{proof}

This allows us to prove the following:
\begin{proposition}\label{prop:closed_intersect}
Let $\Fragment \in \{ \transop, \proop, \coproop \}$ and let $\Automaton_1$ and $\Automaton_2$ be $\Fragment$-free condition automata. There exists an $\Fragment$-free condition automaton $\Automaton_\intersect$ such that, for every tree $\Tree$, we have $\Apply{\Automaton_\intersect}{\Tree} = \Apply{\Automaton_1}{\Tree} \intersect \Apply{\Automaton_2}{\Tree}$. The condition automaton $\Automaton_\intersect$ is acyclic whenever $\Automaton_1$ or $\Automaton_2$ is acyclic.
\end{proposition}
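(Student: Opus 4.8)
The plan is to adapt the standard cross\=/product construction for finite automata to condition automata, using crucially that in a tree there is a unique downward path between any two nodes. First I would apply Lemma~\ref{lem:identity_free} to replace $\Automaton_1$ and $\Automaton_2$ by path\=/equivalent $\identity$-transition free $\Fragment$-free condition automata, preserving acyclicity; write $\Automaton_1 = (\States_1, \Alphabet, \Conditions_1, \Initials_1, \Finals_1, \Transitions_1, \StateConditions_1)$ and likewise for $\Automaton_2$. Then I would define $\Automaton_\intersect$ to have state set $\States_1 \times \States_2$, transition labels $\Alphabet$, conditions $\Conditions_1 \union \Conditions_2$, initial states $\Initials_1 \times \Initials_2$, final states $\Finals_1 \times \Finals_2$, transition relation $\{ ((p_1, p_2), \Label, (q_1, q_2)) \mid (p_1, \Label, q_1) \in \Transitions_1 \land (p_2, \Label, q_2) \in \Transitions_2 \}$, and state\=/condition relation $\StateConditions_\intersect = \{ ((q_1, q_2), \cond) \mid (q_1, \cond) \in \StateConditions_1 \lor (q_2, \cond) \in \StateConditions_2 \}$, so that $\StateConditions_\intersect((q_1,q_2)) = \StateConditions_1(q_1) \union \StateConditions_2(q_2)$ and a node satisfies $(q_1,q_2)$ in $\Automaton_\intersect$ exactly when it satisfies $q_1$ in $\Automaton_1$ and $q_2$ in $\Automaton_2$ (this uses that $\cexpr{(q_1,q_2)}$ is, up to path\=/equivalence, the intersection of the conditions of $q_1$ and $q_2$). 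Since no new condition expressions are introduced, $\Automaton_\intersect$ is $\Fragment$-free whenever $\Automaton_1$ and $\Automaton_2$ are.

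For correctness I would prove $\Apply{\Automaton_\intersect}{\Tree} = \Apply{\Automaton_1}{\Tree} \intersect \Apply{\Automaton_2}{\Tree}$ for every tree $\Tree = (\Nodes, \Alphabet, \ELabels)$ by the two inclusions. The inclusion $\subseteq$ is routine: an accepting run $((p_1^0, p_2^0), n_0) \Label_0 \dots ((p_1^i, p_2^i), n_i)$ of $\Automaton_\intersect$ on $\Tree$ projects, by the definition of $\Transitions$ and $\StateConditions_\intersect$, onto accepting runs of $\Automaton_1$ and of $\Automaton_2$ on the same node sequence, so $(n_0, n_i)$ is accepted by both. For $\supseteq$, suppose $(m, n) \in \Apply{\Automaton_1}{\Tree} \intersect \Apply{\Automaton_2}{\Tree}$, witnessed by accepting runs $r_1$ and $r_2$. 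Here I would invoke Lemma~\ref{lem:run_identity_free}: since both automata are $\identity$-transition free and $\Tree$ is a tree, $r_1$ and $r_2$ have the same length $i$, visit the same nodes $n_0 = m, \dots, n_i = n$, and use the same labels $\Label_0, \dots, \Label_{i-1}$. Zipping the two state sequences gives a candidate run of $\Automaton_\intersect$; for each $j$, the transition $((p_1^j,p_2^j), \Label_j, (p_1^{j+1},p_2^{j+1}))$ lies in $\Transitions$, and $n_j$ satisfies $(p_1^j, p_2^j)$ because it satisfies both $p_1^j$ and $p_2^j$. As the endpoints lie in $\Initials_1 \times \Initials_2$ and $\Finals_1 \times \Finals_2$, this run is accepting, so $(m, n) \in \Apply{\Automaton_\intersect}{\Tree}$. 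Finally, every path in $\Transitions$ projects to paths of equal length in $\Transitions_1$ and in $\Transitions_2$, so $\Automaton_\intersect$ is acyclic whenever $\Automaton_1$ or $\Automaton_2$ is acyclic.

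The main obstacle is the $\supseteq$ direction, specifically the step that merges two \emph{independent} accepting runs into a single run of the cross product: this works only because of the uniqueness of downward paths in trees (Lemma~\ref{lem:run_identity_free}), and it is exactly the step that fails on general graphs, as illustrated by Example~\ref{exam:intro_loop_int}. The remaining ingredients---preservation of $\Fragment$-freeness, the reduction to $\identity$-transition free automata, and the projection argument for acyclicity---are straightforward adaptations of the constructions already used for Proposition~\ref{prop:cta_close_basics} and Lemma~\ref{lem:identity_free}.
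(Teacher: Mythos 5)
Your proposal is correct and follows essentially the same route as the paper: reduce to $\identity$-transition free automata via Lemma~\ref{lem:identity_free}, form the cross product with unioned state-conditions, use Lemma~\ref{lem:run_identity_free} to align the two witnessing runs in the $\supseteq$ direction, and observe that no new condition expressions are introduced and that runs of the product project onto equal-length runs of both factors for $\Fragment$-freeness and acyclicity. No gaps.
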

\begin{proof}
Let $\Automaton_1 = (\States_1, \Alphabet_1, \Conditions_1, \Initials_1, \Finals_1, \Transitions_1, \StateConditions_1)$ and $\Automaton_2 = (\States_2, \Alphabet_2, \Conditions_2, \Initials_2, \Finals_2, \Transitions_2, \StateConditions_2)$ be condition automata. By Lemma~\ref{lem:identity_free}, we assume that $\Automaton_1$ and $\Automaton_2$ are $\identity$-transition free. Without loss of generality, we may assume that $\States_1 \intersect \States_2 = \emptyset$.

We construct $\Automaton_\intersect = (\States_1 \times \States_2, \Alphabet_1 \union \Alphabet_2, \Conditions_1 \union \Conditions_2, \Initials_1 \times \Initials_2, \Finals_1 \times \Finals_2, \Transitions_\intersect, \StateConditions_\intersect)$ where
\begin{align*}
    \Transitions_\intersect     &= \{ ((p_1, q_1), \Label, (p_2, q_2)) \mid (p_1, \Label, p_2) \in \Transitions_1 \land (q_1, \Label, q_2) \in \Transitions_2 \}\text{ and}\\
    \StateConditions_\intersect &= \{ ((p, q), \cond) \mid (p, \cond) \in \StateConditions_1 \lor (q, \cond) \in \StateConditions_2  \}.
\end{align*}

Let $\Tree = (\Nodes, \Alphabet, \ELabels)$ be a tree and let $m, n \in \Nodes$ be a pair of nodes. We have $(m, n) \in \Apply{\Automaton_1}{\Tree} \intersect \Apply{\Automaton_2}{\Tree}$ if and only if there exists a run $(p_1,m)\Label_1^1\dots\Label_{i_1}^1(q_1, n)$ of $\Automaton_1$ on $\Tree$ with $p_1 \in \Initials_1$ and $q_1 \in \Finals_1$ and a run $(p_2, m)\Label_1^2\dots\Label_{i_2}^2(q_2, n)$ of $\Automaton_2$ on $\Tree$ with $p_2 \in \Initials_2$ and $q_2 \in \Finals_2$. Since $\Automaton_1$ and $\Automaton_2$ are $\identity$-transition free, by Lemma~\ref{lem:run_identity_free}, and by the construction of $\Automaton_\intersect$, these runs exist if and only if there exists a run $((p_1, p_2), m)\Label_1\dots\Label_i((q_1, q_2), n)$ of $\Automaton_\intersect$ on $\Tree$ with $(p_1, p_2) \in \Initials_1 \times \Initials_2$ and $(q_1, q_2) \in \Finals_1 \times \Finals_2$. Hence, we conclude $(m, n) \in \Apply{\Automaton_\intersect}{\Tree}$.

Observe that we did not add new condition expressions to the set of condition expressions in the proposed constructions. Hence, we conclude that $\Automaton_\intersect$ is $\Fragment$-free whenever $\Automaton_1$ and $\Automaton_2$ are $\Fragment$-free. We also observe that each run $r$ of $\Automaton_\intersect$ on $\Tree$ can be split into runs of $\Automaton_1$ and $\Automaton_2$ on tree $\Tree$ of the same length as $r$. Hence, there can only be loops in $\Automaton_\intersect$ if both $\Automaton_1$ and $\Automaton_2$ have loops and we conclude that $\Automaton_\intersect$ is acyclic whenever $\Automaton_1$ or $\Automaton_2$ is acyclic.
\end{proof}

Next, we show that condition automata, when used as tree queries, are also closed under difference ($\difference$). Usually, the difference of two finite automata $\Automaton_1$ and $\Automaton_2$ is constructed by first constructing the complement of $\Automaton_2$, and then constructing the intersection of $\Automaton_1$ with the resulting automaton. We cannot use such a complement construction for condition automata: the complement of a downward binary relation (represented by a condition automaton when evaluated on a tree) is not a downward binary relation. Observe, however, that it is not necessary to consider the full complement for this purpose. As the difference of two downward binary relations is itself a downward relation, we can restrict ourselves to the \emph{downward complement} of a binary relation.

\begin{definition}
Let $\Tree = (\Nodes, \Alphabet, \ELabels)$ be a tree. We define the \emph{downward complement} of a binary relation $\Relation \subseteq \Nodes \times \Nodes$, denoted by $\dcompl{\Relation}$, as \[\dcompl{\Relation} = \{ (m, n) \mid (m, n) \notin \Relation \land (m,n) \in \Apply{\transitivestar{\Edges}}{\Tree} \}.\]
\end{definition}

Indeed, if $\Automaton_1$ and $\Automaton_2$ are condition automata and $\Tree$ is a tree, then we have $\Apply{\Automaton_1}{\Tree} \difference \Apply{\Automaton_2}{\Tree} \equiv \Apply{\Automaton_1}{\Tree} \intersect \dcompl{\Apply{\Automaton_2}{\Tree}}$. Hence, we only need to show that condition automata are closed under downward complement. The construction of the downward complement uses deterministic condition automata:

\begin{definition}\label{def:dca}
The condition automaton $\Automaton = (\States, \Alphabet, \Conditions, \Initials, \Finals, \Transitions, \StateConditions)$ is \emph{deterministic} if it is $\identity$-transition free and if it satisfies the following condition: for every tree $\Tree = (\Nodes, \Alphabet, \ELabels)$ and for every pair of nodes $m, n$ with $m$ an ancestor of $n$, there exists exactly one run $(q, m)\Label\dots(p, n)$ of $\Automaton$ on $\Tree$ with $q \in \Initials$.
\end{definition}

We observe that if the condition automaton does not specify any conditions, then Definition~\ref{def:dca} reduces to the classical definition of a finite deterministic automaton. Moreover, the definition of a deterministic condition automaton relies on the automaton being evaluated on trees, as more general graphs can have several identically-labeled paths between pairs of nodes.

\begin{example}
The condition automaton in Figure~\ref{fig:example_automaton} is clearly not deterministic: there are already two different possible runs of length one starting at an initial state. In Figure~\ref{fig:example_dautomaton} we visualize a conditional automaton over $\Alphabet = \{ \Label_1, \Label_2 \}$ that is deterministic. This deterministic condition automaton accepts node pairs $(m, n)$, $m \neq n$, if $m$ satisfies $\project{2}{{\Label_1}^3}$ and if there exists a path from $m$ to $n$ whose labeling matches the regular expression $\Label_1 \transitivestar{\Label_2} \Label_1$. It also accepts node pairs $(n, n)$ if $n$ does not satisfy $\project{2}{{\Label_1}^3}$.
\begin{figure}[htb!]
    \centering
    \begin{tikzpicture}[automata_style]
        \node        (s1)  at (-1, 0) {};
        \node[state] (n1) at (0, 0) {$q_1$};
        \node[state] (n2) at (2, 0) {$q_2$};
        \node[state,accepting] (n3) at (4, 0) {$q_3$};  
        \node        (s4)  at (-1, 2) {};
        \node[state,accepting] (n4) at (0, 2) {$q_4$};
        \node[state] (n5) at (2, 2) {$q_5$};
        
        \node[conditions,below of=n1] {$\{ \project{2}{{\Label_1}^3} \}$};
        \node[conditions,below of=n2] {$\{ \}$};
        \node[conditions,below of=n3] {$\{ \}$};
        \node[conditions,above of=n4] {$\{ \coproject{2}{{\Label_1}^3} \}$};

        \begin{scope}[on background layer]
            \path[->] (s1) edge (n1)
                      (s4) edge (n4)
                      (n1) edge node[label] {$\Label_1$} (n2)
                      (n4) edge node[label] {$\Label_1, \Label_2$} (n5)
                      (n1) edge node[label,above left] {$\Label_2$} (n5)
                      (n2) edge[loop] node[label] {$\Label_2$} (n2)
                      (n2) edge node[label] {$\Label_1$} (n3)
                      (n3) edge node[label,above right] {$\Label_1, \Label_2$} (n5)
                      (n5) edge[loop] node[label] {$\Label_1,\Label_2$} (n5);
                      
        \end{scope}
    \end{tikzpicture}
    \caption{An example of a deterministic condition automaton.}\label{fig:example_dautomaton}
\end{figure}
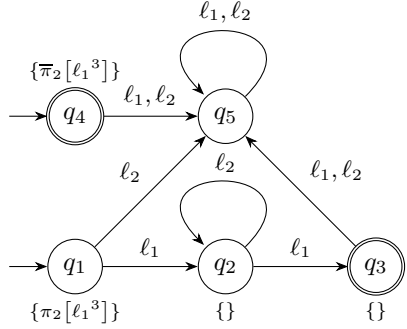
\end{example}

In the construction of deterministic condition automata we shall use the \emph{condition-complement} of a condition $\expr$, denoted by $\condc{\expr}$, and defined as follows:
$$\condc{\expr} =
        \begin{cases} \emptyset & \text{if $\expr = \identity$};\\
                      \identity & \text{if $\expr = \emptyset$};\\
                      \coproject{1}{\expr'} & \text{if $\expr = \project{1}{\expr'}$};\\
                      \coproject{2}{\expr'} & \text{if $\expr = \project{2}{\expr'}$};\\
                      \project{1}{\expr'} & \text{if $\expr = \coproject{1}{\expr'}$};\\
                      \project{2}{\expr'} & \text{if $\expr = \coproject{2}{\expr'}$}.\\
        \end{cases}$$

Observe that the condition complement of a projection expression is a coprojection expression, and vice-versa.  If $S$ is a set of conditions, then we use the notation $\condc{S}$ to denote the set $\{ \condc{\cond} \mid \cond \in S \}$.

\begin{lemma}
Let $\Fragment \in \{ \transop, \proop, \coproop \}$ and let $\Automaton$ be an $\Fragment$-free condition automaton. There exists a deterministic condition automaton $\Automaton_D$ that is path-equivalent to $\Automaton$ with respect to labeled trees. The condition automaton $\Automaton_D$ is $\{ \transop \}$-free if $\transop \notin \Fragment$ and $\{ \proop, \coproop \}$-free if $\proop, \coproop \notin \Fragment$.
\end{lemma}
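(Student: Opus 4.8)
The plan is to run the classical subset (powerset) construction, modified so that a state of $\Automaton_D$ records not just a set of states of $\Automaton$ but also the \emph{condition type} of the tree node it is meant to describe. First I would apply Lemma~\ref{lem:identity_free} to replace $\Automaton$ by a path-equivalent, $\identity$-transition free, $\Fragment$-free condition automaton $\Automaton = (\States, \Alphabet, \Conditions, \Initials, \Finals, \Transitions, \StateConditions)$; being $\identity$-transition free, every run of $\Automaton$ on a tree follows the unique downward edge-path between its endpoints. Let $\Conditions$ be the finite set of conditions occurring in $\StateConditions$, and for a tree $\Tree$ and node $n$ put $\tau(n) = \{\, \cond \in \Conditions \mid (n,n) \in \Apply{\cond}{\Tree} \,\}$, its \emph{type}; note that $n$ satisfies a state $q$ iff $\StateConditions(q) \subseteq \tau(n)$.

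I would then take $\Automaton_D$ to have state set $\PowerSet{\States} \times \PowerSet{\Conditions}$, attaching to $(Q,T)$ the condition set $T \union \condc{\Conditions \difference T}$, so that a node satisfies $(Q,T)$ exactly when $\tau(n) = T$ and states with distinct $T$ are mutually exclusive at every node. The initial states are $(\{q \in \Initials \mid \StateConditions(q) \subseteq T\}, T)$ for all $T \subseteq \Conditions$; the final states are those $(Q,T)$ with $Q \intersect \Finals \neq \emptyset$; and for each label $\Label$, each state $(Q,T)$, and each $T' \subseteq \Conditions$ there is a transition $((Q,T), \Label, (Q',T'))$ with $Q' = \{q' \mid \exists q \in Q\ (q,\Label,q') \in \Transitions \text{ and } \StateConditions(q') \subseteq T'\}$. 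A state thus has several outgoing $\Label$-transitions, but their targets have pairwise exclusive conditions, so on a tree at most one of them applies at the next node; ranging over all $T'$ (including unrealisable types) guarantees that at least one applies.

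Next I would verify the two requirements. \emph{Determinism} (Definition~\ref{def:dca}): for $m$ an ancestor of $n$, along the unique path $m = n_0, \dots, n_k = n$ the run must begin in the one initial state satisfied at $m$ (the one with $T = \tau(m)$) and at each step is forced into the one transition compatible with $\tau(n_{j+1})$; conversely that sequence is a legitimate run, so there is exactly one. \emph{Path-equivalence}: an easy induction on $j$ shows $Q_j$ equals $\{q \mid \Automaton \text{ has a run from an initial state at } m \text{ ending in } (q, n_j)\}$, so $(m,n)$ is accepted by $\Automaton_D$ iff $Q_k \intersect \Finals \neq \emptyset$ iff $(m,n) \in \Apply{\Automaton}{\Tree}$ (both automata relate only ancestor--descendant pairs on a tree). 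The freeness claims then follow by inspecting the only new conditions, those in $\condc{\Conditions \difference T}$: the operator $\condc{\cdot}$ interchanges $\identity$ with $\emptyset$ and $\project{1}{\expr'}$ with $\coproject{1}{\expr'}$ (and $\project{2}{\expr'}$ with $\coproject{2}{\expr'}$), hence introduces no $\transop$ and no projection or coprojection where the argument had none; so $\Automaton_D$ retains whatever freeness from $\transop$, and whatever triviality of conditions, $\Automaton$ has, which is precisely what the lemma claims.

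The main obstacle — the only real departure from the textbook determinization — is handling the conditions: after reading a label, the ``right'' successor depends on which conditions the \emph{next} node satisfies, something the automaton cannot look up. The fix is exactly the type bookkeeping above: each successor commits, via the mutually exclusive sets $T' \union \condc{\Conditions \difference T'}$, to the full condition type of the node it describes, so that the tree itself ``picks out'' the unique applicable transition. I would pay particular attention to the \emph{exactly one} (rather than merely \emph{at most one}) clause of Definition~\ref{def:dca}: this is why all types $T$, even those realised by no node, are retained as initial states and as transition targets. The rest — the two inductions and finiteness of $\Automaton_D$ — is routine.
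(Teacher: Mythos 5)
Your proposal is correct and follows essentially the same route as the paper: after using Lemma~\ref{lem:identity_free} to remove $\identity$-transitions, the paper also determinizes by pairing a subset of states with a ``type'' $V \subseteq \Conditions$, attaching the condition set $V \union \condc{\Conditions \difference V}$ so that types are mutually exclusive and exhaustive at every node, defining transitions exactly as you do, and proving determinism, two-way run simulation, and the freeness claims by the same arguments. The only cosmetic difference is that the paper constructs just the reachable states via a worklist algorithm rather than taking the full product $\PowerSet{\States} \times \PowerSet{\Conditions}$.
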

\begin{proof}
Let $\Automaton = (\States, \Alphabet, \Conditions, \Initials, \Finals, \Transitions, \StateConditions)$ be a condition automaton. By Lemma~\ref{lem:identity_free}, we assume that $\Automaton$ is $\identity$-transition free. We construct $\Automaton_D = (\States_D, \Alphabet, \Conditions_D, \Initials_D, \Finals_D,\allowbreak \Transitions_D, \StateConditions_D)$, where $\States_D$, $\Initials_D$, and $\Transitions_D$ are constructed by Algorithm~\ref{alg:ctatodcta}, and \begin{align*}
        \Conditions_D     &= \Conditions \union \condc{\Conditions};\\
        \Finals_D         &= \{ (Q, V) \mid (Q, V) \in \States_D \land Q \intersect \Finals \neq \emptyset \};\\
        \StateConditions_D&= \{ ((Q, V), c) \mid (Q, V) \in \States_D \land (c \in V \lor c \in \condc{\Conditions \difference V})\}.
\end{align*}

\begin{algorithm}[htb!]
\caption{Translation to deterministic condition automaton}\label{alg:ctatodcta}
\begin{algorithmic}[1]
\STATE Let $\States_D$, $\Initials_D$, and $\VAR{new}$ be empty sets of states
\STATE Let $\Transitions_D$ be an empty transition relation
\FOR{$V \subseteq \Conditions$}
    \STATE $Q \GETS \{ q \mid q \in \Initials \land \StateConditions(q) \subseteq V \}$
    \STATE Add new state $(Q, V)$ to $\States_D$, $\Initials_D$, and $\VAR{new}$
\ENDFOR
\WHILE{$\VAR{new} \neq \emptyset$}
    \STATE Take and remove state $(Q, V)$ from $\VAR{new}$
    \FOR{$\Label \in \Alphabet$}
        \STATE $P \GETS \{ p \mid \exists q\ q \in Q \land (q, \Label, p) \in \Transitions \}$
        \FOR{$W \subseteq \Conditions$}
            \STATE $P' \GETS \{ p \mid p \in P \land \StateConditions(p) \subseteq W \}$
            \IF{$(P', W) \notin \States_D$}
                \STATE Add new state $(P', W)$ to $\States_D$ and $\VAR{new}$
            \ENDIF
            \STATE Add new transition $((Q, V), \Label, (P', W))$ to $\Transitions_D$
        \ENDFOR
    \ENDFOR
\ENDWHILE
\end{algorithmic}
\end{algorithm}

\newcommand{\EC}[1]{\zeta(#1)}

Let $\Tree = (\Nodes, \Alphabet, \ELabels)$ be a tree and let $m, n \in \Nodes$ be nodes. If $n' \in \Nodes$, then $\EC{n'}$ denotes the set $\{ \cond \mid \cond \in \Conditions \land (n', n') \in \Apply{\cond}{\Tree} \}$. Both determinism of $\Automaton_D$ and path-equivalence of $\Automaton_D$ and $\Automaton$ are guaranteed, as this construction satisfies the following properties:

\begin{enumerate}[ref=(\arabic*)]
\item \label{enum:const_d:unique_satisfy} \emph{There exists exactly one $V \subseteq \Conditions$ with $(n, n) \in \Apply{\cexpr{V \union \condc{\Conditions \difference V}}}{\Tree}$, and we have $V = \EC{n}$.}

By definition, $\EC{n}$ is the set of all conditions satisfied by $n$, hence, we can choose $V = \EC{n}$ and we have $(n, n) \in \Apply{\cexpr{V \union \condc{\Conditions \difference V}}}{\Tree}$. To show that no other choice for $V$ is possible, we consider any $V' \subseteq \Conditions$ with $V \neq V'$. We show that $(n, n) \notin \Apply{\cexpr{V' \union \condc{\Conditions \difference V'}}}{\Tree}$. As $V \neq V'$, there exists $\cond \in \Conditions$ such that  $\cond \in V \difference V'$ or $\cond \in V' \difference V$:

\begin{enumerate}
\item $\cond \in V \difference V'$. By $\cond \in V$ and $(n, n) \in \Apply{\cexpr{V \union \condc{\Conditions \difference V}}}{\Tree}$, we have $(n, n) \in \Apply{\cond}{\Tree}$ and we have $(n, n) \notin \Apply{\condc{\cond}}{\Tree}$. As $\cond \notin V'$, we have $\condc{\cond} \in V' \union \condc{\Conditions \difference V'}$. Hence, we conclude $(n, n) \notin \Apply{\cexpr{V' \union \condc{\Conditions \difference V'}}}{\Tree}$.
\item $\cond \in V' \difference V$. Since $\cond \notin V$, we have $(n, n) \notin \Apply{\cond}{\Tree}$. As $\cond \in V'$, we have $\cond \in V' \union \condc{\Conditions \difference V'}$. Hence, we conclude $(n, n) \notin \Apply{\cexpr{V' \union \condc{\Conditions \difference V'}}}{\Tree}$.
\end{enumerate}

\item \emph{There exists exactly one state $(P, V) \in \Initials_D$ such that $m$ satisfies $(P, V)$.}\label{enum:const_d:unique_initial}

By Property~\ref{enum:const_d:unique_satisfy}, we have $V = \EC{m}$. By the construction of $\States_D$, there exists exactly one set of states $Q \subseteq \Initials$ such that $(Q, V) \in \Initials_D$ and $\StateConditions((Q, V)) = V \union \condc{\Conditions \difference V}$.

\item \emph{Let $(P, V) \in \States_D$ be a state such that $m$ satisfies $(P, V)$. If there exists an edge label $\Label$ such that $(m, n) \in \Label$, then there exists exactly one transition $((P, V), \Label, (Q, W)) \in \Transitions$ such that $n$ satisfies $(Q, W)$.}\label{enum:const_d:unique_edge}

By Property~\ref{enum:const_d:unique_satisfy}, we have $W = \EC{n}$. By the construction of $\States_D$ and $\Transitions_D$, there is exactly one set of states $P' \subseteq \States$ such that $(P', W) \in \States_D$ and $((Q, V), \Label, (P', W)) \in \Transitions_D$. By the choice of $W$, $m$ must satisfy $(P', W)$.

\item \emph{Let $(P, V) \in \States_D$ be a state such that $m$ satisfies $(P, V)$. If there exists a directed path from $m$ to $n$, then there exists exactly one run $((P, V), m)\dots\allowbreak((Q, W), n)$ of $\Automaton_D$ on $\Tree$.}\label{enum:const_d:unique_path}

Repeated application of Property~\ref{enum:const_d:unique_edge}.

\item \emph{If $(p, n)\Label(p', n')$ is a run of $\Automaton$ on $\Tree$ then, for each $(P, V)$ with $p \in P$, there exists exactly one transition $((P, V), \Label, (P', V'))$ such that $n'$ satisfies $(P', V')$. For this transition, we have $p' \in P'$.}\label{enum:const_d:d_to_n_edge}

By Property~\ref{enum:const_d:unique_satisfy}, we have $V = \EC{n}$ and $V' = \EC{n'}$. By the construction of $\Transitions_D$, there is exactly one set of states $P' \subseteq \States$ such that $((P, V), \Label, (P', V')) \in \Transitions_D$. Observe that we must have $\StateConditions(p') \subseteq V'$, hence $p' \in P'$.

\item \emph{If there exists a run $(q_1, m) \dots (q_i, n)$ of $\Automaton$ on $\Tree$ with $q_1 \in \Initials$, then there exists a run $((Q_1, V_1), m) \dots ((Q_i, V_i), n)$ of $\Automaton_D$ on $\Tree$ with $(Q_1, V_1) \in \Initials_D$, and, for all $j$, $1 \leq j \leq i$, $q_j \in Q_j$.}\label{enum:const_d:d_to_n}

By induction. The base case is provided by Property~\ref{enum:const_d:unique_initial}, and the induction steps are provided by repeated applications of Property~\ref{enum:const_d:d_to_n_edge}.

\item \emph{If there exists a run $((P, V), m) \Label ((Q, W), n)$ of $\Automaton_D$ on $\Tree$ with $Q \neq \emptyset$, then, for all $q \in Q$, there exists a run $(p, m) \Label (q, n)$, with $p \in P$, of $\Automaton$ on $\Tree$.}\label{enum:const_d:n_to_d_edge}

By the construction of $\Transitions_D$, there must be $p \in P$ such that $(p, \Label, q) \in \Transitions$. By $p \in P$ and $q \in Q$ and by the definition of $\States_D$ and $\StateConditions_D$, we have $\StateConditions(p) \subseteq V \subseteq \StateConditions_D(P)$ and $\StateConditions(q) \subseteq W \subseteq \StateConditions_D(Q)$. Hence $m$ satisfies $p$ and $n$ satisfies $q$. Thus $(p, m) \Label (q, n)$ is a run of $\Automaton$ on $\Tree$.

\item \emph{If there exists a run $((Q_1, V_1), m) \dots ((Q_i, V_i), n)$ of $\Automaton_D$ on $\Tree$ with $Q_i \neq \emptyset$, then, for all $q_i \in Q_i$, there exists a run $(q_1, m) \dots (q_i, n)$ of $\Automaton$ on $\Tree$ with, for all $j$, $1 \leq j < i$, $q_j \in Q_j$.}\label{enum:const_d:n_to_d}

By induction. The base cases involve runs of the form $((Q_i, V_i), n)$ of $\Automaton_D$ on $\Tree$ with $Q_i \neq \emptyset$. We can choose any $q \in Q_i$ and, by the definition of $\States_D$ and $\StateConditions_D$, we have $\StateConditions(q) \subseteq V_i \subseteq \StateConditions_D((Q_i, V_i))$. Hence, $n$ satisfies $q_i$ and we conclude that $(q_i, n)$ is a run of $\Automaton$ on $\Tree$. The induction steps are provided by repeated applications of Property~\ref{enum:const_d:n_to_d_edge}.
\end{enumerate}

By~\ref{enum:const_d:unique_initial} and~\ref{enum:const_d:unique_path} we conclude that $\Automaton_D$ is a deterministic condition automaton. By~\ref{enum:const_d:d_to_n} and the construction of $\Initials_D$ and $\Finals_D$, $\Apply{\Automaton}{\Tree} \subseteq \Apply{\Automaton_D}{\Tree}$. By~\ref{enum:const_d:n_to_d} and the construction of $\Initials_D$ and $\Finals_D$, $\Apply{\Automaton_D}{\Tree} \subseteq \Apply{\Automaton}{\Tree}$. Hence, we conclude that $\Automaton$ and $\Automaton_D$ are path-equivalent. The construction of $\Conditions$ did not add any usage of $\transop$, and introduced $\coproop$ only when $\proop$ was present. Hence, the condition automata $\Automaton_D$ is $\{ \transop \}$-free if $\transop \notin \Fragment$ and $\{ \proop, \coproop \}$-free if $\proop, \coproop \notin \Fragment$.
\end{proof}

Based on the construction of the complement of a finite automata and the construction of deterministic condition automata, we can construct the downward complement of a condition automaton:

\begin{proposition}\label{prop:closed_dcompl}
Let $\Fragment \in \{ \transop, \proop, \coproop \}$ and let $\Automaton$ be an $\Fragment$-free condition automaton. There exists a condition automaton $\Automaton'$ such that, for every tree $\Tree$, we have $\Apply{\Automaton'}{\Tree} = \dcompl{\Apply{\Automaton}{\Tree}}$. The condition automaton $\Automaton'$ is $\{ \transop \}$-free if $\transop \notin \Fragment$ and $\{ \proop, \coproop \}$-free if $\proop, \coproop \notin \Fragment$.
\end{proposition}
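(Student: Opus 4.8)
The plan is to adapt the textbook construction of the complement of a deterministic finite automaton, taking care to produce only the \emph{downward} complement. First I would invoke the preceding (determinization) lemma to replace $\Automaton$ by a condition automaton $\Automaton_D = (\States_D, \Alphabet, \Conditions_D, \Initials_D, \Finals_D, \Transitions_D, \StateConditions_D)$ that is deterministic and path-equivalent to $\Automaton$ with respect to labeled trees; recall that a deterministic condition automaton is $\identity$-transition free by Definition~\ref{def:dca}, and that the preceding lemma guarantees $\Automaton_D$ is $\{\transop\}$-free if $\transop \notin \Fragment$ and $\{\proop,\coproop\}$-free if $\proop,\coproop \notin \Fragment$. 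I then define $\Automaton'$ to be $\Automaton_D$ with the roles of final and non-final states exchanged, i.e., $\Automaton' = (\States_D, \Alphabet, \Conditions_D, \Initials_D, \States_D \difference \Finals_D, \Transitions_D, \StateConditions_D)$. Since $\Automaton'$ differs from $\Automaton_D$ only in its set of final states and introduces no new conditions, it is a condition automaton with exactly the $\Fragment$-freeness properties claimed.

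For correctness, I would fix a tree $\Tree = (\Nodes, \Alphabet, \ELabels)$ and a pair $(m, n) \in \Nodes \times \Nodes$. The structural fact to exploit is that, because $\Automaton_D$ (hence $\Automaton'$) is $\identity$-transition free, the node sequence $n_0, \dots, n_i$ of any run $(q_0, n_0)\Label_0 \dots \Label_{i-1}(q_i, n_i)$ is a downward path in $\Tree$; consequently a run of $\Automaton'$ from $m$ to $n$ can exist only when $m$ is an ancestor of $n$, which already gives $\Apply{\Automaton'}{\Tree} \subseteq \Apply{\transitivestar{\Edges}}{\Tree}$. Conversely, if $m$ is an ancestor of $n$, determinism of $\Automaton_D$ yields a \emph{unique} run $r$ of $\Automaton_D$ on $\Tree$ from $m$ to $n$ that starts in an initial state; since the notion of a run depends only on states, transitions, and state-conditions, and $\Automaton'$ shares all of these as well as $\Initials_D$ with $\Automaton_D$, $r$ is also the unique such run of $\Automaton'$. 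As $\Finals_D$ and $\States_D \difference \Finals_D$ partition $\States_D$, the last state of $r$ lies in exactly one of the two, so $(m, n) \in \Apply{\Automaton'}{\Tree}$ if and only if $(m, n) \notin \Apply{\Automaton_D}{\Tree}$. Combining this with path-equivalence of $\Automaton_D$ and $\Automaton$ and with the first inclusion (which handles non-ancestor pairs) yields $\Apply{\Automaton'}{\Tree} = \{ (m, n) \mid (m, n) \in \Apply{\transitivestar{\Edges}}{\Tree} \land (m, n) \notin \Apply{\Automaton}{\Tree} \} = \dcompl{\Apply{\Automaton}{\Tree}}$.

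I do not expect a genuine obstacle here; the work is essentially bookkeeping, and the two points needing a little care are both routine. The first is the reflexive pairs $(m, m)$: these correspond to length-zero runs $(q, m)$ of $\Automaton_D$, of which determinism provides exactly one starting in an initial state (the ``unique satisfied initial state'' property established in the proof of the preceding lemma), and since $(m, m) \in \Apply{\transitivestar{\Edges}}{\Tree}$ the argument above also covers the diagonal. The second is that the determinism clause must be read as providing a run whose two endpoints are \emph{exactly} $m$ and $n$ — not merely some run passing through both — which is precisely what Definition~\ref{def:dca} states; from there the complement argument goes through verbatim, and the $\Fragment$-freeness is inherited from $\Automaton_D$ since no new conditions are introduced.
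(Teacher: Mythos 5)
Your proposal is correct and is essentially identical to the paper's proof: replace $\Automaton$ by its deterministic, path-equivalent counterpart $\Automaton_D$ via the preceding lemma and swap the final with the non-final states, with the $\Fragment$-freeness claims inherited because no new conditions are introduced. You actually spell out the correctness argument (downwardness of runs from $\identity$-transition freeness, uniqueness of the run between an ancestor-descendant pair, and the treatment of the diagonal pairs) in more detail than the paper, which leaves these points implicit.
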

\begin{proof}
Let $\Automaton_D = (\States_D, \Alphabet_D, \Conditions_D, \Initials_D, \Finals_D, \Transitions_D, \StateConditions_D)$ be the deterministic condition automaton equivalent to $\Automaton$. We construct $\Automaton' = (\States_D, \Alphabet_D, \Conditions_D, \Initials_D, \States_D \difference \Finals_D, \allowbreak\Transitions_D, \StateConditions_D)$. Besides those changes made by constructing a deterministic condition automaton, we did not add new condition expressions to the set of condition expressions in the proposed constructions. Hence, the condition automaton $\Automaton'$ is $\{ \transop \}$-free if $\transop \notin \Fragment$ and $\{ \proop, \coproop \}$-free if $\proop, \coproop \notin \Fragment$.
\end{proof}

We are now ready to conclude the following:

\begin{corollary}\label{cor:closed_difference}
Let $\Automaton_1$ and $\Automaton_2$ be condition automata. There exists a condition automaton $\Automaton_\difference$ such that, for every tree $\Tree$, we have $\Apply{\Automaton_\difference}{\Tree} = \Apply{\Automaton_1}{\Tree} \difference \Apply{\Automaton_2}{\Tree}$. The condition automaton $\Automaton_\difference$ is $\{ \transop \}$-free if $\transop \notin \Fragment$, $\{ \proop, \coproop \}$-free if $\proop, \coproop \notin \Fragment$, and acyclic whenever $\Automaton_1$ is acyclic.
\end{corollary}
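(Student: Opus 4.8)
The plan is to reduce difference to intersection with the downward complement, exactly along the lines of the identity $\Apply{\Automaton_1}{\Tree} \difference \Apply{\Automaton_2}{\Tree} = \Apply{\Automaton_1}{\Tree} \intersect \dcompl{\Apply{\Automaton_2}{\Tree}}$ noted just above, and then to invoke the two closure results already established for condition automata on trees. So I would not construct $\Automaton_\difference$ directly, but compose the downward-complement construction of Proposition~\ref{prop:closed_dcompl} with the intersection construction of Proposition~\ref{prop:closed_intersect}, keeping track of the $\Fragment$-freeness and acyclicity bookkeeping throughout (here, as in the surrounding results, $\Fragment \subseteq \{\transop, \proop, \coproop\}$ and $\Automaton_1, \Automaton_2$ are assumed $\Fragment$-free).

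Concretely, first I would apply Proposition~\ref{prop:closed_dcompl} to $\Automaton_2$ to obtain a condition automaton $\Automaton_2'$ with $\Apply{\Automaton_2'}{\Tree} = \dcompl{\Apply{\Automaton_2}{\Tree}}$ for every tree $\Tree$; that proposition also guarantees $\Automaton_2'$ is $\{\transop\}$-free whenever $\transop \notin \Fragment$ and $\{\proop, \coproop\}$-free whenever $\proop, \coproop \notin \Fragment$. Then I would apply Proposition~\ref{prop:closed_intersect} to the pair $\Automaton_1, \Automaton_2'$ to obtain $\Automaton_\difference := \Automaton_\intersect$ with $\Apply{\Automaton_\difference}{\Tree} = \Apply{\Automaton_1}{\Tree} \intersect \Apply{\Automaton_2'}{\Tree}$ for every tree $\Tree$. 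Combining this with the displayed identity yields $\Apply{\Automaton_\difference}{\Tree} = \Apply{\Automaton_1}{\Tree} \difference \Apply{\Automaton_2}{\Tree}$, as desired. For the freeness claims: Proposition~\ref{prop:closed_intersect} preserves $\Fragment$-freeness when both input automata are $\Fragment$-free, so, chaining with the first step, $\Automaton_\difference$ is $\{\transop\}$-free when $\transop \notin \Fragment$ and $\{\proop, \coproop\}$-free when $\proop, \coproop \notin \Fragment$. For acyclicity: Proposition~\ref{prop:closed_intersect} makes the intersection automaton acyclic whenever $\Automaton_1$ \emph{or} $\Automaton_2'$ is acyclic, hence in particular whenever $\Automaton_1$ is acyclic, which is exactly the claimed condition (note we cannot and do not rely on acyclicity of $\Automaton_2'$, since the downward complement of an acyclic automaton is generally not acyclic).

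Since every ingredient is already proven, there is essentially no obstacle; the only points that warrant explicit care are the bookkeeping ones. First, one should justify the identity $\Apply{\Automaton_1}{\Tree} \difference \Apply{\Automaton_2}{\Tree} = \Apply{\Automaton_1}{\Tree} \intersect \dcompl{\Apply{\Automaton_2}{\Tree}}$ on trees: a run of a condition automaton on a tree follows tree edges (or stays put on $\identity$-transitions), so $\Apply{\Automaton_1}{\Tree} \subseteq \Apply{\transitivestar{\Edges}}{\Tree}$, and for such a downward relation, intersecting with $\dcompl{\Apply{\Automaton_2}{\Tree}}$ removes precisely the pairs belonging to $\Apply{\Automaton_2}{\Tree}$. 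Second, one should not claim more than the downward-complement step allows: that step may introduce $\coproop$ when $\proop$ is present, which is why the corollary only asserts joint $\{\proop, \coproop\}$-freeness (and not $\{\coproop\}$-freeness) in the case $\proop, \coproop \notin \Fragment$.
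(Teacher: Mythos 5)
Your proposal is correct and follows exactly the route the paper takes: the paper's own proof is the one-line observation that $\Apply{\Automaton_1}{\Tree} \difference \Apply{\Automaton_2}{\Tree} = \Apply{\Automaton_1}{\Tree} \intersect \dcompl{\Apply{\Automaton_2}{\Tree}}$, followed by an application of Proposition~\ref{prop:closed_dcompl} and then Proposition~\ref{prop:closed_intersect}. Your additional bookkeeping (justifying the identity via downwardness of runs, tracing $\Fragment$-freeness through both steps, and noting that acyclicity is inherited from $\Automaton_1$ alone) is all accurate and merely makes explicit what the paper leaves implicit.
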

\begin{proof}
Since $\Apply{\Automaton_1}{\Tree} \difference \Apply{\Automaton_2}{\Tree} = \Apply{\Automaton_1}{\Tree} \intersect \dcompl{\Apply{\Automaton_2}{\Tree}}$, we can apply Proposition~\ref{prop:closed_dcompl} and Proposition~\ref{prop:closed_intersect} to construct $\Automaton_\difference$.
\end{proof}

Proposition~\ref{prop:closed_intersect} and Corollary~\ref{cor:closed_difference} only remove intersection and difference at the highest level: these results ignore the expressions inside conditions. To fully remove intersection and difference, we use a bottom-up construction:

\begin{theorem}\label{thm:automaton_collapse}
Let $\Fragment \subseteq \{ \transop, \proop, \coproop, \intersect, \difference \}$. On labeled trees we have $\Lang(\Fragment) \path\LeqExpr \Lang(\BASE{\Fragment} \difference \{ \intersect, \difference \})$.
\end{theorem}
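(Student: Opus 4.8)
The plan is to prove the statement by structural induction on the navigational expression $\expr \in \Lang(\Fragment)$, showing that $\expr$ is path-equivalent on labeled trees to some expression in $\Lang(\Fragment')$, where I abbreviate $\Fragment' = \BASE{\Fragment} \difference \{ \intersect, \difference \}$; note that since no identity introduces $\diversity$ or $\convop$, we have $\Fragment' \subseteq \{ \transop, \proop, \coproop \}$, so $\Fragment'$ lies in the range handled by the condition-automata machinery. The underlying observation that makes the induction go through is that path-equivalence with respect to trees is a congruence: the evaluation on a fixed tree $\Tree$ of $\transitive{\expr_1}$, $\expr_1 \compose \expr_2$, $\expr_1 \union \expr_2$, $\project{i}{\expr_1}$, $\coproject{i}{\expr_1}$, $\expr_1 \intersect \expr_2$, and $\expr_1 \difference \expr_2$ depends only on the values $\Apply{\expr_1}{\Tree}$ and $\Apply{\expr_2}{\Tree}$. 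Hence, once the immediate subexpressions of $\expr$ have been rewritten via the induction hypothesis, I only have to handle one top-level operator at a time.

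For the base cases ($\emptyset$, $\identity$, $\Label$) there is nothing to do. For $\expr = \transitive{\expr_1}$, $\expr = \expr_1 \compose \expr_2$, $\expr = \expr_1 \union \expr_2$, $\expr = \project{i}{\expr_1}$, and $\expr = \coproject{i}{\expr_1}$, the induction hypothesis rewrites each $\expr_j$ to some $\expr_j' \in \Lang(\Fragment')$ that is path-equivalent to $\expr_j$ on trees; applying the same top-level operator to the $\expr_j'$ yields, by the congruence remark, an expression path-equivalent to $\expr$ on trees, and it again lies in $\Lang(\Fragment')$ because whenever $\expr$ uses $\transop$ (resp.\ $\proop$, $\coproop$) that operator is in $\Fragment \subseteq \BASE{\Fragment}$, hence, not being $\intersect$ or $\difference$, in $\Fragment'$. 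The two remaining cases, $\expr = \expr_1 \intersect \expr_2$ and $\expr = \expr_1 \difference \expr_2$, are where the condition automata enter — this is exactly the announced bottom-up construction. In the first case ($\intersect \in \Fragment$), rewrite $\expr_1, \expr_2$ to $\expr_1', \expr_2' \in \Lang(\Fragment')$ by the induction hypothesis; convert each $\expr_j'$ by Proposition~\ref{prop:ne_to_cta} to a condition automaton $\Automaton_j$ in the class of Table~\ref{tbl:expr_cta} for $\Lang(\Fragment')$ (the conversion is valid on all labeled graphs, hence on trees); apply Proposition~\ref{prop:closed_intersect}, once for each feature in $\{ \transop, \proop, \coproop \} \difference \Fragment'$, to obtain $\Automaton_\intersect$ with $\Apply{\Automaton_\intersect}{\Tree} = \Apply{\Automaton_1}{\Tree} \intersect \Apply{\Automaton_2}{\Tree}$ on every tree; since that construction adds no new condition expressions and preserves acyclicity, $\Automaton_\intersect$ remains in the class for $\Lang(\Fragment')$, and Proposition~\ref{prop:cta_to_ne} converts it back to an expression in $\Lang(\Fragment')$. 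Chaining $\Apply{\expr}{\Tree} = \Apply{\expr_1'}{\Tree} \intersect \Apply{\expr_2'}{\Tree} = \Apply{\Automaton_1}{\Tree} \intersect \Apply{\Automaton_2}{\Tree} = \Apply{\Automaton_\intersect}{\Tree}$ gives the rewriting on trees. The case $\expr = \expr_1 \difference \expr_2$ ($\difference \in \Fragment$) is identical, with Corollary~\ref{cor:closed_difference} in place of Proposition~\ref{prop:closed_intersect}.

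The step I expect to demand the most care is verifying, in the difference case, that $\Automaton_\difference$ from Corollary~\ref{cor:closed_difference} still lies in the class for $\Lang(\Fragment')$: the difference construction goes through determinization and downward complement, which can turn projection conditions into coprojection conditions, so it may create coprojections where there were none. This is harmless precisely because of the choice of $\Fragment'$. If $\proop \in \Fragment'$, i.e.\ $\proop \in \BASE{\Fragment}$, then (since in our restricted setting the only identities producing $\proop$ require $\proop$ or $\coproop$ already present) $\proop \in \Fragment$ or $\coproop \in \Fragment$; as $\difference \in \Fragment$, the identity $\coproject{1}{\expr} \equiv \identity \difference \project{1}{\expr}$ together with its $\proop[2]$-analogue then forces $\coproop \in \BASE{\Fragment}$, hence $\coproop \in \Fragment'$, so coprojection conditions are allowed in the class for $\Lang(\Fragment')$ and $\Automaton_\difference$ stays in it (Corollary~\ref{cor:closed_difference} keeps it $\{\transop\}$-free when $\transop \notin \Fragment$). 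Dually, if $\proop, \coproop \notin \Fragment$, then no identity can create a projection or coprojection, so $\proop, \coproop \notin \Fragment'$ and Corollary~\ref{cor:closed_difference} guarantees $\Automaton_\difference$ is $\{ \proop, \coproop \}$-free; the clauses on $\transop$-freeness and on acyclicity line up with the matching hypotheses $\transop \notin \Fragment$ and the acyclicity of $\Automaton_1$ supplied by Proposition~\ref{prop:ne_to_cta}. Once this bookkeeping between $\BASE{\Fragment}$ and the automaton classes is in place, the induction closes and yields $\Lang(\Fragment) \path\LeqExpr \Lang(\BASE{\Fragment} \difference \{ \intersect, \difference \})$ on labeled trees.
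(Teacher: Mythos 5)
Your proposal is correct and follows essentially the same route as the paper: a bottom-up structural induction that invokes Proposition~\ref{prop:ne_to_cta}, Proposition~\ref{prop:closed_intersect}, Corollary~\ref{cor:closed_difference}, and Proposition~\ref{prop:cta_to_ne} exactly where the paper does, with the only cosmetic difference that you keep the intermediate results as expressions (handling $\compose$, $\union$, $\transop$, $\proop$, $\coproop$ by congruence of path-equivalence) whereas the paper carries the condition automaton through those cases via Proposition~\ref{prop:cta_close_basics}. Your closing bookkeeping on why $\coproop \in \BASE{\Fragment}$ whenever $\difference \in \Fragment$ and $\proop \in \BASE{\Fragment}$ is exactly the point that makes the automaton classes of Table~\ref{tbl:expr_cta} line up, and is left implicit in the paper.
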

\begin{proof}
Given an expression $\expr$ in $\Lang(\Fragment)$, we construct the path-equivalent expression in $\Lang(\BASE{\Fragment} \difference \{ \intersect, \difference \})$ in a bottom-up fashion by constructing a condition automaton $\Automaton$  that is path-equivalent to $\expr$ and is appropriate, according to Table~\ref{tbl:expr_cta}, for the class $\Lang(\BASE{\Fragment} \difference \{ \intersect, \difference \})$. Using Proposition~\ref{prop:cta_to_ne}, the constructed condition automaton  $\Automaton$  can be translated to an expression in $\Lang(\BASE{\Fragment} \difference \{ \intersect, \difference \})$.

The base cases are expressions of the form $\emptyset$, $\identity$, and $\Label$ (for $\Label$ an edge label), for which we directly construct condition automata using Proposition~\ref{prop:ne_to_cta}. We use Proposition~\ref{prop:cta_close_basics} to deal with the operations $\compose$, $\union$, and $\transop$. We deal with expressions of the form $f(\expr)$, $f \in \{\proop[1], \proop[2], \coproop[1], \coproop[2] \}$ by translating the condition automaton path-equivalent to $\expr$ to an expression $\expr'$ (which is in $\Lang(\BASE{\Fragment} \difference \{ \intersect, \difference \})$) and then use Proposition~\ref{prop:ne_to_cta} to construct the condition automaton path-equivalent to $f(\expr)$. Finally, we use Proposition~\ref{prop:closed_intersect} and Corollary~\ref{cor:closed_difference} to deal with the operators $\intersect$ and $\difference$.
\end{proof}

Observe that Theorem~\ref{thm:automaton_collapse} does not strictly depend on the graph being a tree: indirectly, Theorem~\ref{thm:automaton_collapse} depends on Lemma~\ref{lem:run_identity_free}, which holds for all graphs in which each pair of nodes is connected by at most one directed path. Hence, the results of Theorem~\ref{thm:automaton_collapse} can be generalized to, for example, forests. In combination with Proposition~\ref{prop:unl_coproop}, we can also conclude the following:

\begin{corollary}\label{cor:prim_coproop}
Let $\Fragment_1, \Fragment_2 \subseteq \{ \transop, \proop, \intersect, \difference \}$. If $\coproop \in \BASE{\Fragment_1}$ and $\coproop \notin \BASE{\Fragment_2}$, then we have $\Lang(\Fragment_1) \bool\nLeqExpr \Lang(\Fragment_2)$ on unlabeled chains.
\end{corollary}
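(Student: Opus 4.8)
The plan is to argue by contradiction, chaining the collapse of Theorem~\ref{thm:automaton_collapse} with the inexpressibility result of Proposition~\ref{prop:unl_coproop}. So I would assume $\Lang(\Fragment_1) \bool\LeqExpr \Lang(\Fragment_2)$ on unlabeled chains and work towards a contradiction with Proposition~\ref{prop:unl_coproop}.

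First I would exploit the hypothesis $\coproop \in \BASE{\Fragment_1}$: since $\{\coproop\} \subseteq \BASE{\Fragment_1}$, Proposition~\ref{prop:subleqexpr} gives $\Lang(\coproop) \path\LeqExpr \Lang(\Fragment_1)$, hence $\Lang(\coproop) \bool\LeqExpr \Lang(\Fragment_1)$ by Proposition~\ref{prop:pathbool_carry}, and in particular on unlabeled chains. Next I would unwind $\BASE{\Fragment_2}$: the only operators that the identities preceding Lemma~\ref{lem:remove_emptyset} can introduce are $\proop$, $\coproop$, and $\intersect$, so from $\Fragment_2 \subseteq \{\transop, \proop, \intersect, \difference\}$ we get $\BASE{\Fragment_2} \subseteq \{\transop, \proop, \coproop, \intersect, \difference\}$; combined with the hypothesis $\coproop \notin \BASE{\Fragment_2}$ this yields $\BASE{\Fragment_2} \subseteq \{\transop, \proop, \intersect, \difference\}$, so that $\Fragment' := \BASE{\Fragment_2} \difference \{\intersect, \difference\} \subseteq \{\transop, \proop\}$. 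Now Theorem~\ref{thm:automaton_collapse}, applied with $\Fragment := \Fragment_2$, gives $\Lang(\Fragment_2) \path\LeqExpr \Lang(\Fragment')$ on labeled trees; restricting to the subclass of unlabeled chains by Proposition~\ref{prop:class_carry} and passing to boolean equivalence by Proposition~\ref{prop:pathbool_carry}, we obtain $\Lang(\Fragment_2) \bool\LeqExpr \Lang(\Fragment')$ on unlabeled chains.

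Finally, chaining the three boolean subsumptions obtained above (using transitivity of $\bool\LeqExpr$), the contradiction hypothesis yields $\Lang(\coproop) \bool\LeqExpr \Lang(\Fragment_1) \bool\LeqExpr \Lang(\Fragment_2) \bool\LeqExpr \Lang(\Fragment')$ on unlabeled chains, so $\Lang(\coproop) \bool\LeqExpr \Lang(\Fragment')$ on unlabeled chains. But $\Fragment' \subseteq \{\transop, \proop\} \subseteq \{\diversity, \convop, \transop, \proop, \intersect\}$, so Proposition~\ref{prop:unl_coproop} asserts $\Lang(\coproop) \bool\nLeqExpr \Lang(\Fragment')$ on unlabeled chains, a contradiction. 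Hence $\Lang(\Fragment_1) \bool\nLeqExpr \Lang(\Fragment_2)$ on unlabeled chains.

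The only delicate point is the bookkeeping on $\BASE{\cdot}$: one must verify that dropping $\intersect$ and $\difference$ from $\BASE{\Fragment_2}$ really leaves a fragment inside $\{\transop, \proop\}$, which is precisely the regime covered by Proposition~\ref{prop:unl_coproop}; this rests on the observation that $\coproop \notin \BASE{\Fragment_2}$ forces $\BASE{\Fragment_2}$ to avoid $\coproop$ entirely. Everything else is a routine assembly of the carry-over Propositions~\ref{prop:pathbool_carry} and~\ref{prop:class_carry} together with the subclass relation of Figure~\ref{fig:graph_classes}.
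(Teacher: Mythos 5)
Your proof is correct and follows exactly the route the paper intends: the corollary is stated immediately after Theorem~\ref{thm:automaton_collapse} with the remark that it follows ``in combination with Proposition~\ref{prop:unl_coproop}'', which is precisely your chain $\Lang(\coproop) \bool\LeqExpr \Lang(\Fragment_1) \bool\LeqExpr \Lang(\Fragment_2) \bool\LeqExpr \Lang(\BASE{\Fragment_2} \difference \{\intersect,\difference\})$ contradicting Proposition~\ref{prop:unl_coproop}. The bookkeeping you flag as the delicate point (that $\coproop \notin \BASE{\Fragment_2}$ forces $\BASE{\Fragment_2} \difference \{\intersect,\difference\} \subseteq \{\transop,\proop\}$) is handled correctly.
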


\subsubsection{Removing projections from expressions used on labeled chains}

The concept of condition automata to represent and manipulate navigational expressions can also be used to simplify boolean queries. Next, we use condition automata to simplify expressions evaluated on chains that use projection ($\proop$), this by providing manipulation steps that reduce the total weight of the projections in an expression:

\begin{definition}
Let $\expr$ be an expression in $\Lang( \transop, \proop)$. We define the \emph{condition-depth} of $\expr$, denoted by $\depthcond(\expr)$, as \[\depthcond(\expr) = \begin{cases}
0&\text{if $\expr \in \{ \emptyset, \identity \}$};\\
0&\text{if $\expr = \Label$, with $\Label$ an edge label};\\
\depthcond(\expr')&\text{if $\expr = \transitive{\expr'}$};\\
\depthcond(\expr') + 1&\text{if $\expr \in \{ \project{1}{\expr'}, \project{2}{\expr'} \} $};\\
\max(\depthcond(\expr_1), \depthcond(\expr_2))&\text{if $\expr \in \{ \expr_1 \compose \expr_2, \expr_1 \union \expr_2 \}$}.
\end{cases}\]
We define the \emph{condition-depth} of a $\{ \coproop \}$-free condition automaton $\Automaton = (\States, \Alphabet, \Conditions, \Initials,\allowbreak \Finals, \Transitions, \StateConditions)$, denoted by $\depthcond(\Automaton)$, as $\depthcond(\Automaton) = \max \{ \depthcond(\cond) \mid \cond \in \Conditions \}$.

We define the \emph{condition-weight} of $\Automaton$, denoted by $\weightcond(\Automaton)$, as $$\weightcond(\Automaton) = \abs{\{ \cond \mid \cond \in \Conditions \land \depthcond(\cond) = \depthcond(\Automaton) \}}.$$
\end{definition}

We can now prove the following technical lemma:

\begin{lemma}\label{lem:cta_bool_pro_decrease}
Let $\Automaton$ be a $\{ \coproop \}$-free and $\identity$-transition free condition automaton. If $\depthcond(\Automaton) > 0$, then there exists a $\{ \coproop \}$-free and $\identity$-transition free condition automaton $\Automaton_{\proop}$ such that
\begin{enumerate}
\item for every labeled chain $\Chain$, we have $\Apply{\Automaton}{\Chain} = \emptyset$ if and only if $\Apply{\Automaton_{\proop}}{\Chain} = \emptyset$; and 
\item $\depthcond(\Automaton) > \depthcond(\Automaton_{\proop})$ or $\depthcond(\Automaton) = \depthcond(\Automaton_{\proop}) \land \weightcond(\Automaton) > \weightcond(\Automaton_{\proop})$.
\end{enumerate}
The condition automaton $\Automaton_{\proop}$ is acyclic and $\{ \transop \}$-free whenever $\Automaton$ is acyclic and $\{ \transop \}$-free.
\end{lemma}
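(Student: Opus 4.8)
The plan is to remove from $\Automaton$ a single condition of maximal condition-depth and replace it by conditions that are strictly shallower; iterating this step then drives $\depthcond$ down to $0$.

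So fix a condition $\cond \in \Conditions$ with $\depthcond(\cond) = \depthcond(\Automaton)$ and write $D$ for this value, so $D > 0$. Since $\Automaton$ is $\{\coproop\}$-free, we have $\cond = \project{1}{\expr'}$ or $\cond = \project{2}{\expr'}$ with $\expr' \in \Lang(\transop, \proop)$ and $\depthcond(\expr') = D - 1$; the cases $\expr' \in \{\emptyset, \identity\}$ are trivial, since then $\cond$ is path-equivalent to $\emptyset$ (so its carrying states may be deleted) or to $\identity$ (so $\cond$ may simply be dropped). First I would use Proposition~\ref{prop:ne_to_cta} and Lemma~\ref{lem:identity_free} to obtain an $\identity$-transition free, $\{\coproop\}$-free condition automaton $\Automaton'$ that is path-equivalent to $\expr'$ and all of whose conditions have condition-depth at most $D - 1$; when $\transop \notin \Fragment$, so that $\Automaton$ and hence $\expr'$ is $\transop$-free, Table~\ref{tbl:expr_cta} lets us take $\Automaton'$ acyclic as well. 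The crucial chain-specific fact is that out of any node of a chain there is a unique maximal downward path (and dually a unique path into it): consequently a node $n$ satisfies $\project{1}{\expr'}$ iff some run of $\Automaton'$ from an initial state to a final state starts at $n$ and proceeds downward, and symmetrically $n$ satisfies $\project{2}{\expr'}$ iff some such run ends at $n$, having started downward from an ancestor of $n$.

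Next I would build $\Automaton_{\proop}$ as the product of $\Automaton$ with a subset-construction component over the state set of $\Automaton'$ that tracks the set $P$ of $\Automaton'$-states currently occupied by ``pending'' $\expr'$-verifications. A state of $\Automaton_{\proop}$ is a pair $(q, P)$ with $q$ a state of $\Automaton$ (or a dedicated priming/draining marker) and $P$ a set of states of $\Automaton'$; the conditions attached to $(q, P)$ are $\StateConditions(q) \difference \{\cond\}$ together with the $\Automaton'$-conditions of the states in $P$, so $\cond$ itself no longer occurs. On reading an edge, the $\Automaton$-component moves according to $\Transitions$, each pending verification either is discharged (if it sits in a final state of $\Automaton'$) or takes one step of $\Automaton'$ along that edge label, and for the $\proop[1]$-case a fresh initial state of $\Automaton'$ is injected into $P$ whenever the $\Automaton$-component sits in a state carrying $\cond$, whereas for the $\proop[2]$-case reaching a $\cond$-carrying state instead demands that $P$ already contain a final state of $\Automaton'$. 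Because a $\proop[1]$-witness can reach below the node where $\Automaton$'s accepting run ends --- and a $\proop[2]$-witness can begin above the node where it starts --- I would equip $\Automaton_{\proop}$ with a pure ``draining'' tail and a pure ``priming'' head in which only the $P$-component is active, and require $P = \emptyset$ at acceptance; since neither phase creates new obligations, both phases have bounded length exactly when $\Automaton'$ is acyclic, which yields acyclicity of $\Automaton_{\proop}$ whenever $\Automaton$ is acyclic. A final application of Lemma~\ref{lem:identity_free} removes the $\identity$-transitions introduced by the phase switches.

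For (1), correctness is proved by translating accepting runs in both directions: an accepting run of $\Automaton$ together with a choice of one $\expr'$-witness run per obligation feeds the subset component, and conversely, if some pending verification cannot reach a final state of $\Automaton'$ along the remaining portion of the chain then $\Automaton_{\proop}$ cannot accept --- the tracked set never empties, or the run gets stuck. The main obstacle is this last equivalence: one has to check that managing the obligations as a plain \emph{subset} of $\Automaton'$-states (with transitions arranged so that a pending verification which cannot proceed blocks the run) already suffices, rather than an unbounded multiset; this works because we only need the \emph{existence} of a witness for each obligation --- never a matching between obligations and witnesses --- and because on a chain any two pending verifications sitting in the same $\Automaton'$-state read the same remaining labels and hence have the same feasibility. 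For (2), the condition set of $\Automaton_{\proop}$ is that of $\Automaton$ with $\cond$ removed and the conditions of $\Automaton'$ added, and every condition of $\Automaton'$ has condition-depth at most $D - 1 < D = \depthcond(\cond)$; hence either $\depthcond(\Automaton_{\proop}) < \depthcond(\Automaton)$, when $\cond$ was the only depth-$D$ condition, or $\depthcond$ is unchanged and $\weightcond$ strictly decreases. Finally no $\coproop$ is introduced, and no $\transop$ is introduced when $\transop \notin \Fragment$, so $\{\coproop\}$-freeness and, in the relevant case, $\{\transop\}$-freeness are preserved.
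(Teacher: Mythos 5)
Your proposal follows essentially the same route as the paper's proof: pick one condition of maximal condition-depth, compile its body into an $\identity$-transition free automaton $\Automaton'$, run a product-plus-subset construction that tracks pending $\expr'$-obligations (with a dedicated marker state for the draining tail in the $\proop[1]$ case and the priming head in the $\proop[2]$ case), justify the use of a plain subset by the uniqueness of downward paths in a chain, and conclude via the $(\depthcond,\weightcond)$ measure. The only detail to adjust is the acceptance condition: requiring $P=\emptyset$ at acceptance fails for obligations that are completed exactly at the last node of the run (e.g.\ at a leaf, where no further edge is available to ``discharge'' them), so one should instead accept when every remaining state in $P$ lies in $\Finals'$, which is precisely what the paper's construction of $\Finals_{\proop}$ does.
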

\begin{proof}
Let $\Automaton = (\States, \Alphabet, \Conditions, \Initials, \Finals, \Transitions, \StateConditions)$ be a $\{ \coproop \}$-free and $\identity$-transition free condition automaton. Choose a condition $\cond \in \Conditions$ with $\depthcond(\Automaton) = \depthcond(\cond)$. Let $\Automaton' = (\States', \Alphabet', \Conditions', \Initials', \Finals', \Transitions', \StateConditions')$  be a $\{ \coproop \}$-free and $\identity$-transition free condition automaton equivalent to $\expr'$. If we construct $\Automaton'$ in the canonical way, we have $\depthcond(\Automaton') = \depthcond(\expr')$. Let $\notstate \notin \States \union \States'$ be a fresh state. We define the power set of set $S$, denoted by $\PowerSet{S}$, as $\PowerSet{S} = \{ S' \mid S' \subseteq S \}$. In the following, we use the values $1$  and $2$ and the variable $i$, $i \in \{ 1, 2 \}$, to indicate that definitions depend on the type $i$ of the condition $\cond = \project{i}{\expr'}$. We define $\Automaton_{\proop} = (\States_{\proop}, \Alphabet_{\proop}, \Conditions_{\proop}, \Initials_{\proop}, \Finals_{\proop}, \Transitions_{\proop}, \StateConditions_{\proop})$, for $\cond = \project{i}{\expr'}$, as follows:
\begin{align}
\label{eq:states_cond}
\States_\cond               ={} & \{ q \mid \cond \in \StateConditions(q) \};
\displaybreak[0]\\
\label{eq:states_lnotcond}
\States_{\lnot\cond}        ={} & \States \difference \States_\cond;
\displaybreak[0]\\
\label{eq:states_not1}
\States_{\lnot1}            ={} & \{ (q, Q) \mid q \in \States_\cond \land Q \subseteq \States' \land Q \intersect \Initials' = \emptyset \};
\displaybreak[0]\\
\label{eq:states_not2} 
\States_{\lnot2}            ={} & \{ (q, Q) \mid q \in \States_\cond \land Q \subseteq \States' \land Q \intersect \Finals' = \emptyset \};
\displaybreak[0]\\
\label{eq:states_pro}
\States_{\proop}                ={} & \left(\States \times \PowerSet{\States'}\right) \difference \States_{\lnot i} \union \{ \notstate \} \times \left(\PowerSet{\States'} \difference \emptyset\right);
\displaybreak[0]\\
\label{eq:labels_pro}
\Alphabet_{\proop}                ={} & \Alphabet;
\displaybreak[0]\\
\label{eq:conditions_pro}
\Conditions_{\proop}            ={} & (\Conditions \difference \{ \cond \}) \union \Conditions';
\displaybreak[0]\\
\label{eq:initials_1}
\Initials_1                 ={} & \{ (q, \{ q' \}) \mid q \in \States_{\cond} \intersect \Initials \land q' \in \Initials' \};
\displaybreak[0]\\
\label{eq:initials_2}
\Initials_2                 ={} &      \{ (q, Q) \mid q \in \States_{\lnot\cond} \intersect \Initials \land  \emptyset \subset Q \subseteq \Initials' \}\notag\\
                                &\union \{ (q, Q) \mid q \in \States_{\cond} \intersect \Initials \land \emptyset \subset Q \subseteq \Initials' \intersect \Finals' \}\notag\\
                                &\union \{ (\notstate, Q) \mid \emptyset \subset Q \subseteq \Initials'  \};
\displaybreak[0]\\
\label{eq:initials_pro}
\Initials_{\proop}              ={} & \{ (q, \emptyset) \mid q \in \States_{\lnot\cond} \intersect \Initials \} \union \Initials_i;
\displaybreak[0]\\
\label{eq:finals_1}
\Finals_1                   ={} &      \{ (q, Q) \mid q \in \States_{\lnot\cond} \intersect \Finals \land  \emptyset \subset Q \subseteq \Finals' \}\notag\\
                                &\union \{ (q, Q) \mid q \in \States_{\cond} \intersect \Finals \land  \emptyset \subset Q \subseteq \Finals' \intersect \Initials' \}\notag\\
                                &\union \{ (\notstate, Q) \mid  \emptyset \subset Q \subseteq \Finals' \};
\displaybreak[0]\\
\label{eq:finals_2}
\Finals_2                   ={} & \{ (q, \{ q' \}) \mid q \in \States_{\cond} \intersect \Finals \land q' \in \Finals' \};
\displaybreak[0]\\
\label{eq:finals_pro}
\Finals_{\proop}                ={} & \{ (q, \emptyset) \mid q \in \States_{\lnot\cond} \intersect \Finals \} \union \Finals_i;
\displaybreak[0]\\
\label{eq:trans_power_set}
\Transitions_{\PowerSet{\States'}}
                            ={} & \{ (P, \Label, Q) \mid P \subseteq \States' \land \Label \in \Alphabet_{\proop} \land Q \subseteq \States' \land{}\notag\\
            &\hphantom{{}=\qquad}(\forall p\ p \notin P \lor (\exists q\ q \in Q \land (p, \Label, q) \in \Transitions')) \land{}\notag\\
            &\hphantom{{}=\qquad}(\forall q\ q \notin Q \lor (\exists p\ p \in P \land (p, \Label, q) \in \Transitions')) \};
\displaybreak[0]\\
\label{eq:trans_1b}
\Transitions_{1,\text{b}} ={} & \{ ((p, P \union P'), \Label, (q, Q)) \mid{}\notag\\
        &\hphantom{{}=\qquad}(p, P \union P') \in \States_{\proop} \land P' \subseteq \Finals' \land (q, Q) \in \States_{\proop} \land{}\notag\\
        &\hphantom{{}=\qquad}((p, \Label, q) \in \Transitions  \lor ((p = \notstate \lor p \in \Finals) \land q = \notstate)) \land{}\notag\\
        &\hphantom{{}=\qquad} (P, \Label, Q) \in \Transitions_{\PowerSet{\States'}} \};
\displaybreak[0]\\
\label{eq:trans_2b}
\Transitions_{2,\text{b}} ={} &  \{ ((p, P), \Label, (q, Q \union Q')) \mid{}\notag\\
            &\hphantom{{}=\qquad} (p, P) \in \States_{\proop} \land Q' \subseteq \Initials' \land (q, Q \union Q') \in \States_{\proop} \land{}\notag\\
            &\hphantom{{}=\qquad}((p, \Label, q) \in \Transitions  \lor (p = \notstate \land (q = \notstate \lor q \in \Initials))) \land{}\notag\\
            &\hphantom{{}=\qquad}(P, \Label, Q) \in \Transitions_{\PowerSet{\States'}} \};
\displaybreak[0]\\
\label{eq:trans_1cond}
\Transitions_{1,\cond} ={} & \{ ((p, P \union P'), \Label, (q, Q \union \{ q' \}))\mid{}\notag\\
            &\hphantom{{}=\qquad}(p, P \union P') \in \States_{\proop} \land (q, Q \union \{ q' \}) \in \States_{\proop} \land{}\notag\\
            &\hphantom{{}=\qquad} (p, \Label, q) \in \Transitions \land (P, \Label, Q) \in \Transitions_{\PowerSet{\States'}} \land{}\notag\\
            &\hphantom{{}=\qquad}q \in \States_\cond \land q' \in \Initials'  \land P' \subseteq \Finals' \};
\displaybreak[0]\\
\label{eq:trans_2cond}
\Transitions_{2,\cond} ={} & \{ ((p, P \union \{ p' \}), \Label, (q, Q \union Q'))\mid{}\notag\\
            &\hphantom{{}=\qquad} (p, P \union \{ p' \}) \in \States_{\proop} \land (q, Q \union Q') \in \States_{\proop} \land{}\notag\\
            &\hphantom{{}=\qquad}(p, \Label, q) \in \Transitions \land (P, \Label, Q) \in \Transitions_{\PowerSet{\States'}}\land{}\notag\\
            &\hphantom{{}=\qquad} p \in \States_\cond \land p' \in \Finals'  \land  Q' \subseteq \Initials \};
\displaybreak[0]\\
\label{eq:trans_pro}
\Transitions_{\proop}           ={} & \Transitions_{i,\text{b}} \union \Transitions_{i,\cond};
\displaybreak[0]\\
\label{eq:stateconditions_pro}
\StateConditions_{\proop}       ={} & \{ ((q, Q), \cond') \mid (q, Q) \in \States_{\proop} \land{}\notag\\
    &\hphantom{{}=\qquad} (\cond'  \in \StateConditions(q) \lor (\exists q'\ q' \in Q \land \cond' \in \StateConditions'(q'))) \}.
\end{align}

We shall prove that $\Automaton_{\proop}$ satisfies the necessary properties.
\begin{enumerate}

\item \emph{We have $d > \depthcond(\Automaton_{\proop})$ or $d = \depthcond(\Automaton_{\proop}) \land \weightcond(\Automaton) > \weightcond(\Automaton_{\proop})$.}

Observe that $\depthcond(\Automaton') < \depthcond(\Automaton)$. Hence, the property follows directly from~\eqref{eq:conditions_pro}, the definition of $\depthcond(\DOT)$, and the definition of $\weightcond(\DOT)$.

\item \emph{Let $\Chain = (\Nodes, \Alphabet, \ELabels)$ be a labeled chain and let $\cond = \project{1}{\expr'}$. If $(m, n) \in \Apply{\Automaton}{\Chain}$, then there exists $v \in \Nodes$ such that $(m, v) \in \Apply{\Automaton_{\proop}}{\Chain}$.}\label{proof:cta_ctapro_1}

\item \emph{Let $\Chain = (\Nodes, \Alphabet, \ELabels)$ be a labeled chain and let $\cond = \project{2}{\expr'}$. If $(m, n) \in \Apply{\Automaton}{\Chain}$, then there exists $v \in \Nodes$ such that $(v, n) \in \Apply{\Automaton_{\proop}}{\Chain}$.}\label{proof:cta_ctapro_2}

We only prove Property~\ref{proof:cta_ctapro_1}, Property~\ref{proof:cta_ctapro_2} is similar. We show that a single run in $\Automaton$ is simulated by a single run in $\Automaton_{\proop}$ that, at the same time, also simulates the runs for $\Automaton'$ starting at each state $q \in \States$ with $\cond \in \StateConditions(q)$.

Let $(q_1, n_1)\dots(q_i, n_i)$  be an $\identity$-transition free run with $q_1 \in \Initials$ and $q_i \in \Finals$ proving $(n_1, n_i) \in \Apply{\Automaton}{\Chain}$. Now consider a state $q_j$, with $1 \leq j \leq i$, such that $\cond \in \StateConditions(q_j)$. Observe that, by~\eqref{eq:states_cond}, we have $\cond \in \StateConditions(q_j)$ if and only if $q_j \in \States_\cond$. As $n_j$ satisfies $q_j$, we must have $(n_j, n_j) \in \Apply{\cond}{\Chain}$. Hence, by the semantics of $\project{1}{\DOT}$, there must exist an $\identity$-transition free run $(p_{j}, n_j)\dots(p_j', m_j)$ of $\Automaton'$ on $\Chain$ with $p_j \in \Initials'$ and $p_j' \in \Finals'$ proving that a node $m_j$ exists such that $(n_j, m_j) \in \Apply{\Automaton'}{\Chain}$.

For each state $q_j$ with $q_j \in \States_\cond$ we choose such a run $(p_{j}, n_j)\dots(p_j', m_j)$ of $\Automaton'$ on $\Chain$ with $p_j \in \Initials'$ and $p_j' \in \Finals$. Let $d$ be the maximum distance between, on the one hand, node $n_1$, and, on the other hand, node $n_i$ and the nodes $m_j$ in these runs. For each  $1 \leq k \leq d$, we define \begin{align*}R_k = \{ s \mid{}& \text{$(s, n_k)$ is part of a run $(p_{j}, n_j)\dots(p_j', m_j)$}\\&\text{with $1 \leq j \leq i$ and $q_j \in \States_\cond$} \},\end{align*} we define $r_k = q_k$ if $k \leq i$ and $r_k = \notstate$ if $k > i$. Let $n_1\Label_1\dots n_d$ be the directed path from node $n_1$ to the node at distance $d$ in chain $\Chain$. We prove that $((r_1, R_1), n_1)\Label_1\dots((r_d, R_d), n_d)$ is a run of $\Automaton_{\proop}$ on $\Chain$ with $(r_1, R_1) \in \Initials_{\proop}$ and $(r_d, R_d) \in \Finals_{\proop}$:

\begin{enumerate}
\item \emph{For $k$, $1 \leq k \leq d$, we have $(r_k, R_k) \in \States_{\proop}$.}

If $1 \leq k \leq i$, then $(r_k, R_k) \in \States \times \PowerSet{\States'}$. If $r_k \in \States_\cond$, we have $(p_k, n_k) \in \Initials'$ and, hence, $p_k \in R_k$. By~\eqref{eq:states_not1}, we have $(r_k, R_k) \notin \States_{\lnot1}$, and, hence, $(r_k, R_k) \in \States_{\proop}$. For $k > i$, we have $(r_k, R_k) \in \{ \notstate \} \times \PowerSet{\States'}$. By the definition of $R_k$, we must have $R_k \neq \emptyset$. Hence, we conclude, $(r_k, R_k) \in \States_{\proop}$.

\item \emph{We have $(r_1, R_1) \in \Initials_{\proop}$.}

By construction, we have $r_1 = q_1$ and $q_1 \in \Initials$. If $r_1 \notin \States_\cond$, then $R_1 = \emptyset$ and, by~\eqref{eq:initials_pro}, $(r_1, R_1) \in \Initials_{\proop}$. If $r_1 \in \States_\cond$, then $R_1 = \{ p_1 \}$, with $p_1 \in \Initials'$, and, by~\eqref{eq:initials_1}, $(r_1, R_1) \in \Initials_{\proop}$. 

\item \emph{For $k$, $1 \leq k \leq d$, $n_k$ satisfies $(r_k, R_k)$.}

If $1 \leq k \leq i$, then $r_k = q_k$ and $n_k$ satisfies $q_k$, hence, we also have $(n_k, n_k) \in \Apply{\cexpr{\StateConditions(q_k) \setminus \{ \cond \}}}{\Chain}$. By construction of $R_k$, we have, for each $s \in R_k$, $n_k$ satisfies $s$. Hence, by~\eqref{eq:stateconditions_pro}, $n_k$ satisfies $(r_k, R_k)$.

\item \emph{For all $1 \leq k < d$, $((r_k, R_k), \Label_k, (r_{k+1}, R_{k+1})) \in \Transitions_{\proop}$.}

Construct sets $P$ and $Q$ in the following way:
 \begin{align*}
P &= \{ p \mid p \in R_k \land (\exists q\ q \in R_{k+1} \land (p, \Label_k, q) \in \Transitions') \}\\
Q &= \{ q \mid q \in R_{k+1} \land (\exists p\ p \in R_k \land (p, \Label_k, q) \in \Transitions') \}
\end{align*}
Let $P' = R_k \difference P$ and $Q' = R_{k+1} \difference Q$. Conceptually, $P$ contains all states of runs of $\Automaton_{\proop}$ with a successor state in $Q$ (with respect to $\Transitions_{\proop}$), and, likewise, all states of $Q$ have a predecessor state in $P$. As such, $P'$ contains all states from $R_k$ for which no successor state is in $R_{k+1}$. By~\eqref{eq:trans_1b} and~\eqref{eq:trans_1cond}, such transition is only allowed if all these states are final states. Set $Q'$ contains all states from $R_{k+1}$ for which no predecessor state is in $R_k$. Again, by~\eqref{eq:trans_1b} and \eqref{eq:trans_1cond}, such transition is only allowed if $Q'$ contains at most a single state, which must be an initial state. In the following, we prove that these restrictions on $P'$ and $Q'$ hold. Let $s \in P'$ be a state. By the construction of $P$, there does not exists $s' \in R_{k+1}$ such that $(s, \Label_k, s') \in \Transitions'$. Hence, $s$ can only be a state used at the end of a run $(p_{j}, n_j)\dots(p_j', m_j)$ with $1 \leq j \leq i$ and $q_j \in \States_\cond$, and, we must have $s = p_j'$, for some $1 \leq j \leq i$. We conclude that $P' \subseteq \Finals'$. Let $s_1, s_2 \in Q'$ be states. By the construction of $Q$, there does not exist a state $s' \in R_{k}$ such that $(s', \Label_k, s_1) \in \Transitions'$ or $(s', \Label_k, s_2) \in \Transitions'$. Hence, $s_1$ and $s_2$ can only be states used at the begin of a run $(p_{j}, n_j)\dots(p_j', m_j)$ with $1 \leq j \leq i$ and $q_j \in \States_\cond$, and, hence, we must have $s_1 = p_{k+1}$ and $s_2 = p_{k+1}$ with $1 \leq k < i$. We conclude that $\abs{Q'}$ contains at most a single state, which must be an initial state.

If $r_{k+1} = \notstate$, then, by construction of $R_{k+1}$, we have, for each $s \in R_{k+1}$ and every $1\leq j\leq i$ with $q_j \in \States_\cond$, $(s, n_{k+1}) \neq (p_j, n_j)$, Hence, we conclude that $s$ is not at the begin of any run $(p_{j}, n_j)\dots(p_j', m_j)$. As such there exists an $s'$ and a run $(p_{j}, n_j)\dots(p_j', m_j)$ with $q_j \in \States_\cond$, containing $(s', n_k)\Label_{k}(s, n_{k+1})$. Hence, we have $s' \in P$, $s \notin Q'$, and $Q' = \emptyset$. If $r_k \in \States_\cond$, then, by construction, we have $p_k \in R_k$ and $p_k \in \Initials'$. If $1 \leq k < i$, then $r_k = q_k$, $r_{k+1} = q_{k+1}$, and $(q_k, \Label, q_{k+1}) \in \Transitions$. We use~\eqref{eq:trans_1b} when  $P' = \emptyset$ and $Q' = \emptyset$, or when $P' \neq \emptyset$ and $Q' = \emptyset$, and we use~\eqref{eq:trans_1cond} when $P' = \emptyset$ and $Q' \neq \emptyset$, or when $P' \neq \emptyset$ and $Q' \neq \emptyset$ to conclude $((r_k, R_k), \Label_k, (r_{k+1}, R_{k+1})) \in \Transitions_{\proop}$. If $k = i$, then $r_k = q_k$, $r_{k+1} = \notstate$, and $q_k \in \Finals$. We have $Q' = \emptyset$ and we use~\eqref{eq:trans_1b} to conclude $((r_k, R_k), \Label_k, (r_{k+1}, R_{k+1})) \in \Transitions_{\proop}$. If $k > i$, then $r_k = \notstate = r_{k+1}$. We have $Q' = \emptyset$ and we use~\eqref{eq:trans_1b} to conclude $((r_k, R_k), \Label_k, (r_{k+1}, R_{k+1})) \in \Transitions_{\proop}$.

\item \emph{We have $(r_d, R_d) \in \Finals_{\proop}$.}

If $r_d \neq \notstate$ then $d = i$ and $q_i \in \Finals$. If $R_d = \emptyset$ then, by~\eqref{eq:finals_pro}, we have $(r_d, R_d) \in \Finals_{\proop}$. If $R_d \neq \emptyset$, then, as $n_i = n_d$ is the node with maximum node distance $d$ to $n_1$ used in any of the runs $(p_{j}, n_j)\dots(p_j', m_j)$ with $1 \leq j \leq i$ and $q_j \in \States_\cond$, we must also have, for each $s \in R_d$, $s \in \Finals'$. If $q_i \in \States_\cond$, then, by construction, $p_i \in R_d$ with $p_i \in \Initials'$. By~\eqref{eq:finals_1} we conclude that, in all these cases, we have $(r_d, R_d) \in \Finals_{\proop}$.
\end{enumerate}

We conclude that $((r_1, R_1), n_1)\Label_1\dots((r_d, R_d), n_d)$ is a run of $\Automaton_{\proop}$ on $\Chain$ with $(r_1, R_1) \in \Initials_{\proop}$ and $(r_d, R_d) \in \Finals_{\proop}$, and, hence $(n_1, n_d) \in \Apply{\Automaton_{\proop}}{\Chain}$.

\item \emph{Let $\Chain = (\Nodes, \Alphabet, \ELabels)$ be a labeled chain and let $\cond = \project{1}{\expr'}$. If $(m, n) \in \Apply{\Automaton_{\proop}}{\Chain}$, then there exists a $v \in \Nodes$ such that $(m, v) \in \Apply{\Automaton}{\Chain}$.}\label{proof:ctapro_cta_1}

\item \emph{Let $\Chain = (\Nodes, \Alphabet, \ELabels)$ be a labeled chain and let $\cond = \project{2}{\expr'}$. If $(m, n) \in \Apply{\Automaton_{\proop}}{\Chain}$, then there exists a $v \in \Nodes$ such that $(v, n) \in \Apply{\Automaton}{\Chain}$.}\label{proof:ctapro_cta_2}

We only prove Property~\ref{proof:ctapro_cta_1}, Property~\ref{proof:ctapro_cta_2} is similar. We show that a single run in $\Automaton_{\proop}$ simulates a single run in $\Automaton$ and, at the same time, also simulates the runs for $\Automaton'$ starting at each state $q \in \States$ with $\cond \in \StateConditions(q)$.

Let $((q_1, Q_1), n_1)\Label_1\dots((q_i, Q_i), n_i)$ be a run of $\Automaton_{\proop}$ on $\Chain$ with $(q_1, Q_1) \in \Initials_{\proop}$ and $(q_i, Q_i) \in \Finals_{\proop}$ be the $\identity$-transition free run proving $(n_1, n_i) \in \Apply{\Chain}{\Automaton_{\proop}}$. Choose $j$ such that $1 \leq j \leq i$, $q_j \neq \notstate$, and $j = i$ or $q_{j+1} = \notstate$. We prove that $(q_1, n_1)\Label_1\dots(q_j, n_j)$ is a run of $\Automaton$ on $\Chain$ with $q_1 \in \Initials$ and $q_j \in \Finals$:
\begin{enumerate}
\item \emph{We have $q_1 \in \Initials$.}

By~\eqref{eq:initials_1} and~\eqref{eq:initials_pro}, we have $(q_1, Q_1) \in \Initials_{\proop}$ only if $q_1 \in \Initials$.
\item \emph{For all $k$, $1 \leq k \leq j$, $n_k$ satisfies $q_k$.}

We have $n_k$ satisfies $(q_k, Q_k)$. By~\eqref{eq:stateconditions_pro}, node $n_k$ satisfies all conditions in $\StateConditions(q_k) \setminus \{ \cond \}$. Hence, if $q_k \notin \States_\cond$, then $n_k$ satisfies $q_k$. If $q_k \in \States_\cond$, then, by~\eqref{eq:states_not1}, there exists a $p_1 \in Q_k$ such that $p_1 \in \Initials'$.

We prove that there are states $p_1 \in Q_k, \dots, p_d \in Q_{k+d}$ such that $(p_1, n_k)\Label_k\dots(p_d, n_{k+d})$ is a run of $\Automaton'$ on $\Chain$ with $p_1 \in \Initials'$ and $p_d \in \Finals'$. We do so by induction on the length of the run. The base case is $(p_1, n_k)$ and, as $p_1 \in Q_k$, this case is already proven. Assume we have a run $(p_1, n_k)\Label_k\dots(p_e, n_{k+e})$ of $\Automaton'$ on $\Chain$ with $1 \leq e < d$ and $p_e \notin \Finals'$. We prove that we can extend this run to a run of length $e + 1$. Observe that~\eqref{eq:trans_pro} depends on~\eqref{eq:trans_power_set}, via~\eqref{eq:trans_1b} and~\eqref{eq:trans_1cond}. If $q_{e+1} \neq \notstate$ and $p_e \notin \Finals$, then~\eqref{eq:trans_1b} or \eqref{eq:trans_1cond} applies, hence, by~\eqref{eq:trans_power_set}, there must be a $p_{e+1} \in Q_{k+e+1}$ such that $(p_e, \Label_{k+e}, p_{e+1}) \in \Transitions'$. If $q_{e+1} = \notstate$ and $p_e \notin \Finals$, then~\eqref{eq:trans_1b} applies, hence, by~\eqref{eq:trans_power_set}, there must be a $p_{e+1}$ such that $(p_e, \Label_{k+e}, p_{e+1}) \in \Transitions'$.

We observe that this construction will terminate, as the original run has a finite length $i$. Hence, at some point we encounter a state $p_d \in \Finals'$. We conclude $(n_k, n_{k+d}) \in \Apply{\Automaton'}{\Chain}$, and, by the semantics of $\project{1}{\DOT}$, we conclude $(n_k, n_k) \in \Apply{\cond}{\Chain}$. Hence, also in this case, we conclude $n_k$ satisfies all conditions in $\StateConditions(q_k)$, hence $n_k$ satisfies $q_k$.

\item \emph{For all $1 \leq k < j$, $(q_k, \Label_k, q_{k+1}) \in \Transitions$.}

We have $q_k \neq \notstate$ and $q_{k+1} \neq \notstate$, and, by the definition of a run, we have $((q_k, Q_k), \Label_k, (q_{k+1}, Q_{k+1})) \in \Transitions_{\proop}$. This transition follows from~\eqref{eq:trans_1b} or \eqref{eq:trans_1cond}. When $q_k \neq \notstate$ and $q_{k+1} \neq \notstate$, then each of~\eqref{eq:trans_1b} and~\eqref{eq:trans_1cond} guarantees that $(q_k, \Label_k, q_{k+1}) \in \Transitions$.

\item \emph{We have $q_j \in \Finals$.}

We distinguish three cases. If $i = j$ and $Q_j = \emptyset$, then, by~\eqref{eq:finals_pro}, $(q_j, Q_j) \in \Finals_{\proop}$ implies $q_j \in \Finals$. If $i = j$ and $Q_j \neq \emptyset$, then, by~\eqref{eq:finals_1},  $(q_j, Q_j) \in \Finals_{\proop}$ implies $q_j \in \Finals$ and $Q_j \subseteq \Finals'$. If $i \neq j$, then we must have $q_{j+1} = \notstate$. Hence, by~\eqref{eq:trans_1b},  $((q_j, Q_j), \Label_j, (q_{j+1}, Q_{j+1})) \in \Transitions_{\proop}$ implies $q_j \in \Finals$.
\end{enumerate}

Hence, we conclude $(n_1, n_j) \in \Apply{\Automaton}{\Chain}$.

\item \emph{$\Automaton_{\proop}$ is a $\{ \coproop \}$-free and $\identity$-transition free condition automaton. The condition automata $\Automaton_{\proop}$ is acyclic whenever $\Automaton$ is acyclic and $\{ \transop \}$-free.}

By~\eqref{eq:conditions_pro} and by~\eqref{eq:trans_pro} we immediately conclude that $\Automaton_{\proop}$ is a $\{ \coproop \}$-free and $\identity$-transition free condition automaton and $\Automaton_{\proop}$ is $\{\transop\}$-free whenever $\Automaton$ is $\{ \transop \}$-free. If $\Automaton$ is acyclic and $\{ \transop \}$-free, then we can use the proofs of Property~\ref{proof:ctapro_cta_1} or Property~\ref{proof:ctapro_cta_2} to translate every run of $\Automaton_{\proop}$ into runs of $\Automaton$ and $\Automaton'$. Using this translation, a run of $\Automaton_{\proop}$ can only be unbounded in length if runs in  $\Automaton$ or in $\Automaton'$ can be unbounded in length. Hence, $\Automaton_{\proop}$ must be acyclic whenever $\Automaton$ and $\Automaton'$ are acyclic.\qedhere
\end{enumerate}
\end{proof}

Lemma~\ref{lem:cta_bool_pro_decrease} only removes a single $\proop$-condition. To fully remove $\proop$-conditions, we repeat these removal steps until no $\proop$-conditions are left. This leads to the following result:

\begin{theorem}\label{thm:chain_proop_collapse}
Let $\Fragment \subseteq \{ \transop, \proop \}$. On labeled chains we have $\Lang(\Fragment) \bool\LeqExpr \Lang(\Fragment \difference \{ \proop \})$. 
\end{theorem}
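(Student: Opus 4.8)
The plan is to push the expression through the condition-automaton machinery and then eliminate projection conditions one at a time using Lemma~\ref{lem:cta_bool_pro_decrease}. If $\proop \notin \Fragment$ there is nothing to prove, so assume $\proop \in \Fragment$, i.e.\ $\Fragment \in \{ \{\proop\}, \{\transop, \proop\} \}$. Given a navigational expression $\expr$ in $\Lang(\Fragment)$, I would first apply Proposition~\ref{prop:ne_to_cta} to obtain a condition automaton $\Automaton_0$ that is path-equivalent to $\expr$ and lies in the class specified for $\Lang(\Fragment)$ in Table~\ref{tbl:expr_cta}; in both possible cases $\Automaton_0$ is $\{\coproop\}$-free (and moreover $\{\transop\}$-free and acyclic when $\transop \notin \Fragment$). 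Using Lemma~\ref{lem:identity_free} I may additionally assume $\Automaton_0$ is $\identity$-transition free, since that construction preserves $\{\coproop\}$-freeness, $\{\transop\}$-freeness, and acyclicity.

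Next I would iterate Lemma~\ref{lem:cta_bool_pro_decrease}: as long as the current automaton $\Automaton$ satisfies $\depthcond(\Automaton) > 0$, replace it by $\Automaton_{\proop}$, producing a sequence $\Automaton_0, \Automaton_1, \Automaton_2, \dots$ of $\{\coproop\}$-free, $\identity$-transition free condition automata in which consecutive automata are boolean-equivalent on every labeled chain, and in which the pair $(\depthcond(\Automaton_{j+1}), \weightcond(\Automaton_{j+1}))$ is strictly smaller than $(\depthcond(\Automaton_j), \weightcond(\Automaton_j))$ in the lexicographic order on $\NatNum \times \NatNum$. Since that order is well-founded, the process terminates after finitely many steps in a condition automaton $\Automaton^{*}$ with $\depthcond(\Automaton^{*}) = 0$. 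Because $\Automaton^{*}$ is $\{\coproop\}$-free, each of its conditions is among $\emptyset$, $\identity$, $\project{1}{\expr'}$, $\project{2}{\expr'}$, and the projection conditions have condition-depth at least $1$; hence $\depthcond(\Automaton^{*}) = 0$ forces $\Automaton^{*}$ to be $\{\proop, \coproop\}$-free. When $\transop \notin \Fragment$, Lemma~\ref{lem:cta_bool_pro_decrease} also guarantees $\Automaton^{*}$ stays acyclic and $\{\transop\}$-free.

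Consequently $\Automaton^{*}$ falls into the class of condition automata associated in Table~\ref{tbl:expr_cta} with $\Lang(\Fragment \difference \{\proop\})$: it is $\{\transop, \proop, \coproop\}$-free and acyclic when $\Fragment = \{\proop\}$ (the class for $\Lang()$), and $\{\proop, \coproop\}$-free when $\Fragment = \{\transop, \proop\}$ (the class for $\Lang(\transop)$). Applying Proposition~\ref{prop:cta_to_ne} yields an expression $\expr^{*}$ in $\Lang(\Fragment \difference \{\proop\})$ that is path-equivalent, hence (by Proposition~\ref{prop:pathbool_carry}) boolean-equivalent, to $\Automaton^{*}$. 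Chaining the equivalences --- the path-equivalences from Propositions~\ref{prop:ne_to_cta} and~\ref{prop:cta_to_ne} and Lemma~\ref{lem:identity_free}, together with the boolean-equivalences on labeled chains contributed by each application of Lemma~\ref{lem:cta_bool_pro_decrease} --- shows that $\expr$ and $\expr^{*}$ are boolean-equivalent on labeled chains, which gives $\Lang(\Fragment) \bool\LeqExpr \Lang(\Fragment \difference \{\proop\})$ on labeled chains.

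The genuinely hard work is encapsulated in Lemma~\ref{lem:cta_bool_pro_decrease}, which we may assume; at the level of this theorem the only points requiring care are the termination argument (well-foundedness of the lexicographic order on $(\depthcond, \weightcond)$) and the bookkeeping ensuring that every intermediate automaton remains in the appropriate subclass --- in particular that $\{\coproop\}$-freeness is maintained throughout, so that $\depthcond(\Automaton^{*}) = 0$ really does imply projection-freeness, and that acyclicity and $\{\transop\}$-freeness are maintained in the case $\transop \notin \Fragment$ so that $\Automaton^{*}$ lands in the class for $\Lang()$ rather than merely $\Lang(\transop)$.
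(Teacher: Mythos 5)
Your proposal is correct and follows essentially the same route as the paper: translate to a condition automaton via Proposition~\ref{prop:ne_to_cta}, repeatedly apply Lemma~\ref{lem:cta_bool_pro_decrease} with termination guaranteed by the lexicographic decrease of $(\depthcond, \weightcond)$, and translate back via Proposition~\ref{prop:cta_to_ne}. You merely spell out some details the paper leaves implicit (invoking Lemma~\ref{lem:identity_free} to meet the $\identity$-transition-freeness hypothesis, and checking that $\depthcond(\Automaton^{*}) = 0$ together with $\{\coproop\}$-freeness forces the final automaton into the class for $\Lang(\Fragment \difference \{\proop\})$), which is sound bookkeeping rather than a different argument.
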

\begin{proof}
We use Proposition~\ref{prop:ne_to_cta} to translate a navigational expression to a condition automaton $\Automaton$, then we repeatedly apply Lemma~\ref{lem:cta_bool_pro_decrease} to remove conditions, and, finally, we use Proposition~\ref{prop:cta_to_ne} to translate the resulting $\{\proop\}$-free automaton back to a navigational expression in $\Lang(\Fragment \difference \{ \proop \})$. Observe that only a finite number of condition removal steps on $\Automaton$ can be made, as Lemma~\ref{lem:cta_bool_pro_decrease} guarantees that either $\depthcond(\Automaton)$ strictly decreases or else $\depthcond(\Automaton)$ does not change and $\weightcond(\Automaton)$ strictly decreases.
\end{proof}

We observed that Theorem~\ref{thm:automaton_collapse} does not strictly depend on the graph being a tree. A similar observation holds for Theorem~\ref{thm:chain_proop_collapse}: for boolean queries, we can remove a $\proop$-condition whenever the condition checks a part of the graph that does not branch. This is the case for $\proop_2$-conditions on trees, as trees do not have branching in the direction from a node to its ancestors. For $\proop_1$, this observation does not hold, as is illustrated by the proof of Proposition~\ref{prop:proop_branching}.

\begin{proposition}\label{prop:proop_tree_partial_collapse}
Let $\Fragment \subseteq \{ \transop \}$. On labeled trees we have $\Lang(\Fragment \union \{ \proop_2 \}) \bool\LeqExpr \Lang(\Fragment)$, but $\Lang(\Fragment \union \{ \proop_1 \}) \bool\nLeqExpr \Lang(\Fragment)$.
\end{proposition}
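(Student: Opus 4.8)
The proposition has two parts. For the negative part, $\Lang(\Fragment \union \{\proop_1\}) \bool\nLeqExpr \Lang(\Fragment)$ with $\Fragment \subseteq \{\transop\}$, I would simply reuse the argument already given in the proof of Proposition~\ref{prop:proop_branching}: the expression $\project{1}{\Label_1} \compose \project{1}{\Label_2}$ (with $\Label_1 \neq \Label_2$) is non-empty on a labeled tree exactly when some node has two distinct outgoing edges with labels $\Label_1$ and $\Label_2$, hence is always empty on chains. By Lemma~\ref{lem:nonempty_chain_l}, any expression in $\Lang(\Fragment)$ (with $\Fragment \subseteq \{\transop\}$) that is not boolean-equivalent to $\emptyset$ on labeled trees must be non-empty on some labeled chain, so no such expression can be boolean-equivalent to $\project{1}{\Label_1} \compose \project{1}{\Label_2}$. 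This half is essentially a citation.

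For the positive part, $\Lang(\Fragment \union \{\proop_2\}) \bool\LeqExpr \Lang(\Fragment)$ on labeled trees, the plan is to adapt the condition-automaton machinery of Theorem~\ref{thm:chain_proop_collapse} from chains to trees, but restricting attention to $\proop_2$-conditions. First I would translate a given expression $\expr \in \Lang(\transop, \proop_2)$ into a $\{\coproop\}$-free condition automaton $\Automaton$ (using Proposition~\ref{prop:ne_to_cta} together with the bottom-up machinery of Theorem~\ref{thm:automaton_collapse} to make sure all conditions are $\proop_2$-conditions). Then I would prove a tree-analogue of Lemma~\ref{lem:cta_bool_pro_decrease}: if $\Automaton$ has a $\proop_2$-condition $\cond = \project{2}{\expr'}$ of maximal condition-depth, then there is a $\{\coproop\}$-free condition automaton $\Automaton_{\proop}$, boolean-equivalent to $\Automaton$ \emph{on labeled trees}, with strictly smaller $(\depthcond,\weightcond)$. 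Finally, iterating this removal step terminates (the termination measure is the same lexicographic pair as in Theorem~\ref{thm:chain_proop_collapse}), yielding a $\{\proop\}$-free automaton, which Proposition~\ref{prop:cta_to_ne} translates back into $\Lang(\Fragment)$.

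The key structural point making the removal step work for $\proop_2$ on trees is the one highlighted in the remark after Theorem~\ref{thm:chain_proop_collapse}: a $\proop_2$-condition at a node $n$ tests for the existence of a path \emph{into} $n$, i.e. in the direction from $n$ towards the root, and in a tree this direction is a chain — there is exactly one path from the root down to $n$, so the ancestor-path of $n$ does not branch. Thus the auxiliary automaton $\Automaton'$ for $\expr'$, when run backwards from $n$, behaves exactly as it would on a chain, and the product-style construction of Lemma~\ref{lem:cta_bool_pro_decrease} (which threads a run of $\Automaton$ together with runs of $\Automaton'$ started at each state carrying condition $\cond$) can be carried out verbatim, now tracking the ancestor-path rather than a descendant-path. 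Concretely I would re-run the construction of $\Automaton_{\proop}$ from equations~\eqref{eq:states_cond}--\eqref{eq:stateconditions_pro} with the roles of $\Initials'$/$\Finals'$ swapped as appropriate for the upward direction, and re-verify the six run-simulation properties, which go through because at most one transition out of / into each $R_k$-component is needed along an unbranching path.

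The main obstacle is exactly this re-verification: Lemma~\ref{lem:cta_bool_pro_decrease} is stated for chains and its proof (Properties~\ref{proof:cta_ctapro_1}--\ref{proof:ctapro_cta_2}) repeatedly uses that the \emph{entire} structure is a chain, not just the relevant fragment. On a tree, the run of $\Automaton$ itself follows a single downward path (so that part is unchanged), but the side-runs of $\Automaton'$ witnessing the $\proop_2$-conditions now climb towards the root, and one must check carefully that the nodes visited by these upward side-runs all lie on the common ancestor-path of the run's start node — which they do, precisely because $\proop_2$ looks upward and the run's start node determines a unique root-path. Handling the bookkeeping of which states belong to the upward-versus-downward components, and confirming that no branching is ever encountered (so that the $\abs{Q'} \leq 1$-type restrictions in the transition relation remain satisfiable), is where the real work lies; the alternative, cleaner route would be to phrase a single lemma "$\proop$-conditions can be removed whenever the condition inspects a non-branching part of the graph" and derive both Theorem~\ref{thm:chain_proop_collapse} and this proposition from it, but I expect the details of such a general statement to be no shorter than just redoing the construction for the $\proop_2$-on-trees case directly.
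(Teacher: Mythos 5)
Your proposal is correct and matches the paper's own (unwritten) argument: the paper justifies this proposition only by the remark preceding it, namely that the $\proop$-removal of Theorem~\ref{thm:chain_proop_collapse} applies whenever the condition inspects a non-branching part of the graph --- which holds for $\proop_2$ on trees since ancestor paths do not branch --- and that the negative half is witnessed by the expression from Proposition~\ref{prop:proop_branching}. Your write-up in fact supplies more detail than the paper does on how the adaptation of Lemma~\ref{lem:cta_bool_pro_decrease} would be carried out, and correctly identifies where the remaining verification effort lies.
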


\subsection{Results using first-order logic}\label{ss:fo}

For all $\Lang(\Fragment)$ with $\Fragment \subseteq \{ \diversity, \convop, \proop, \coproop, \intersect, \difference \}$, it is straightforward to show that every expression in $\Lang(\Fragment)$ can be expressed by a first-order logic formula over the structure $(\Nodes; \ELabels)$. Moreover, as $\Lang(\Fragment)$ is essentially a fragment of the calculus of relations, every expression can be expressed in FO[3], the language of first-order logic formulae using at most three variables~\cite{tarski,givant}. Exploring this relationship yields the following results involving $\transop$:

\begin{proposition}\label{prop:trans_trivial}
Let $\Fragment \subseteq \{ \diversity, \convop, \proop, \coproop, \intersect, \difference \}$. On unlabeled chains we have $\Lang(\transop)\path\nLeqExpr \Lang(\Fragment)$, and on labeled chains we have $\Lang(\transop) \bool\nLeqExpr \Lang(\Fragment)$. 
\end{proposition}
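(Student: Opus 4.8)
The statement to prove is Proposition~\ref{prop:trans_trivial}: that $\Lang(\transop)$ is not path-subsumed by any $\Lang(\Fragment)$ with $\Fragment \subseteq \{\diversity,\convop,\proop,\coproop,\intersect,\difference\}$ on unlabeled chains, and not boolean-subsumed by any such $\Lang(\Fragment)$ on labeled chains. By Proposition~\ref{prop:pathbool_carry}, the boolean non-subsumption statement is the stronger one, and by Proposition~\ref{prop:class_carry} a result on unlabeled chains lifts to labeled chains; so it suffices to prove one clean statement: $\Lang(\transop) \bool\nLeqExpr \Lang(\Fragment)$ on unlabeled chains, for the \emph{maximal} fragment $\Fragment = \{\diversity,\convop,\proop,\coproop,\intersect,\difference\}$. (One should double-check that the unlabeled-chains boolean result indeed implies both halves as stated; if the path-query half on unlabeled chains needs more, it follows from part~2 of Proposition~\ref{prop:pathbool_carry} once the boolean half is in hand.)

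**The FO[3] argument.** The key observation, flagged in the text, is that every expression in $\Lang(\Fragment)$ for $\Fragment$ without $\transop$ is expressible in FO[3] over the structure $(\Nodes;\ELabels)$ — indeed, this is the classical Tarski--Givant correspondence between the calculus of relations and three-variable first-order logic~\cite{tarski,givant}. Hence, as a boolean query, every $\Lang(\Fragment)$-expression is equivalent to a sentence of first-order logic (with equality, over the single binary edge relation). The plan is then to exhibit a single $\Lang(\transop)$ query whose boolean behaviour on unlabeled chains is \emph{not} first-order definable. The natural candidate is $\transitive{(\Edges \compose \Edges)}$ restricted appropriately, or more simply a query that tests whether the chain has \emph{odd length}: the parity of $\abs{\Nodes}$. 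Concretely, take $\expr = \transitive{\Edges^2}$ composed so as to require the source to be the root and the target to be a leaf — e.g.\ $\coproject{2}{\Edges}\compose\transitivestar{\Edges^2}\compose\coproject{1}{\Edges}$, which evaluates to nonempty on a chain exactly when its number of edges is even. Wait — that uses $\coproop$, which is allowed in $\Fragment$ but the point is that $\expr$ lives in $\Lang(\transop)$; so instead use $\expr = \identity \difference \dots$? No: simplest is to stay purely inside $\Lang(\transop)$. On unlabeled chains one can write $\expr = \transitivestar{\Edges^2}$ and evaluate it; $\Apply{\expr}{\Chain}$ contains $(r,l)$ (root, leaf) iff the chain length is even. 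Then argue: if some $\Lang(\Fragment)$ expression $\expr'$ were boolean-equivalent to ``$(r,l)\in\Apply{\expr}{\Chain}$ for the root $r$ and leaf $l$''$ \ldots$ — actually we only have boolean-equivalence of the whole query, so we should pick $\expr$ so that $\Apply{\expr}{\Chain}\ne\emptyset$ iff the chain has even length. The cleanest such $\Lang(\transop)$ query is $\expr = \coproject{2}{\Edges}$-free; instead use $\expr = \transitive{\Edges^2}\ \difference\ \transitive{\Edges}$? That adds $\difference$. The right move: note that on chains, ``nonempty'' for $\transitivestar{\Edges^2}\compose\Edges\compose\transitivestar{\Edges^2}$-type gadgets can detect parity, but a bulletproof choice is the sentence ``there exist two nodes at even distance that are endpoints'' — this is expressible in $\Lang(\transop)$ and its truth on a chain depends on chain parity, hence is not FO-definable by an Ehrenfeucht--Fra\"iss\'e / locality argument on long chains (chains of length $2k$ vs.\ $2k+1$ are FO-indistinguishable for large $k$).

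**Main obstacle.** The hard part is picking the $\Lang(\transop)$ witness query carefully enough that (a) it genuinely lies in $\Lang(\transop)$ — using only $\emptyset,\identity,\Edges,\compose,\union,\transop$ — and (b) its \emph{boolean} value on unlabeled chains is a non-first-order property, most cleanly a parity condition, and then invoking the standard inexpressibility of parity in FO (via Ehrenfeucht--Fra\"iss\'e games on linear orders / successor structures, or Hanf locality) to conclude no FO[3] sentence, a fortiori no $\Lang(\Fragment)$ expression, can be boolean-equivalent to it. A subtle point: the structure here is $(\Nodes;\ELabels)$ where $\ELabels$ is the \emph{successor} relation of the chain (not the full linear order), so one needs the FA/EF argument for successor structures, where parity is indeed not FO-expressible. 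I would therefore (i) fix $\expr$ to be a specific parity-detecting expression in $\Lang(\transop)$; (ii) cite or briefly recall that $\Lang(\Fragment)\subseteq\mathsf{FO}$ as boolean queries via the calculus-of-relations translation; (iii) recall that parity of the number of edges of a chain is not FO-definable over successor structures; (iv) conclude $\Lang(\transop)\bool\nLeqExpr\Lang(\Fragment)$ on unlabeled chains; and (v) apply Propositions~\ref{prop:pathbool_carry} and~\ref{prop:class_carry} to extract both displayed statements. The only real risk is a sign/parity bookkeeping error in step (i), which is routine to iron out.
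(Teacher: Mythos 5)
There is a genuine gap, and it is in your very first step. You reduce everything to the single claim ``$\Lang(\transop)\bool\nLeqExpr\Lang(\Fragment)$ on unlabeled chains,'' but that claim is false. The paper's own homomorphism collapse (Proposition~\ref{pro:depthatk} and Corollary~\ref{cor:collapse_lang}) shows that on unlabeled chains every expression of $\Lang(\transop)$ is boolean-equivalent to $\emptyset$ or to $\Edges^k$ for some $k$, so $\Lang(\transop)\bool\LeqExpr\Lang()\bool\LeqExpr\Lang(\Fragment)$ there. Intuitively, $\Lang(\transop)$ is closed under homomorphisms and a chain of length $n$ maps homomorphically into a chain of length $n+1$, so no $\Lang(\transop)$ query can be boolean-true on even-length chains and boolean-false on odd-length ones. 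Your repeated difficulty in writing down a witness is a symptom of exactly this obstruction: every candidate you try needs $\coproject{1}{\Edges}$ or $\coproject{2}{\Edges}$ to anchor the root and the leaf, and these are not in $\Lang(\transop)$ (``being an endpoint'' is a negative condition). In particular the claim that ``there exist two nodes at even distance that are endpoints'' is expressible in $\Lang(\transop)$ is wrong.

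The correct structure, which the paper uses, is to prove the two halves with two different witnesses rather than deriving both from one unlabeled boolean statement. For the path-query separation on unlabeled chains the witness is $\transitive{\Edges}$ itself: as a binary query it is the transitive closure of the successor relation, which is not FO-definable, while every $\Lang(\Fragment)$ expression with $\transop\notin\Fragment$ translates into FO[3]. (Your FO-translation step is fine and is exactly what the paper invokes.) For the boolean separation one must move to \emph{labeled} chains, where the labels do the anchoring that coprojection would otherwise be needed for: the paper's witness is $\Label_1\compose\transitive{\Label_2\compose\Label_2}\compose\Label_1$, which is nonempty exactly when two $\Label_1$-edges are separated by a positive even number of $\Label_2$-edges --- a parity property that is not FO-definable over successor structures. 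With these two witnesses the appeals to Propositions~\ref{prop:pathbool_carry} and~\ref{prop:class_carry} become unnecessary, since each half is proved directly on the class for which it is stated.
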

\begin{proof}
Using well-known results on the expressive power of first-order logic~\cite{libkin}, we conclude that no navigational expression in $\Lang(\Fragment)$ is path-equivalent to $\transitive{\Edges}$ and that no navigational expression is boolean-equivalent to $\Label_1 \compose \transitive{\Label_2 \compose \Label_2} \compose \Label_1$.
\end{proof}

\begin{proposition}\label{prop:chain_bool_tc_unl}
Let $\transop \notin \Fragment$. On unlabeled chains we have $\Lang(\Fragment \union  \{\transop \}) \bool\LeqExpr \Lang(\Fragment)$ if and only if $\coproop \notin \BASE{\Fragment}$.
\end{proposition}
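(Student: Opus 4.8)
The plan is to prove the two directions separately. For the "only if" direction, I argue the contrapositive: if $\coproop \in \BASE{\Fragment}$, then $\Lang(\Fragment \union \{\transop\}) \bool\nLeqExpr \Lang(\Fragment)$ on unlabeled chains. The idea is that with $\transop$ and $\coproop$ together one can write a boolean query that is true exactly on chains whose length lies in a set that is not eventually periodic in the way that $\Lang(\Fragment)$-expressible sets must be — or more simply, a query that separates two specific chains that $\Lang(\Fragment)$ cannot separate. Concretely, I would take an expression such as $\transitivestar{\Edges^2} \compose \coproject{1}{\Edges}$ (true exactly on chains of odd depth, i.e.\ even number of nodes), and show that no expression in $\Lang(\BASE{\coproop, \transop}\difference\{\transop\}) = \Lang(\BASE{\coproop})$ or more generally in $\Lang(\Fragment)$ without $\transop$ can express a boolean query distinguishing arbitrarily long even-depth chains from arbitrarily long odd-depth chains. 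This should follow from a locality/first-order argument in the spirit of Proposition~\ref{prop:trans_trivial}: expressions in $\Lang(\Fragment)$ with $\transop\notin\Fragment$ translate into first-order formulas over the (unlabeled) chain, and first-order logic on finite linear orders cannot express parity, so the set of chains on which such a query holds is eventually constant, while the parity query is not.

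For the "if" direction — $\coproop \notin \BASE{\Fragment}$ implies $\Lang(\Fragment \union \{\transop\}) \bool\LeqExpr \Lang(\Fragment)$ on unlabeled chains — the strategy is to combine the earlier collapse results. If $\coproop\notin\BASE{\Fragment}$ and $\Fragment\subseteq\{\diversity,\convop,\transop,\proop,\coproop,\intersect,\difference\}$ with $\transop\notin\Fragment$, then after adding $\transop$ we have $\Fragment\union\{\transop\}\subseteq\{\convop,\transop,\proop,\intersect,\diversity\}$ with no $\coproop$ and no $\difference$ producing $\coproop$ (since $\difference$ only yields $\coproop$ when combined with $\proop$, and we must check that case). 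For the subclass $\Fragment\union\{\transop\}\subseteq\{\convop,\transop,\proop,\intersect\}$, Proposition~\ref{pro:depthatk} (via Corollary~\ref{cor:collapse_lang}) already tells us that on unlabeled chains every such expression is boolean-equivalent to $\emptyset$ or to $\Edges^k$, which is expressible in $\Lang()\subseteq\Lang(\Fragment)$; the $\diversity$ case is handled by the unlabeled-chain part of Proposition~\ref{pro:depthatk} as well. The remaining subtlety is fragments containing $\difference$ but not $\coproop$ directly: here I invoke Theorem~\ref{thm:automaton_collapse} to remove $\intersect$ and $\difference$ (noting that $\BASE{\Fragment}\difference\{\intersect,\difference\}$ still omits $\coproop$ when $\coproop\notin\BASE{\Fragment}$), reducing to the homomorphism-closed case already handled.

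The main obstacle I expect is bookkeeping in the "if" direction: one must carefully verify, for every $\Fragment$ with $\coproop\notin\BASE{\Fragment}$, that $\BASE{\Fragment\union\{\transop\}}$ still does not contain $\coproop$, so that after eliminating $\intersect$ and $\difference$ via Theorem~\ref{thm:automaton_collapse} the resulting language falls under $\Lang(\{\convop,\transop,\proop,\intersect\}\union\{\diversity\})$-type fragments covered by Proposition~\ref{pro:depthatk}. The identity list in the preliminaries shows $\coproop$ is produced by $\difference$ only together with $\proop$ (as $\identity\difference\proop$), so one needs to confirm that $\BASE{\Fragment}$ not containing $\coproop$ already rules this out — i.e.\ $\coproop\in\BASE{\Fragment}$ iff $\coproop\in\Fragment$ or ($\difference\in\Fragment$), and if $\difference\in\Fragment$ then indeed $\coproop\in\BASE{\Fragment}$, so the "if" hypothesis $\coproop\notin\BASE{\Fragment}$ forces $\difference\notin\Fragment$. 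That observation actually simplifies matters considerably: under $\coproop\notin\BASE{\Fragment}$ we have $\Fragment\subseteq\{\diversity,\convop,\proop,\intersect\}$ (no $\transop$, no $\difference$, no $\coproop$), so $\Fragment\union\{\transop\}\subseteq\{\diversity,\convop,\transop,\proop,\intersect\}$, which is exactly the hypothesis of Proposition~\ref{pro:depthatk}, and the result follows directly on unlabeled chains without needing the automaton machinery at all. I would double-check this simplification and present the clean version.
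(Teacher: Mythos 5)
Your overall strategy coincides with the paper's: for $\coproop\notin\BASE{\Fragment}$ reduce to the homomorphism collapse (Corollary~\ref{cor:collapse_lang}), and for $\coproop\in\BASE{\Fragment}$ exhibit a parity query built from $\transop$ and $\coproop$ that first-order logic (hence $\Lang(\Fragment)$ with $\transop\notin\Fragment$) cannot express on unlabeled chains. Two concrete steps fail, however. First, your witness $\transitivestar{\Edges^2}\compose\coproject{1}{\Edges}$ is boolean-true on \emph{every} nonempty chain: $\transitivestar{\Edges^2}$ contains $\identity$, so the pair $(l,l)$ with $l$ the leaf always lies in the result. You must also anchor at the root, e.g.\ $\coproject{2}{\Edges}\compose\transitive{\Edges\compose\Edges}\compose\coproject{1}{\Edges}$, which is exactly the witness the paper uses; only then does the query actually test the parity of the depth.

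Second, the ``simplification'' you end with rests on a false claim: $\difference\in\Fragment$ does \emph{not} imply $\coproop\in\BASE{\Fragment}$. The identity $\coproject{1}{\expr}\equiv\identity\difference\project{1}{\expr}$ produces $\coproop$ only when $\proop$ is also derivable, and $\BASE{\{\difference\}}=\{\intersect,\difference\}$ contains neither $\proop$ nor $\coproop$ (this is why Figure~\ref{fig:main_results} places $\Lang(\BASE{\Fragment})$ for $\Fragment\subseteq\{\transop,\intersect,\difference\}$ in the bottom node for unlabeled chains). So the hypothesis $\coproop\notin\BASE{\Fragment}$ leaves fragments such as $\{\difference\}$ and $\{\intersect,\difference\}$ in play, and your ``clean version'' would silently skip them. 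Keep instead the argument you sketch just before the simplification: eliminate $\intersect$ and $\difference$ first (via Theorem~\ref{thm:automaton_collapse}, or Proposition~\ref{prop:emptyorid} directly on chains), check that the resulting fragment still omits $\coproop$, and then apply Corollary~\ref{cor:collapse_lang}. That version is in fact more explicit than the paper's own one-line appeal to Corollary~\ref{cor:collapse_lang}, which likewise does not literally cover the $\difference$-containing fragments.
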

\begin{proof}
For the cases with $\coproop \notin \BASE{\Fragment}$, we refer to Corollary~\ref{cor:collapse_lang}. If $\coproop \in \BASE{\Fragment}$, then we conclude that no navigational expression in $\Lang(\Fragment)$ is boolean-equivalent to $\smash{\coproject{2}{\Edges} \compose \transitive{\Edges \compose \Edges} \compose \coproject{1}{\Edges}}$, using well-known results on the expressive power of first-order logic~\cite{libkin}.
\end{proof}

\section{Related Work}\label{sec:related}
Tree query languages have been widely studied, especially in the setting of the XML data model using XPath-like query languages. For an overview, we refer to Benedikt et al.~\cite{xpath_leashed}. Due to the large body of work on querying of tree-based data models, we only point to related work that studies similar expressiveness problems.

Benedikt et al.~\cite{struct_xpath} studied the expressive power of the XPath fragments with and without the \texttt{parent} axis, with and without \texttt{ancestor} and \texttt{descendant} axes, and with and without qualifiers (which are $\proop_1$-conditions). Furthermore, they studied closure properties of these XPath fragments under intersection and complement. As such, the work by Benedikt et al.~answered similar expressiveness questions as our work does. The Core XPath fragments studied by Benedikt et al.~do, however, not include non-monotone operators such as $\coproop$ and $\difference$ and allow only for a very restricted form of transitive closure, required to define the \texttt{ancestor} and \texttt{descendant} axes. Hence, queries such as $\transitive{\Label \compose \Label}$ and $\Label_1 \compose \transitive{\Label_2 \compose \Label_2} \compose \Label_1$, used in Proposition~\ref{prop:trans_trivial}, are not expressible in these XPath fragments. 

When accounting for the difference between the node-labeled tree model used by Benedikt et al.~\cite{struct_xpath} and the edge-labeled tree model used here, and when restricting ourselves to the downward fragments as studied here, we see that all relevant XPath fragments of Benedikt et al., fragment $\mathscr{X}_{r, []}$ and its fragments,  are strictly less expressive than the navigational query language $\Lang(\transop, \proop_1)$. Furthermore,  we observe that $\mathscr{X}$ is path-equivalent to $\Lang()$ and that $\mathscr{X}_{[]}$ is path-equivalent to $\Lang(\proop_1)$. As such, our work extends some of the results of Benedikt et al.~to languages that have a more general form of transitive closure.

Conditional XPath, Regular XPath, and Regular XPath$^\approx$~\cite{condxpath,nav_xpath,xpath_tc,nav_xpath_calcalg} are studied with respect to a sibling-ordered node-labeled tree data model. The choice of a sibling-ordered tree data model makes these studies incomparable with our work: on sibling-ordered trees, Conditional XPath is equivalent to FO[3], and FO[3] is equivalent to general first-order logic~\cite{condxpath}. This result does not extend to our tree data model: on our tree data model, FO[3] cannot express simple first-order counting queries such as \begin{multline*}\exists n \exists c_1\exists c_2\exists c_3\exists c_4\ \Edges(n, c_1) \land \Edges(n, c_2) \land \Edges(n, c_3) \land \Edges(n, c_4) \land{}\\(c_1 \neq c_2) \land (c_1 \neq c_3) \land (c_1 \neq c_4) \land{}\\(c_2 \neq c_3) \land (c_2 \neq c_4) \land (c_3 \neq c_4),\end{multline*}
which is true on all trees that have a node with at least four distinct children. Although $\Lang(\transop, \proop, \coproop,  \intersect, \difference)$ is not a fragment of FO[3], due to the inclusion of the transitive closure operator, a straightforward brute-force argument shows that not even $\Lang(\transop, \proop, \coproop, \intersect, \difference)$ can express these kinds of counting queries. With an ordered \texttt{sibling} axis, as present in the sibling-ordered tree data model, the above counting query is boolean-equivalent to $\texttt{sibling}\compose \texttt{sibling}\compose\texttt{sibling}$.

Due to these differences in the tree data models used, the closure properties under intersection and complementation for Conditional XPath and Regular XPath$^\approx$ cannot readily be translated to closure properties for the navigational query languages we study. Moreover, even if the closure properties for Conditional XPath and Regular XPath$^\approx$  could be translated to our setting, then these results would only cover a single fragment.

Lastly, the XPath algebra of Gyssens et al.~\cite{xpathalgebrajr}, when restricted to the downward fragment, corresponds to the navigational query language $\Lang(\proop, \intersect, \difference)$. This work studied the expressiveness of various XPath algebra fragments with respect to a given tree, whereas we study the expressive power with respect to the class of labeled and unlabeled trees and chains. The positive algebra of Wu et al.~\cite{pospathtree}, when restricted to the downward fragment, corresponds to the navigational query language $\Lang(\proop, \intersect)$. The expressivity results in this work are dependent on the availability of a \texttt{parent}-axis (or a converse operator), and, thus, are not directly relevant for the study of the downward-only fragments.

There has been some work on the expressive power of variations of the regular path queries and nested regular path queries~\cite{rpq}, which are equivalent to fragments of the navigational query languages. Furthermore, on graphs the navigational query languages (both labeled and unlabeled) have already been studied in full detail~\cite{graph_amai,graph_navjr,graphjournal_tc,graph_icdt}. We refer to Figure~\ref{fig:graph_results} for a summary of the details of the relative expressive power of the languages we study when used to query graphs.

\begin{figure}[htb!]
    \begin{tikzpicture}[hasse_style,yscale=0.75]
        \node[dot] (n1) at (0, 0) {};
        
        \node[dot] (b1)  at (0, 0) {};
        \node[dot] (b2)  at (3, 0) {} edge[<-] (b1);
        \node[dot] (b3)  at (6, 0) {} edge[<-] (b2);
        \node[dot] (m1)  at (0, 2) {} edge[<-] (b1);
        \node[dot] (m2)  at (3, 2) {} edge[<-] (m1) edge[<-] (b2);
        \node[dot] (m3)  at (6, 2) {} edge[<-] (m2) edge[<-] (b3);
        \node[dot] (t1)  at (0, 4) {} edge[<-] (m1);
        \node[dot] (t2)  at (3, 4) {} edge[<-] (t1) edge[<-] (m2);
        \node[dot] (t3)  at (6, 4) {} edge[<-] (t2) edge[<-] (m3);
        \node[dot] (bb1) at (1, 0.5) {} edge[<-] (b1);
        \node[dot] (bb2) at (4, 0.5) {} edge[<-] (bb1) edge[<-] (b2);
        \node[dot] (bb3) at (7, 0.5) {} edge[<-] (bb2) edge[<-] (b3);
        \node[dot] (bm1) at (1, 2.5) {} edge[<-] (bb1) edge[<-] (m1);
        \node[dot] (bm2) at (4, 2.5) {} edge[<-] (bm1) edge[<-] (bb2) edge[<-] (m2);
        \node[dot] (bm3) at (7, 2.5) {} edge[<-] (bm2) edge[<-] (bb3) edge[<-] (m3);
        \node[dot] (bt1) at (1, 4.5) {} edge[<-] (bm1) edge[<-] (t1);
        \node[dot] (bt2) at (4, 4.5) {} edge[<-] (bt1) edge[<-] (bm2) edge[<-] (t2);
        \node[dot] (bt3) at (7, 4.5) {} edge[<-] (bt2) edge[<-] (bm3) edge[<-] (t3);

        \node (b1a) [below left,align=left] at (b1) {$\Lang()$};
        \node (b2a) [below left,align=left] at (b2) {$\Lang(\BASE{\intersect})$};
        \node (b3a) [below left,align=left] at (b3) {$\Lang(\BASE{\difference})$};
        
        \node (m1a) [below left,align=left] at (m1) {$\Lang(\BASE{\proop})$};
        \node (m2a) [below left,align=left] at (m2) {$\Lang(\BASE{\proop,\intersect})$};
        \node (m3a) [below left,align=left] at (m3) {$\Lang(\BASE{\proop,\difference})$};

        \node (t1a) [below left,align=left] at (t1) {$\Lang(\BASE{\coproop})$};
        \node (t2a) [below left,align=left] at (t2) {$\Lang(\BASE{\coproop,\intersect})$};
        \node (t3a) [below left,align=left] at (t3) {$\Lang(\BASE{\coproop,\difference})$};

        \node (bb1a) [above right,align=left] at (bb1) {$\Lang(\BASE{\transop})$};
        \node (bb2a) [above right,align=left] at (bb2) {$\Lang(\BASE{\transop,\intersect})$};
        \node (bb3a) [above right,align=left] at (bb3) {$\Lang(\BASE{\transop,\difference})$};
        
        \node (bm1a) [above right,align=left] at (bm1) {$\Lang(\BASE{\transop,\proop})$};
        \node (bm2a) [above right,align=left] at (bm2) {$\Lang(\BASE{\transop,\proop,\intersect})$};
        \node (bm3a) [above right,align=left] at (bm3) {$\Lang(\BASE{\transop,\proop,\difference})$};

        \node (bt1a) [above right,align=left] at (bt1) {$\Lang(\BASE{\transop,\coproop})$};
        \node (bt2a) [above right,align=left] at (bt2) {$\Lang(\BASE{\transop,\coproop,\intersect})$};
        \node (bt3a) [above right,align=left] at (bt3) {$\Lang(\BASE{\transop,\coproop,\difference})$};
    \end{tikzpicture}
    \caption[The full Hasse diagrams describing the relations between the expressive power of the various fragments of $\Lang(\transop, \coproop, \proop, \difference, \intersect)$ on graphs. This diagram describes the situation in all cases, except for boolean queries on unlabeled graphs~\cite{graph_navjr,graphjournal_tc}. For boolean queries on unbalebed graphs we have the collapse $\Lang(\transop) \bool\LeqExpr \Lang()$~\cite{graph_amai}.]{The full Hasse diagrams describing the relations between the expressive power of the various fragments of $\Lang(\transop, \coproop, \proop, \difference, \intersect)$ on graphs. This diagram describes the situation in all cases, except for boolean queries on unlabeled graphs~\cite{graph_navjr,graphjournal_tc}. For boolean queries on unbalebed graphs we have the collapse $\Lang(\transop) \bool\LeqExpr \Lang()$~\cite{graph_amai}. An edge $A$\tikz[baseline=-0.5ex]{\path (0,0) edge[->] (0.5,0);}$B$ indicates $A \LeqExpr B$ and $B\nLeqExpr A$.}\label{fig:graph_results}
\end{figure}

Observe that, on graphs, we have separation results in almost all cases, the only exception being the boolean equivalence of the fragments $\Lang(\transop)$ and $\Lang()$ on unlabeled graphs. These known separation results were all proven on general graphs. A major contribution of our work is strengthening several of these separation results to also cover much simpler classes of graphs (trees and chains). Moreover, we have shown that the navigational query languages behave, in many cases, very differently on trees and chains than they do on graphs, resulting in the major collapses in the expressive power of the navigational query languages that we have proven in this work.

\section{Conclusions and Directions for Future Work}\label{sec:conclusion}
This paper studies the expressive power of the downward navigational query languages on trees and chains, both in the labeled and in the unlabeled case. We are able to present the complete Hasse diagrams of relative expressiveness, visualized in Figure~\ref{fig:main_results}. In particular, our results show, for each fragment of the navigational query languages that we study, whether it is closed under difference and intersection when applied on trees. These results are proven using the concept of condition automata to represent and manipulate navigational expressions. We also use condition automata to show that, on labeled chains, projections do not provide additional expressive power for boolean queries. 

The next step in this line of research is to explore common non-downward operators, starting with node inequality via the diversity operator and the converse of the edge relation (which provides, among other things, the parent axis of XPath, and, in combination with transitive closure, provides the ancestor axis). Particularly challenging are the interactions between $\difference$ and $\diversity$. We conjecture, for example, that $\Lang(\transop, \difference, \diversity) \bool\nLeqExpr \Lang(\transop, \coproop, \intersect, \diversity) $, but this conjecture is still wide open, even on unlabeled chains.

Another direction is the study of languages with only one of the projections (or one of the coprojections) as Proposition~\ref{prop:proop_tree_partial_collapse} shows that in some cases adding only $\proop_1$ or only $\proop_2$ may affect the expressive power. Indeed, various XPath fragments and the nested RPQs only provide operators similar to $\proop_1$. Another interesting avenue of research is to explore the relation between the navigational expressions (and FO[3]) on restricted relational structures and FO[2], the language of first-order logic formulae using at most two variables~\cite{twovars}. Our results for $\Lang(\transop, \proop, \intersect)$ on unlabeled trees already hint at a collapse of FO[3] to FO[2] for boolean queries: the query $\Edges^k$ can easily be expressed in FO[2] algebras with semi-joins via $\Edges \ltimes (\dots \ltimes \Edges)$. A last avenue of research we wish to mention is to consider other semantics for query-equivalence and other tree data models, such as the root equivalence of Benedikt et al.~\cite{struct_xpath} and the ordered-sibling tree data model.

\bibliographystyle{plain}
\bibliography{casect}

\end{document}